\documentclass[11pt,letterpaper,english]{article}
\usepackage{xcolor}
\definecolor{citecolor}{RGB}{0,0,0}
\usepackage{multirow}
\usepackage{rotating}
\usepackage{array}
\usepackage[colorlinks=true, linkcolor=blue, urlcolor=blue, citecolor=blue]{hyperref}
\usepackage{authblk}

\usepackage{amssymb,amsmath,amsfonts,eurosym,geometry,ulem,graphicx,caption,color,setspace,sectsty,comment,footmisc,pdflscape,array,hyperref,subcaption,amsthm,rotating,times,hyperref}

\normalem
\usepackage{algorithm}
\usepackage{algpseudocode}
\usepackage[toc,page]{appendix}

\usepackage{hyperref}

\usepackage[sectionbib,round]{natbib}
\bibliographystyle{abbrvnat}
\newcommand{\parencite}[1]{\citep{#1}}
\newcommand{\textcite}[1]{\citet{#1}}
\let\cite\citet

\onehalfspacing
\newtheorem{theorem}{Theorem}
\newtheorem{corollary}[theorem]{Corollary}

\newtheorem{lemma}{Lemma}
\theoremstyle{remark}

\newtheorem{assumption}{Assumption}
\theoremstyle{definition}
\newtheorem{definition}{Definition}

% to restate theorem
\usepackage{thmtools}
\usepackage{thm-restate}

\newcolumntype{L}[1]{>{\raggedright\let\newline\\arraybackslash\hspace{0pt}}m{#1}}
\newcolumntype{C}[1]{>{\centering\let\newline\\arraybackslash\hspace{0pt}}m{#1}}
\newcolumntype{R}[1]{>{\raggedleft\let\newline\\arraybackslash\hspace{0pt}}m{#1}}

\geometry{left=1.0in,right=1.0in,top=1.0in,bottom=1.0in}

\usepackage{authblk}

\newcommand{\squishlist}{
   \begin{list}{$\bullet$}
    { \setlength{\itemsep}{0pt} \setlength{\parsep}{1pt}
      \setlength{\topsep}{1pt} \setlength{\partopsep}{1pt}
      \setlength{\leftmargin}{1.5em} \setlength{\labelwidth}{1em}
      \setlength{\labelsep}{0.5em} } }

\newcommand{\squishlisttwo}{
   \begin{list}{$\bullet$}
    { \setlength{\itemsep}{0pt} \setlength{\parsep}{0pt}
      \setlength{\topsep}{0pt} \setlength{\partopsep}{0pt}
      \setlength{\leftmargin}{1em} \setlength{\labelwidth}{1.5em}
      \setlength{\labelsep}{0.5em} } }

\newcommand{\squishend}{
    \end{list}  }

\usepackage{amsfonts}

% fonts

 % some conflicts with url

% math 

%\def\C{\Cbb}
%\def\one{{\mathbbm1}}

\def\*{\star}

\DeclareMathOperator*{\argmin}{arg\,min}

% online learning

% statistics
 %\newcommand{\KL}[2]{KL[#1 || #2  ]}

%%%%%%%%%%%%%%%%%%Yifang Chen
\usepackage{graphicx}
\usepackage{mathtools}
\usepackage{footnote}
\usepackage{float}
\usepackage{xspace}
\usepackage{multirow}
\usepackage{wrapfig}
\usepackage{framed}
\usepackage{xcolor}

 % Define a new theorem type

\begin{document}

\title{Choice Models and Permutation Invariance: Demand Estimation in Differentiated Products Markets}
\author[1]{Amandeep Singh\thanks{Email: \texttt{amdeep@uw.edu}}}
\author[1]{Ye Liu\thanks{Email: \texttt{yeliu@uw.edu}}}
\author[1]{Hema Yoganarasimhan\thanks{Email: \texttt{hemay@uw.edu}}}
\affil[1]{University of Washington}
%author{}
%\date{}
\maketitle
\begin{abstract}
Choice modeling is at the core of understanding how changes to the competitive landscape affect consumer choices and reshape market equilibria. In this paper, we propose a fundamental characterization of choice functions that encompasses a wide variety of extant choice models. We demonstrate how non-parametric estimators like neural nets can easily approximate such functionals and overcome the curse of dimensionality that is inherent in the non-parametric estimation of choice functions. We demonstrate through extensive simulations that our proposed functionals can flexibly capture underlying consumer behavior in a completely data-driven fashion and outperform traditional parametric models. As demand settings often exhibit endogenous features, we extend our framework to incorporate estimation under endogenous features. Further, we also describe a formal inference procedure to construct valid confidence intervals on objects of interest like price elasticity. Finally, to assess the practical applicability of our estimator, we utilize a real-world dataset from \cite{berry1995automobile}. Our empirical analysis confirms that the estimator generates realistic and comparable own- and cross-price elasticities that are consistent with the observations reported in the existing literature.

\bigskip

\noindent\textbf{Keywords:} Choice Models, Demand Estimation,  Permutation Invariance, Set Functions, Neural networks

\end{abstract}

\newpage
\setcounter{page}{1}
\thispagestyle{empty}

\section{Introduction}
Demand estimation is a critical component in fields of marketing, operations, and economics, enabling practitioners to model consumer choice behavior and understand how consumers react to changes in a market. This understanding helps policymakers and businesses to make informed decisions, whether it be about launching new products, adjusting pricing strategies, or analyzing the consequences of mergers \citep{nevo2000mergers, petrin2002quantifying, nevo2003new, wollmann2018trucks}. Over the years, various approaches, both parametric and non-parametric, have been developed to address the complexities inherent in demand estimation. 

Parametric methods, based on logit or probit assumptions, often require strong assumptions about the underlying choice process, limiting their ability to capture the true complexity of consumer preferences. As such, the substantive and policy (counterfactual) implications from such models could be largely biased and/or misleading. Nevertheless, they have remained popular because of a few reasons -- (1) simplicity, interpretability, and scalability to a large product set, (2) the ability to model counterfactual demand in situations involving new product introductions or mergers, and (3) the ability to handle endogenous product features. 

%This strength often serves to highlight the limitations of existing non-parametric methods, which struggle with counterfactual estimation. 

Non-parametric methods, on the other hand, offer a more flexible approach to demand estimation, allowing for more nuanced representations of consumer preferences without restrictive assumptions about the underlying distributions of observables and unobservables. However, despite their potential advantages, non-parametric approaches face significant challenges that have limited their widespread adoption in practice. Firstly, a common limitation across all non-parametric demand estimation methods is the ``curse of dimensionality," where the computational complexity of estimating choice functions increases exponentially with the number of products. Additionally, a large subset of these models, specifically those that are `black box' in nature and lack additional economic structure, are unable to perform counterfactual predictions, which are crucial for important tasks such as policy simulations. Furthermore, most non-parametric methods cannot properly handle endogenous product features. This inability to account for endogeneity and scalability constraints, along with the drawback of limited counterfactual analysis, significantly undermines their utility in practice, where such capabilities are often the primary objective of demand estimation.

In this paper, we make significant strides in bridging the gap between the flexibility of non-parametric methods and the tractability of parametric models. We achieve this by introducing a fundamental characterization of choice models. Our work begins by considering a broad set of choice functions, encompassing most existing choice modeling approaches in the empirical literature. We demonstrate that most choice functions exhibit specific symmetry properties. We then leverage recent advances in computer science and mathematics literature (for instance,  \cite{han2019universal, zaheer2017deep, wagstaff2019limitations}) to characterize these functions. Our characterization allows us to build on the strengths of the non-parametric approaches (e.g., not assuming a model of consumer behavior) and overcome the challenges associated with them. First, it addresses the curse of dimensionality in choice systems and enables the flexible estimation of choice functions via non-parametric estimators. 
By leveraging the inherent permutation-invariant structure of choice models, our characterization can close the estimator's generalization gap (for instance, see \citet{sannai2019improved} in the context of neural networks) by a factor of $\sqrt{J!}$ (where $J$ is the number of products).
Second, we recognize that real-world demand systems often contain unobserved demand shocks correlated with observable product features, such as prices. These shocks can lead to endogeneity issues, which can bias the estimated choice functions if not properly addressed. To tackle this challenge, we extend our framework to accommodate endogeneity. Third, our proposed approach successfully estimates counterfactual demand for scenarios that involve changes in the product set (e.g., new product introduction, mergers, product exit). Finally, we build upon recent advances in automatic debiased machine learning and provide an inference procedure for constructing valid confidence intervals on objects of interest, such as the average effect of price.%\hy{Describe the permutation invariance property and give a link to the original paper. Say that a wide variety of models satisfy it. Say how it can be extended to endogenous variables. Also, describe the estimation approach. Say Summarize our approach for aggregate demand settings.}

We demonstrate the effectiveness of our proposed approach using a series of numerical simulations. We consider a variety of data-generating processes in our simulations -- multinomial logit model with linear utility, random coefficients logit with linear utility, random coefficient logit with non-linear utilities, and a setting where some consumers have inattention (i.e., ignore certain products in the market). Across all these scenarios, we show that our approach can predict market shares, own-, and cross-price elasticities with relatively high accuracy, even though we do not make any assumptions about consumer behavior. 
%Further, we show that our model consistently outperforms standard neural networks, especially as the number of products increases. We also find that compared to standard neural networks, our model can leverage product-level market-share data better. 
Further, even when the underlying DGP is complex, our model can recover market shares and elasticities similar to oracle estimators (that are assumed to know the true DGP). This allows researchers and managers to use our approach in general-purpose situations without making ad-hoc assumptions about consumer behavior. 

Next, we consider counterfactual analyses and empirically show that our model can generate accurate counterfactuals in the case of new product introductions. Finally, we showcase the performance of the automatic debiasing procedure and show that we can provide consistent inference and confidence intervals over the average effect of price on demand across all the products. In sum, our extensive numerical simulations cover a wide range of DGPs and show that our approach -- (1) is able to accurately predict market shares and price elasticities for a variety of scenarios, (2) can generate realistic counterfactual predictions in cases where the product set changes, and (3) can provide inference over economic objects of interest. 

Finally, to showcase the effectiveness and applicability of our approach to real-life datasets, we use the \cite{berry1995automobile} automobile dataset and estimate the price elasticities using our non-parametric estimator (while correcting for price endogeneity). The results of our analysis align with existing literature and demonstrate the practical utility of our approach, underscoring its potential for adoption in real-world demand estimation settings. In summary, given the theoretical, practical, and computational advantages of our approach, we expect it to be easily applicable to a wide variety of demand estimation problems in both research and practice.

%Benchmark against BLP data or Nevo's cereal data that uses BLP. 

%https://papers.ssrn.com/sol3/papers.cfm?abstract_id=3232059

%https://arxiv.org/pdf/2208.09325.pdf

\section{Literature Review}
\label{sec:lit_review}

\subsection{Parametric Discrete Choice Models}
\label{ssec:discrete_choice}

Discrete choice models play a crucial role in various fields, including economics, marketing, and operations management, as they describe decision-making processes when individuals face multiple alternatives. These models typically rely on a random utility maximization assumption, which can be traced back to \citet{thurstone1927law} and \citet{marschak1959binary}.

Over time, various choice models have emerged under different specifications for the density of unobserved utility, following the general framework in \citet{marschak1959binary}. The (Multinomial) logit model, first proposed by \citet{luce1959possible}, 
%and connected to the extreme value distribution by Holman and Marley \hy{Add citation?}\yl{This does not have a citation. It was also mentioned in the Train's book without a citation. }, 
is widely used for capturing systematic taste variance based on observed characteristics across alternatives. However, the logit model assumes that error terms are independent of each other and of the characteristics of the alternative. Additionally, the independent extreme value distribution results in the irrelevant alternatives (IIA) property, implying proportional substitution across alternatives.

To enable more flexible substitution patterns, Generalized Extreme Value (GEV) models were introduced, allowing for correlated unobserved utility across alternatives. The most commonly used GEV model is the nested logit model \parencite{train1987demand, forinash1993application}. In nested logit models, the unobserved error of alternatives within a nest is specified as correlated, while the marginal distribution of each unobserved error remains as a univariate extreme value.

The mixed logit model \parencite{mcfadden2000mixed}, also known as random coefficient logit (RCL), is an even more versatile model that allows for randomness in both unobserved factors and coefficients of observed characteristics. The model was first applied by \cite{boyd1980effect} and \cite{cardell1980measuring}, with random coefficients typically specified as normal or lognormal \parencite{ben1993estimation, mehndiratta1996time, revelt1998mixed} but also other distributions such as uniform and triangular \parencite{greene2003latent, train2001comparison}. When random coefficients follow a mixture distribution, the model becomes the well-known mixed logit latent class model.

Aside from logit family models (MNL, GEV, mixed logit) that have an extreme value distributed random component, the probit model \cite{hausman1978conditional} assumes a jointly normal distribution for the error term. This allows for any pattern of substitution and can handle random taste variation. \cite{blanchet2016markov} proposed a model where substitution between alternatives is a state transition in a Markov chain, which can approximate MNL, probit, and mixed logit models.

As the decision space grows, assuming that all alternatives are considered becomes unrealistic, leading to research in consideration sets and consumer search \parencite{honka2019empirical,jiang2021consumer}. These models typically impose strong parametric assumptions on the decision rule, including whether decisions are simultaneous or sequential, the stopping rule for searches, the size of the consideration set, and the functional form of the match value.

Further, real-world demand systems often contain unobserved demand shocks correlated with observable product features, such as prices. These shocks can lead to endogeneity issues, which can bias the estimated parameters if not properly addressed. Thus, various approaches have been proposed to address these issues. For instance, \cite{berry1995automobile} proposed a generalized method of moments-based estimator to estimate a random-coefficients logit model of demand using instrumental variables. Similarly, \cite{petrin2010control} demonstrated the application of control functions to resolve endogeneity in the random coefficient logit model of demand.

Two primary concerns related to discrete choice models are the need for correct model specification and the distribution of unobserved factors. To address these issues, nonparametric demand estimation models have been developed.

\subsection{Nonparametric Demand Estimation}

While parametric discrete choice models make assumptions about the distribution of unobserved variables and specify functional forms for utilities, recent research has developed more flexible semi-parametric and nonparametric approaches for demand estimation. These newer methods ease some of the restrictive assumptions and improve the computational efficiency of choice models while retaining some structure.

% Semi-nonparametric
Early semi-parametric work \parencite{manski1987semiparametric, lewbel2000semiparametric, honore2000panel, abrevaya2000rank} focused on relaxing the distribution of random shocks in individual-level binary choice models. More recent research \parencite{khan2021inference, shi2018estimating, pakes2022moment} has extended this approach to relax the parametric assumptions on random shocks in individual-level multinomial choice models, allowing for increased flexibility, such as individual-level fixed effects. \cite{fox2016nonparametric, briesch2010nonparametric, allen2019identification, fosgerau2021identification,chitla2022nonparametric,lu2023semi,wang2023sieve} concentrate on nonparametric identification and estimation of distributions of heterogeneous unobservables, like random coefficients, in various demand models, relaxing assumptions about heterogeneity distribution. More recent work has looked at keeping the indirect utility or choice probability functions largely unspecified while retaining parametric error terms. For instance \parencite{bentz2000neural,wang2020deep,han2022neural,sifringer2020enhancing,wong2021reslogit,aouad2022representing} use neural networks to characterize the indirect utility while retaining the logit error structure. Our work, in comparison, provides a completely flexible mapping from observed product and consumer characteristics to observed demand without relying on any assumptions regarding the choice making process.

Extant research in nonparametric methods aims to completely avoid any parametric assumptions to remove any potential source of misspecification. 
\cite{berry2014identification, berry2020nonparametric} presented the identification results for nonparametric estimation of demand from market-level and individual-level data, respectively. Studies by \cite{hausman2016individual, blundell2017nonparametric, chen2018optimal} focus on individual-level data. In particular, \cite{hausman2016individual} employs a nonparametric approach to estimate consumer surplus bounds, while \cite{blundell2017nonparametric} introduces a method for consistently estimating demand functions with nonseparable unobserved taste heterogeneity, subject to the shape restriction imposed by the Slutsky inequality. \cite{chen2018optimal} concentrates on nonparametric instrumental variables and inference in individual-level data. This paper, alongside \cite{compiani2022market} and \cite{tebaldi2023nonparametric}, examines market-level data. \cite{compiani2022market} develops a nonparametric method based on Bernstein polynomials, drawing on the identification result of \cite{berry2014identification}. Meanwhile, \cite{tebaldi2023nonparametric} proposes a technique for deriving nonparametric bounds on demand counterfactuals and applies it to California's health insurance market. 

The common challenge across all extant nonparametric demand estimation work has been that the computational complexity of estimating nonparametric demand models increases exponentially with the number of products. For instance, \cite{compiani2022market} could estimate their nonparametric model for just two products. Similarly, \cite{cai2022deep} needed data on more than 100,000 markets to reasonably estimate demand with 50 products. To put this in perspective, economic datasets are much smaller; for instance, the dataset in \cite{berry1995automobile} had 150 products across only 20 markets. This curse of dimensionality has been the primary obstacle preventing the widespread adoption of nonparametric methods in demand estimation. To overcome this barrier, our work demonstrates how to exploit the inherent permutation invariant structure of choice functions to break this curse of dimensionality and flexibly estimate demand in markets with large a number of products. In addition, our work is also related to recent work in marketing \cite{wei2022estimating} exploring the use of neural networks to estimate parameters of structural models. 

Finally, our work leverages recent work in mathematics and computer science (for instance, \cite{han2019universal, zaheer2017deep, wagstaff2019limitations}), which investigate the universal approximation of symmetric and antisymmetric functions, offering fundamental characterizations for functions defined on sets. We build on this literature to characterize a general class of choice functions in demand systems.

\section{Theory}
\label{sec:theory}
\subsection{Choice Models}
In this section, we provide a general characterization of consumer choice functions. In particular, we focus on a scenario where researchers have access only to aggregate market-level demand data, while individual-level choices and characteristics remain unobserved. Aggregate demand models have been extensively studied in marketing and economics \citep{berry1995automobile, besanko1998logit, sudhir2001competitive, chintagunta2001endogeneity, albuquerque2009estimating, compiani2022market}, and are particularly useful when individual-level demand data is not available.

Suppose consumers in a market $t$ face an offer set $\mathcal{S}_t$ that can comprise any subset of $J_t$ distinct products ($\{ 1, 2, \dots, J_t\} $). We use $u_{ijt}$ to represent the index tuple $\{X_{jt}, p_{jt}, I_{it}, \varepsilon_{ijt}\}$, where $X_{jt} \in \mathbb{C}^{d}$ denotes $d$ non-price features belonging to some countable universe $\mathbb{C}^{d}$; $p_{jt} \in \mathbb{C}$ denotes the price of the product; $I_{it} \in \mathbb{C}^{l}$ denotes demographics of consumer $i$ in market $t$, we assume there are $l$ features and belong to some countable universe $\mathbb{C}^{l}$, and $\varepsilon_{ijt}$ denotes random idiosyncratic components pertinent to consumer $i$ for product $j$ in market $t$ that are not unobservable to the researcher but observable to consumers.

\begin{definition}
   [Choice Function] Given the offer set $\mathcal{S}_t \subset \{1, 2, 3, \dots, J_t\}$, we define a function $\pi: \{ u_{ijt}: j \in \mathcal{S}_t\} \rightarrow \mathbb{R}^{|\mathcal{S}_t|} $ that maps a set of index tuples $\{u_{ijt}\}_{j \in \mathcal{S}_t}$ to a $|\mathcal{S}_t|$-dimensional probability vector. Each element in the $\pi (\cdot)$ vector represents the probability of consumer $i$ choosing product $j$ in market $t$.
\end{definition}

Here we present a very general characterization of choice functions that maps the observable and unobservable components of product and individual characteristics to observed choices through some choice function $\pi$. Note that, traditionally, $u_{ijt}$ is a scalar that represents utility in choice models. However, in our framework, $u_{ijt}$ does not necessarily represent utility. Further, we have not yet imposed any assumption on $\pi$, i.e., how consumers make choices. %\hy{When we say does not necessarily represent utility; it is a bit confusing and we already called uijt a tuple}

We now specify a set of assumptions on the model and data-generating process below.

\begin{assumption} [Exogeneity] The unobserved error term $\varepsilon_{ijt}$ is independent and identically distributed (i.i.d.) across all products. This can be expressed as follows:
\label{assumpt_exogenous} 
\end{assumption}
$$\mathbb{P}(\varepsilon_{ijt} \mid X_{\cdot t}, p_{\cdot t}) = \mathbb{P}(\varepsilon_{ijt})$$
This assumption implies that the error term $\varepsilon_{ijt}$ is not correlated with any of the observed variables $X_{\cdot t}$ and $p_{\cdot t}$. As such, it precludes the possibility of endogenous prices and/or marketing-mix variables, as is common in observational data. We start with the basic case with exogenous covariates in this section and later in $\S$\ref{ssec:endo}, we relax this assumption and allow for endogenous covariates.

\begin{assumption} [Identity Independence] For any product $j\in \mathcal{S}_t$ and any market $t$, we assume the choice function $\pi$ does not depend on the identity of the product ($jt$). That is:
\label{assumpt_id} 
\end{assumption}
 $$\pi_{ijt}(\{u_{ikt}\}_{k \in \mathcal{S}_t}) = \pi_{ijt}(u_{ijt}, \{u_{ikt}\}_{k \in \mathcal{S}_t, k \neq j}) = \pi(u_{ijt}, \{u_{ikt}\}_{k \in \mathcal{S}_t, k \neq j})$$ 
This assumption implies two things: first, the functional form of the choice probability for different products and markets is the same; second, for any market-level heterogeneity (e.g., in the distribution $F_t(I_{it}, \varepsilon_{ijt})$), we can include them in $X_{jt}$ as features. Intuitively, this assumption suggests that conditional on product and consumer features and the unobserved error term, the choice probabilities are not functions of the identities of the products themselves. %\hy{This is standard in the choice modeling literature...}

\begin{assumption}
    [Permutation Invariance] The choice function $\pi$ is invariant under any permutation function $\sigma_j()$ that rearranges the indices of the competitors of product $j$, such that:
\label{assumpt_perminv}
\end{assumption}
    $$\pi_{ijt} = \pi(u_{ijt}, \{u_{i\sigma_{j}(k)t}\}_{k \in \mathcal{S}_t, k \neq j})$$ 
In this assumption, we state that the choice function for product $j$ is invariant to all permutations of its competitors. This implies that the individual's choice for product $j$ is not affected by the order or identity of the other products in the market, and it only depends on the set of competitors' characteristics. 

Since researchers only observe aggregate data, we next define the aggregate demand function. In aggregate demand settings, individual-level choices are not observable and only aggregate demand is observable. It is often the case that the market-specific individual features are not observable and are assumed to be exogenously drawn from some distribution $\mathcal{F}(m_t)$, where $m_t$ represents the market-level characteristics. For the sake of notional simplicity, we let $m_t$ to be the same across all markets. One can easily incorporate market-specific user demographics in the choice function. Thus the demand of product $j$ in market $t$ denoted by $\pi_{jt}$ can be expressed as follows:
\begin{equation}
\label{eq:agg_demand}
    \pi_{jt} = \int  \int \pi_{ijt}(\{u_{ikt}\}_{k \in \mathcal{S}_t})  d \mathcal{F}(m_{t})  d \mathcal{G} ( \varepsilon_{ijt}),
\end{equation} 
where $\mathcal{G} ( \varepsilon_{ijt})$ denotes the CDF of unobserved errors $\varepsilon_{ijt}$.   Since $u_{ijt}$ is determined by $\{X_{jt}, p_{jt}, I_{it}, \varepsilon_{ijt}\}$ and $I_{it}, \varepsilon_{ijt}$ are integrated out in a market. Hence, we can express $\pi_{jt}$ as a function of only the observable product characteristics --  
\begin{equation}
    \pi_{jt} = g(X_{jt}, p_{jt} ,\{X_{kt}, p_{kt}\}_{k \in S, k \neq j}). 
\end{equation}

\begin{lemma}
    For any choice function that satisfies Assumption \ref{assumpt_exogenous} and \ref{assumpt_perminv}, the aggregate demand function is permutation invariant. 
\end{lemma}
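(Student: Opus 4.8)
The plan is to write the aggregate demand explicitly as an integral against the law of the heterogeneity $I_{it}$ and of the idiosyncratic shocks $\{\varepsilon_{ikt}\}_{k\in S_t}$, and then to show that an arbitrary permutation $\sigma_j$ of the competitors of $j$ can be absorbed into a relabeling of the dummy integration variables. Concretely, I would fix the offer set $S_t$ and the product $j$ and write
$$ g\big(X_{jt},\{X_{kt}\}_{k\neq j}\big)=\int\!\!\int f\Big(u_{ijt},\{u_{ikt}\}_{k\in S_t,\,k\neq j}\Big)\,\prod_{k\in S_t} d\mathcal{G}(\varepsilon_{ikt})\,d\mathcal{F}(I_{it}), $$
where $u_{ikt}=(X_{kt},I_{it},\varepsilon_{ikt})$. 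Here Assumption~\ref{assumpt_exogenous} is already used twice: the shocks enter through the product marginal $\prod_{k} d\mathcal{G}$ because they are i.i.d.\ across products, and the relevant measure is the unconditional law of $\varepsilon$ because the shocks are independent of the observed features $X_{\cdot t}$.

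Next I would apply the permutation $\sigma_j$ to the observable competitor features only, replacing $X_{kt}$ by $X_{\sigma_j(k)t}$ in the integrand while leaving the own feature $X_{jt}$ and the common demographics $I_{it}$ untouched. The heart of the argument is then a change of variables in the error integral: substitute $\varepsilon_{ikt}\mapsto\varepsilon_{i\sigma_j(k)t}$ for the competitors $k\neq j$ while holding $\varepsilon_{ijt}$ fixed. Because the shocks are i.i.d.\ with common law $\mathcal{G}$, the product measure $\prod_{k} d\mathcal{G}(\varepsilon_{ikt})$ is invariant under this coordinate permutation, so the measure is unchanged. After the substitution, the competitor slot that previously held $(X_{\sigma_j(k)t},I_{it},\varepsilon_{ikt})$ now holds the full tuple $(X_{mt},I_{it},\varepsilon_{imt})$ with $m=\sigma_j(k)$, so that the collection of competitor tuples fed into $f$ is exactly $\{u_{imt}\}_{m\neq j}$, merely listed in a permuted order.

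Finally I would invoke Assumption~\ref{assumpt_perminv}: since $f$ is invariant to the order in which the competitor tuples are listed, the reordered integrand equals the original integrand pointwise, and hence the two integrals coincide. This yields $g(X_{jt},\{X_{\sigma_j(k)t}\}_{k\neq j})=g(X_{jt},\{X_{kt}\}_{k\neq j})$ for every permutation $\sigma_j$, which is exactly the claimed permutation invariance of the aggregate demand.

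I expect the main obstacle to be conceptual rather than computational. Assumption~\ref{assumpt_perminv} grants invariance with respect to permutations of the full tuples $u_{ikt}=(X_{kt},I_{it},\varepsilon_{ikt})$, whereas at the aggregate level we are permuting only the observable features $X_{kt}$. The two are reconciled precisely by the exchangeability of the shock vector supplied by Assumption~\ref{assumpt_exogenous}, which lets me transport the permutation from the $X$-labels onto the integration variables without altering the measure. The step that needs care is verifying that this change of variables is genuinely measure-preserving: it relies on the $\varepsilon_{ikt}$ being identically distributed (so each factor $d\mathcal{G}$ matches after relabeling) and independent of $X_{\cdot t}$ (so permuting features does not change the law of the shocks), and on the own shock $\varepsilon_{ijt}$ remaining fixed because $\sigma_j$ permutes only the competitors.
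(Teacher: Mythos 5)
Your proposal is correct and follows essentially the same route as the paper, which argues (only informally, in the paragraph following the lemma) that under exogeneity the aggregate demand is just the integral of individual choice functions that are themselves permutation invariant, so permutations of competitors cannot affect the aggregate. Your write-up in fact supplies the detail the paper omits—making the measure-preserving relabeling $\varepsilon_{ikt}\mapsto\varepsilon_{i\sigma_j(k)t}$ explicit and noting that both the i.i.d.\ structure and the independence of the shocks from $X_{\cdot t}$ are needed for it—so it is a faithful, rigorous version of the paper's sketch.
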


This permutation invariance of the aggregate demand function exists because, under the exogeneity assumption, the aggregate demand function is simply the sum (or integral) of individual choice functions that are themselves invariant to permutation. Hence, changes to the order of competitors have no impact on the aggregated result. When the assumption of exogeneity is not satisfied, the aggregate demand function does not retain the permutation invariance, even though the individual-level choice function exhibits permutation invariance. We will return to this issue in $\S$\ref{ssec:endo}.

Our Assumptions \ref{assumpt_id} (identity independence) and \ref{assumpt_perminv} (permutation invariance) are fairly standard in the choice modeling literature, although they might not always be explicitly stated as such. Table \ref{tbl:choice_models} summarizes models that satisfy these assumptions. Please see Web Appendix \ref{sec:appendix_per_inv} for detailed derivations of how these models satisfy permutation invariance.

\begin{table}[h]
\centering
\caption{Choice Models Satisfying Identity Independence and Permutation Invariance}
\label{tbl:choice_models}
%\resizebox{\textwidth}{!}{
\begin{tabular}{l p{7cm}}
\hline
Choice Model & Literature \\
\hline
Multinomial Logit Model & \cite{mcfadden1973conditional} \\
Mixed Logit Model & \cite{mcfadden2000mixed} \\
Nested Logit Model & \cite{train1987demand} \\
Random Coefficients Nested Logit & \cite{grigolon2014nested}\\
Generalized Extreme Value (GEV) Model & \cite{train2009discrete} \\
Probit Model & \cite{hausman1978conditional} \\
Latent Class Logit Model & \cite{kamakura1989probabilistic} \\
Markov Chain Choice Model & \cite{blanchet2016markov} \\
Customer Inattention Based Models & \cite{goeree2008limited}, \cite{turlo2023discrete}, \cite{compiani2022market} and \cite{joo2023rational} \\
Customer Search Models & \cite{mehta2003price} \\
\hline
\end{tabular}
%}
\end{table}

\begin{restatable}{thm}{mainthm}\label{thm1}
For any offer set  $\mathcal{S}_t \subset \{1, 2, 3, \dots, J_t\}$, if a choice function $\pi:  \{ u_{ijt}: j \in \mathcal{S}_t\} \rightarrow \mathbb{R}^{|\mathcal{S}_t|}$ where  $u_{ijt}$ represents the index tuple $\{X_{jt}, p_{jt},  I_{it}, \varepsilon_{ijt}\}$ satisfies Assumption \ref{assumpt_exogenous}, \ref{assumpt_id} and \ref{assumpt_perminv}, then there exists suitable $\rho$, $\phi_1$ and $\phi_2$ such that 
$$\pi_{jt} = \rho (\phi_1 (X_{jt}, p_{jt}) + \sum_{k\neq j, k \in \mathcal{S}_t} \phi_2 (X_{kt}, p_{kt})),$$
\end{restatable}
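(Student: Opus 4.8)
The plan is to recognize Theorem~\ref{thm1} as a sum-decomposition (``Deep Sets''-style) representation for a permutation-invariant function on a countable domain, adapted to the presence of one distinguished argument $X_{jt}$. First I would discharge the probabilistic content. By Assumption~\ref{assumpt_id} the map $\{u_{i1t},\dots,u_{iJt}\}\mapsto\pi_{ijt}$ is one and the same function $\pi$ for every product and market, and by the preceding lemma (Exogeneity, Assumption~\ref{assumpt_exogenous}, together with Permutation Invariance, Assumption~\ref{assumpt_perminv}) the aggregate demand is invariant to permutations of the competitors. After integrating out $I_{it}$ and $\varepsilon_{ijt}$ against $\mathcal{F}$ and $\mathcal{G}$, the claim reduces to a purely combinatorial statement: there is a single function $g$ with $\pi_{jt}=g\big(X_{jt},\{X_{kt}\}_{k\in\mathcal{S}_t,k\neq j}\big)$ that is invariant under permutations of the multiset $\{X_{kt}\}_{k\neq j}$, and it is this $g$ that must be factored as $\rho(\phi_1+\sum\phi_2)$.

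The core of the argument is an explicit injective encoding exploiting the countability of $\mathbb{C}^k$. Fix an enumeration $c:\mathbb{C}^k\hookrightarrow\mathbb{N}$. I would build $\phi_1,\phi_2$ with values in $\mathbb{R}^2$, so that the two roles --- ``which product is $j$'' and ``what is the competitor multiset'' --- are carried on separate coordinates. Concretely, take $\phi_2(x)=(\beta(x),0)$ and $\phi_1(x)=(0,\alpha(x))$, where $\alpha:\mathbb{C}^k\to\mathbb{R}$ is any injection and $\beta(x)=b^{-c(x)}$ for a base $b$ strictly larger than the maximum possible number of competitors. Then
\[
\phi_1(X_{jt})+\sum_{k\neq j,\,k\in\mathcal{S}_t}\phi_2(X_{kt})=\Big(\sum_{k\neq j}\beta(X_{kt}),\;\alpha(X_{jt})\Big).
\]
The second coordinate recovers $X_{jt}$ because $\alpha$ is injective; the first coordinate is a base-$b$ expansion whose $q$-th digit is exactly the multiplicity of $c^{-1}(q)$ in the competitor multiset (no carrying occurs since every digit is below $b$), so it recovers $\{X_{kt}\}_{k\neq j}$. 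Hence the whole sum is an injective function of the pair $(X_{jt},\{X_{kt}\}_{k\neq j})$.

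With injectivity in hand, $\rho$ is defined by decoding: on the image of the encoding set $\rho(z):=g(\text{decode}(z))$, and extend $\rho$ arbitrarily off the image (only existence is required, not continuity). Then $\rho\big(\phi_1(X_{jt})+\sum_{k\neq j}\phi_2(X_{kt})\big)=g(X_{jt},\{X_{kt}\}_{k\neq j})=\pi_{jt}$, which is the claim. If a scalar inner sum is wanted rather than an $\mathbb{R}^2$-valued one, I would interleave the digit positions carrying $\alpha$ and $\beta$ into a single base-$b$ expansion.

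The step I expect to be the real obstacle is the choice of base $b$: injectivity of the multiset-sum encoding needs a uniform bound on the number of competitors (equivalently, on multiplicities), so the clean finite-dimensional construction presumes $J_t$ is bounded across the markets under consideration. If $J_t$ is allowed to be unbounded, I would instead let $\phi_2$ take values in a sequence space and use the histogram encoding $\phi_2(x)=e_{c(x)}$ (standard basis vectors), for which $\sum_{k\neq j}\phi_2(X_{kt})$ is literally the count vector of the multiset and injectivity is automatic, at the price of an infinite-dimensional inner representation. The remaining work is bookkeeping: confirming that $g$ is genuinely single-valued across $(j,t)$ via Assumption~\ref{assumpt_id}, and that the permutation invariance supplied by the lemma is precisely what licenses treating the competitors as an unordered multiset.
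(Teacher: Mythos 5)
Your proposal is correct and takes essentially the same route as the paper's own proof: the paper likewise exploits countability of $\mathbb{C}^k$ to build an injective sum-encoding, setting $\phi_1(x)=4^{-c^{e}(x)}$ and $\phi_2(x)=4^{-c^{o}(x)}$ with $c^{e},c^{o}$ enumerations into the even and odd naturals --- exactly the scalar, interleaved-digit variant you mention at the end --- and then defines $\rho$ as the decoder on the image of this encoding. If anything, your base-$b$ construction (with $b$ exceeding the maximum number of competitors) treats repeated feature vectors more carefully than the paper's fixed base $4$, which tacitly requires multiplicities small enough to avoid carrying.
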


\noindent Proof: See Web Appendix \ref{sec:appendix_proof_thrm1}.

This result is the generalization of the results shown in \cite{zaheer2017deep} and can be shown following similar arguments. The above result is very powerful and has two important takeaways: (i) The input space of the choice function does not grow with the number of products in the assortment. Rather, the input space of the choice function (i.e., $\phi_1$ and $\phi_2$) grows only as a function of the number of features of the products in consideration, and (ii) the same transformations ($\rho$, $\phi_1$, and $\phi_2$) remain valid for all offer sets, denoted by $\mathcal{S}$, irrespective of their size.  This property allows us to easily simulate the demand and entry of new products or changes in market structure, as one does with traditional parametric models. As an example, assuming $v_{jt}$ is the utility of product $j$ at market t, for the multinomial logit model one possible set of transformations could be $\phi_1(v_{jt}) = \begin{bmatrix} \exp(v_{jt})\\ 0 \end{bmatrix}$ and $\phi_2(v_{kt}) = \begin{bmatrix} 0\\ \exp(v_{kt}) \end{bmatrix}$ that generate two-dimensional vectors, and the function $\rho\left(\begin{bmatrix} \phi_1(v_{jt})\\ \sum_{k \neq j, k \in \mathcal{S}_t}\phi_2(v_{kt}) \end{bmatrix}\right) = \frac{\phi_1(v_{jt})}{\phi_1(v_{jt}) + \sum_{k \neq j, k \in \mathcal{S}_t} \phi_2(v_{kt})}$ operates on these vectors. \footnote{$\rho\left(\phi_1(v_{jt}) + \sum_{k \neq j} \phi_2(v_{kt})\right) = \rho\left(\begin{bmatrix} exp(v_{jt})\\ 0 \end{bmatrix} + \begin{bmatrix} 0 \\ \sum_{k \neq j} exp(v_{kt}) \end{bmatrix}\right) = \rho\left(\begin{bmatrix} exp(v_{jt})\\ \sum_{k \neq j} exp(v_{kt}) \end{bmatrix} \right) = \frac{exp(v_{jt})}{exp(v_{jt}) + \sum_{k \neq j} exp(v_{kt})}$} %In Appendix \ref{illustrations}, we discuss other common discrete choice models like random coefficients logit and nested logit and how they can be represented in the above form. 

%\hy{The use of $X_{jt}$ is confusing here. Should we use small x since we basically mean u, which is utility and a function of big Xs.}

\subsection{Endogenous Covariates} 
\label{ssec:endo}

In this section, we relax the exogeneity assumption and handle the potential endogeneity issue that is commonplace in demand settings. Note that, when the price (or other product characteristics or market-mix variables, such as promotions, correlate with unobserved variables ($\varepsilon_{ijt}$), Assumption \ref{assumpt_exogenous} (exogeneity) is compromised. As a result, it becomes infeasible to integrate out $\varepsilon_{ijt}$ in the aggregate demand function, as we did in Equation \eqref{eq:agg_demand}. This means that the aggregate demand function loses its property of permutation invariance with respect to the observable characteristics of competitors.
%Note that in settings where price (or other characteristics) are correlated with unobserved product characteristics ($\epsilon_{ijt}$) the aggregate demand function is no longer permutation invariant with respect to competitors' observable characteristics. 
To address this, we build on the approach developed in \citet{petrin2010control} to allow for endogenous observable features. Without loss of generality, we assume that price $p_{jt} \in \mathbb{C}$ is the endogenous variable and all other characteristics of the product $X_{jt} \in \mathbb{C}^{d}$ are exogenous variables. i.e., 
$$\mathbb{E}[p_{jt}\cdot \varepsilon_{ijt}] \ne 0 \quad\text{and}\quad
\mathbb{E}[X_{jt}\cdot \varepsilon_{ijt}] = 0.$$

Given valid instruments $IV_{jt}$, we can express $p_{jt}$ as 
\begin{equation}
   p_{jt}=\gamma\left(X_{jt}, IV_{jt} \right) + \mu_{jt}.
\end{equation}
At this point, no specific assumptions are made regarding the function $\gamma$. However, in the subsequent inference section, we will discuss that the estimator of $\gamma$ must be estimable at $n^{-1/2}$ in order to construct valid confidence intervals. Next, to address the issue of price endogeneity, we impose a mild restriction on the space of choice functions we consider.

\begin{comment}
\begin{assumption} [Linear Separability] We assume that the index tuple $u_{ijt}$ affects the individual-level choice function $\pi_{ijt}$ only through some scalar projection $\tilde{u}_{ijt} = f_i(X_{jt}, p_{jt}) + \varepsilon_{ijt} $   such that
$$\pi(u_{ijt}, \{u_{ikt}\}_{k \ne j}) = \tilde{\pi} (\tilde{u}_{ijt}, \{\tilde{u}_{ikt}\}_{k \ne j}),$$
\label{assmpt_linear}
\end{assumption}

The above assumption implies the demand function $\pi$ can be alternatively expressed as $\tilde{\pi}$ which is only the function of scalar projections ($\tilde{u}$) of index tuples. 
    
\end{comment}

\begin{assumption} [Linear Separability] The unobserved product characteristics can be expressed as the sum of an endogenous ($\text{CF}$) and exogenous component   
\begin{equation}\label{eq:res}
    \varepsilon_{ijt}=CF\left(\mu_{jt} ; \lambda\right)+\tilde{\varepsilon}_{ijt},
\end{equation}
where 
$\mathbb{E}[p_{jt}\cdot \tilde{\varepsilon}_{ijt} ] = 0$.
\label{assmpt_cf}
\end{assumption}

This assumption implies that, after controlling for $\mu_{jt}$ using the control function $CF$, the endogenous variable $p_{jt}$ is uncorrelated with the error term $\varepsilon_{ijt}$ in the model, thus it becomes exogenous. Then, we can re-write the index tuple $u_{ijt}$ as 
\begin{equation} 
    u_{ijt} = \{X_{jt},p_{jt}, CF\left(\mu_{jt};\lambda\right)+\tilde{\varepsilon}_{ijt}\},
\end{equation}
such that $\mathbb{E}\Big[\tilde{\varepsilon}_{ijt}|(X_{jt}, p_{jt}, \mu_{jt}) \Big]=0$

\begin{restatable}{thm}{endothm}\label{thm2}
For any offer set $\mathcal{S}_t \subset \{1, 2, 3, \dots, J_t\}$, if a choice function $\pi:  \{ u_{ijt}: j \in \mathcal{S}_t\} \rightarrow \mathbb{R}^{|\mathcal{S}_t|}$  where  $u_{ijt}$ represents the index tuple $\{X_{jt}, p_{jt}, I_{it}, \varepsilon_{ijt}\}$ satisfies Assumptions \ref{assumpt_id} to \ref{assmpt_cf}. Then under the condition of knowing the true  function ($\gamma_0$) of $\gamma$,  there exists suitable $\rho$, $\phi_1$ and $\phi_2$ such that 

$$\pi_{jt} = \rho (\phi_1 (X_{jt}, p_{jt}, 
 \mu_{jt}(\gamma_0)) + \sum_{k\neq j, k \in \mathcal{S}} \phi_2 (X_{kt}, p_{jt}, \mu_{kt}(\gamma_0))),$$

\end{restatable}
 
The result follows straightforwardly from the observation that after controlling for $CF(\mu_{jt}; \lambda)$ the unobservable component $\tilde{\varepsilon}$ is exogenous. This implies the aggregate demand function is invariant under any permutation applied to the competitors of product $j$. The result demonstrates that endogeneity can be addressed by using the residuals from Equation \eqref{eq:res} as an additional set of features along with observable product characteristics.

\subsection{Inference} 
\label{ssec:inference}

This paper aims to estimate choice functions flexibly using non-parametric estimators. However, often in social science contexts, researchers and managers are also interested in conducting inference over some economic objects. Note that because non-parametric regression functions are estimated at a slower rate compared to parametric regressions, it is often infeasible to construct confidence intervals directly on the estimated $\hat{\pi}$. However, it is generally possible to perform inference and construct valid confidence intervals for specific economic objects that are functions of $\pi$. In this section, we will provide an example of one such important economic object and demonstrate how to construct valid confidence intervals for it. This will be done by leveraging the recent advances in automatic debiased machine learning as shown in the works of \citet{ichimura2022influence, chernozhukov2022automatic, chernozhukov2022locally, chernozhukov2021automatic}, and others. However, unlike existing automatic debiased machine learning setups we also have to account for an additional first-stage estimator $\hat{\gamma}$.

In demand estimation, researchers are often interested in estimating the average effect of a price change on the demand for a product, as it can significantly influence market dynamics, pricing strategies, and regulatory decisions. To proceed with our analysis, let $w_{jt} = (y_{jt}, p_{jt}, X_{jt}, \{p_{kt},X_{kt}\}_{k \ne j})$ and $z_{jt} = (p_{jt}, X_{jt}, \{p_{kt},X_{kt}\}_{k \ne j})$ represent the variables associated with product $j$ in market $t$. Here, $p_{jt} \in \mathbb{C}$ denotes the observed prices, $X_{jt} \in \mathbb{C}^{d-1}$ represents other product characteristics, and $y_{jt} \in \mathbb{R}$ refers to the observed demand for product $j$ in market $t$, such as market shares or log shares. Note that either the observed price  ($p_{jt}$) or other characteristics ($X_{jt}$) could be endogenous. For simplicity and without loss of generality, we focus on $p_{jt}$ as the endogenous variable in the following analysis. 
%In this context, researchers are often interested in estimating the average effect of a price change on the demand for a product.

%\hy{Why small $x_{jt}$ when we used big X so far to represent other product characteristics, e.g., in Section 3.2. Also, we said earlier that there are $k$ features.}

The average effect of a price change\footnote{The expression for the average effect of a price change can be adapted to represent average price elasticity by placing the known and fixed value of $\Delta p_{jt}$ in the denominator.}  can be expressed as the difference between the demand function $\pi_{jt}(\cdot;\gamma)$ evaluated at the original price $p_{jt}$ and at the price incremented by $\Delta p_{jt}$, given by the following expression:
\begin{align*}
m(w_{jt},\pi(\cdot;\gamma)) = \pi(p_{jt} + \Delta p_{jt}, X_{jt}, \{p_{kt},X_{kt}\}_{k \ne j});\gamma) - \pi(p_{jt}, X_{jt}, \{p_{kt},X_{kt}\}_{k \ne j});\gamma).
\end{align*}

The parameter of interest, $\theta_0$, is the expected value of this price change effect over the true population distribution\footnote{We assume the data reflects the true population.} of $w_{jt}$, which can be calculated as:
\begin{align*}
\theta_0 = \mathbb{E}[m(w_{jt},\pi(\cdot;\gamma))] = \mathbb{E}[\pi(p_{jt} + \Delta p_{jt}, X_{jt}, \{X_{kt}\}_{k \ne j};\gamma) - \pi(p_{jt}, X_{jt}, \{X_{kt}\}_{k \ne j};\gamma)].
\end{align*}

In summary, the average effect of a price change on demand, denoted by $\theta_0$, is calculated by evaluating the difference between the demand function at the original price and at the price incremented by $\Delta p_{jt}$, and then computing the expected value of this difference.

In practice, we estimate $\theta_0$ by computing its empirical analog using the estimated demand function $\hat{\pi}$ and first-stage estimator $\hat{\gamma}$, i.e., 
\begin{equation}
    \hat{\theta} = \frac{1}{n}\sum_{t=1}^{n}m(w_{jt},\hat{\pi}(z_{jt};\hat{\gamma})),
\end{equation}
where $n$ is the number of observations. 
When parametric methods are employed to estimate $\hat{\pi}$ and $\hat{\gamma}$, the estimator for $\hat{\theta}$ is generally $\sqrt{n}$-consistent, assuming that the model is correctly specified. However, $\sqrt{n}$-consistency may not hold when non-parametric estimators are used, particularly if the first-order bias does not vanish at a rate of $\sqrt{n}$. Irrespective of the method used to estimate $\pi$, this is often the case, as flexible estimation of $\pi$ always requires some form of regularization and/or model selection. Debiasing techniques are required to mitigate the effects of regularization and/or model selection when learning flexible demand models. These approaches can help improve the performance of the estimator and facilitate valid inference with $\hat{\theta}$. We therefore adapt recent debiasing techniques developed in recent automatic debiased machine learning literature (see \cite{chernozhukov2022automatic}). Specifically, we will focus on problems where there exists a square-integrable random variable $\alpha_0(z)$
such that $\forall$ $||\gamma - \gamma_0||$ small enough --
\begin{align*}
     \mathbb{E}[m(w_{jt},\pi(z_{jt};\gamma))] = \mathbb{E}[\alpha_0(z_{jt})\pi(z_{jt};\gamma)] \\
              \forall \pi\text{ with }\mathbb{E}[\pi_{jt}(z_{jt};\gamma)^2] < \infty
 \end{align*}

By the Riesz representation theorem, the existence of such $\alpha_0(z_{jt})$ is equivalent to $\mathbb{E}[m(w_{jt},\pi(z_{jt};\gamma))] $ being a mean square continuous functional of $\pi(z_{jt};\gamma)$. Henceforth, we refer to $\alpha_0(z)$ as Riesz representer (or RR). \cite{newey1994asymptotic} shows that the mean square continuity of $\mathbb{E}[m(w_{jt},\pi_{jt}(z_{jt};\gamma))]$ is equivalent to the semiparametric efficiency bound of $\theta_0$ being finite. Thus, our approach focuses on regular functionals. Similar uses of the Riesz representation theorem can be found in \cite{ai2007estimation}, \cite{ackerberg2014asymptotic}, \cite{hirshberg2020debiased}, and \cite{chernozhukov2022automatic} among others.  The debiasing term in this case takes the form $\alpha(z_{jt})(y_{jt}-\pi(z_{jt};\gamma))$. To see that, consider the score $m(w_{jt},\pi(z_{jt};\gamma)) +\alpha(z_{jt})(y_{jt}-\pi(z_{jt};\gamma))-\theta_0$. It satisfies the following mixed bias property:
$$
\begin{aligned}
\mathbb{E}[m(w_{jt},\pi(z_{jt};\gamma)) & +\alpha(z_{jt})(y_{jt}-\pi(z_{jt};\gamma))-\theta_0] \\
& =-\mathbb{E}\left[\left(\alpha(z_{jt})-\alpha_0(z_{jt})\right)\left(\pi(z_{jt})-y_{jt}\right)\right] .
\end{aligned}
$$
This property implies double robustness \parencite{robins1994estimation, funk2011doubly} of the score. That is, if either $ \alpha(z_{jt}) $ is correctly estimated, which would mean $ \alpha(z_{jt}) - \alpha_0(z_{jt}) = 0 $, or $ \pi(z_{jt}) $ is correctly estimated, implying $ \pi(z_{jt}) - y_{jt} = 0 $, then the term $ (\alpha(z_{jt}) - \alpha_0(z_{jt})) (\pi(z_{jt}) - y_{jt}) $ will be zero. This results in the score going to zero, thereby making the estimator consistent for $ \theta_0 $. A debiased machine learning estimator of $\theta_0$ can be constructed from this score and first-stage learners $\widehat{\pi}$ and $\widehat{\alpha}$. Let $\mathbb{E}_n[\cdot]$ denote the empirical expectation over a sample of size $n$, i.e., $\mathbb{E}_n[x_i]=\frac{1}{n} \sum_{i=1}^n x_i$. We consider:
$$
\widehat{\theta}=\mathbb{E}_n[m(w_{jt}; \widehat{\pi})+\widehat{\alpha}(z_{jt})(y_{jt}-\widehat{\pi}(z_{jt}))] \text {. }
$$
The mixed bias property implies that the bias of this estimator will vanish at a rate equal to the product of the mean-square convergence rates of $\widehat{\alpha}$ and $\widehat{\pi}$. Therefore, in cases where the demand function $\pi$ can be estimated very well, the rate requirements on $\widehat{\alpha}$ will be less strict, and vice versa. More notably, whenever the product of the mean-square convergence rates of $\widehat{\alpha}$ and $\widehat{f}$ is larger than $\sqrt{n}$, we have that $\sqrt{n}\left(\widehat{\theta}-\theta_0\right)$ converges in distribution to centered normal law $N\left(0, \mathbb{E}\left[\psi_0(w_{jt})^2\right]\right)$, where
$$
\psi_0(w_{jt}):=m\left(w_{jt} ; \pi_0\right)+\alpha_0(z_{jt})\left(y_{jt}-\pi_0(z_{jt})\right)-\theta_0,
$$
as proven formally in Theorem 3 of \cite{chernozhukov2022automatic}. Results in \cite{newey1994asymptotic} imply that $\mathbb{E}\left[\psi_0(w_i)^2\right]$ is a semiparametric efficient variance bound for $\theta_0$, and therefore the estimator achieves this bound.

\begin{restatable}{thm}{mainthm3}\label{thm3}
[\cite{chernozhukov2021automatic}] One can view the Riesz representer as the minimizer of the loss function:
$$
\begin{aligned}
\alpha_0 & =\underset{\alpha}{\arg \min } \mathbb{E}\left[\left(\alpha(z_{jt})-\alpha_0(z_{jt})\right)^2\right] \\
& =\underset{\alpha}{\arg \min } \mathbb{E}\left[\alpha(z_{jt})^2-2 \alpha_0(z_{jt}) \alpha(z_{jt})+\alpha_0(z_{jt})^2\right] \\
& =\underset{\alpha}{\arg \min } \mathbb{E}\left[\alpha(z_{jt})^2-2 m(w_{jt} ; \alpha)\right],
\end{aligned}
$$
\end{restatable}
In our earlier discussions, we employed the moment function of $\pi$, whereas in Theorem \ref{thm3}, we focus on the moment function of $\alpha$. This shift is justified by the Riesz Representation Theorem, which implies $\mathbb{E}[m(w_{jt}; \pi)] = \mathbb{E} [\alpha_0(z_{jt}) \pi(z_{jt})]$. Given that $\pi$ can represent any function, substituting $\alpha$ for $\pi$ is permissible, thereby validating the transition from the second to the third line in Theorem \ref{thm3}. We use the above theorem to flexibly estimate the RR. The advantage of this approach is that it eliminates the need to derive an analytical form for the RR estimator, allowing it to be addressed as a simple computational optimization problem.

\begin{restatable}{thm}{mainthm4}\label{thm4}
[\cite{chernozhukov2021automatic}] Let $\delta_n$ be an upper bound on the critical radius (\cite{wainwright2019high}) of the function spaces:    
\end{restatable}
$$
\begin{gathered}
\left\{z \mapsto \zeta\left(\alpha(z)-\alpha_0(z)\right): \alpha \in \mathcal{A}_n, \zeta \in[0,1]\right\} \text { and } \\
\left\{w \mapsto \zeta\left(m(w ; \alpha)-m\left(w ; \alpha_0\right)\right): \alpha \in \mathcal{A}_n, \zeta \in[0,1]\right\}
\end{gathered}
$$
and suppose that for all $f$ in the spaces above: $\|f\|_{\infty} \leq 1$. Suppose, furthermore, that $m$ satisfies the mean-squared continuity property:
$$
\mathbb{E}\left[\left(m(w ; \alpha)-m\left(w ; \alpha^{\prime}\right)\right)^2\right] \leq M\left\|\alpha-\alpha^{\prime}\right\|_2^2
$$
for all $\alpha, \alpha^{\prime} \in \mathcal{A}_n$ and some $M \geq 1$. Then for some universal constant $C$, we have that w.p. $1-\zeta$ :
$$
\begin{aligned}
\left\|\widehat{\alpha}-\alpha_0\right\|_2^2 \leq C( & \delta_n^2 M+\frac{M \log (1 / \zeta)}{n} \\
& \left.+\inf _{\alpha_* \in \mathcal{A}_n}\left\|\alpha_*-\alpha_0\right\|_2^2\right)
\end{aligned}
$$
The critical radius has been widely studied in various function spaces, such as high-dimensional linear functions, neural networks, and superficial regression trees, often showing $\delta_n = O\left(d_n n^{-1 / 2}\right)$, where $d_n$ represents the effective dimensions of the hypothesis spaces (\cite{chernozhukov2021automatic}). In our research, we focus on applying Theorem 3 from an application standpoint to neural networks.

To that end, we make the following assumptions.
\begin{assumption}
     \label{assmp-1}
      (i) $\alpha_0(z)$ is bounded, (ii) $\forall$ $||\gamma - \gamma_0||$ small enough, $\mathbb{E}[(y-\pi_0(z_{jt}; \gamma))^2|z_{jt}]$  is bounded, and  (iii) $\mathbb{E}[m(w_{jt}, \pi_0(z_{jt};\gamma_0))^2] < \infty$.
 \end{assumption}
 These assumptions are standard regularity conditions used in the automatic machine learning literature. 
 
 \begin{assumption}
 \label{assmp-2}
     i) $\forall$ $||\gamma - \gamma_0||$ small enough $||\hat{\pi}(;\gamma) - \pi_0(;\gamma)|| \xrightarrow[]{p} 0$ and $ ||\hat{\alpha} - \alpha_0|| \xrightarrow[]{p} 0$; ii) $\sqrt{n}||\hat{\alpha} - \alpha||||(\hat{\pi}(;\gamma) - \pi_0(;\gamma)|| \xrightarrow[]{p} 0$;
iii) $\hat{\alpha}$ is bounded; (iv) $\sqrt{n}||\hat{\gamma}-\gamma_0||\xrightarrow[]{p} 0$
 \end{assumption}
 Intuitively these assumptions mean that (i) the estimator of both $\pi$ and $\alpha$ should be consistent for values of $\gamma$ in a close enough neighborhood of $\gamma_0$. Further, it requires that the product of  mean square error of $\hat{\alpha}$ and mean square error of $\pi$ should vanish at $\sqrt{n}-$ rate. This can be achieved if both these terms converge at least at $n^{-1/4}$ rate. Finally, we also assume that the first stage estimator $\hat{\gamma}$ is estimable at $n^{-1/2}$ rate. This limits the class of functions one can use to estimate $\gamma$.  
\begin{assumption}
\label{assmp-3}
    $m(w, \pi)$ is linear in $\pi$
     and there is $C>0$ such that
$$
\left|E\left[m(w, \pi)-\theta_{0}+\alpha_{0}(z) (y-\pi(z;\gamma))\right]\right| \leq C\left\|\pi-\pi_{0}\right\|^{2}
$$ 
\end{assumption}
\begin{comment}
\begin{corollary}\label{corr1}
The Riesz representer $\alpha(p_{jt}, X_{jt}, \{p_{kt},X_{kt}\}_{k \ne j})$ is permutation invariant under any permutation function $\sigma_j$ that rearranges the indices of the competitors of product $j$, such that
    $$\alpha(p_{jt}, X_{jt}, \{p_{kt},X_{kt}\}_{k \ne j}) = \alpha(p_{jt}, X_{jt}, \{p_{\sigma_{j}(k)t},X_{\sigma_{j}(k)t}\}_{k \ne j})$$ 
\end{corollary}
This follows directly from the Riesz representation theorem and assumption \ref{assumpt_perminv}.
\end{comment}
\begin{restatable}{prop}{mainprop}\label{prop1}
     If Assumptions 5-7 are satisfied then for  $V = E[\{m(w, \pi_0(z;\gamma_0)) - \theta_0 $\\  $+ \alpha_0(z)(y-\pi_0(z;\gamma_0))\}^2]$,
     \begin{equation*}
         \sqrt{n}(\hat{\theta} - \theta_0) \xrightarrow[]{D} N(0, V ), \hat{V} \xrightarrow[]{p} V.
     \end{equation*}  
\end{restatable}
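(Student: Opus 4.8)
The plan is to follow the double/debiased machine learning (DML) template of \cite{chernozhukov2022automatic}, augmented to absorb the extra first-stage estimator $\hat{\gamma}$. Write the uncentered score as
$$\psi(w;\pi,\alpha,\gamma) = m(w;\pi(\cdot;\gamma)) + \alpha(z)(y - \pi(z;\gamma)) - \theta_0,$$
so that $\hat{\theta} - \theta_0 = \mathbb{E}_n[\psi(w;\hat{\pi},\hat{\alpha},\hat{\gamma})]$ and $\psi_0(w) = \psi(w;\pi_0,\alpha_0,\gamma_0)$. To decouple the fitted nuisances from the evaluation sample I would estimate $(\hat{\pi},\hat{\alpha},\hat{\gamma})$ by cross-fitting, so that on each fold the nuisances are independent of the observations being averaged. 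The backbone of the argument is the decomposition
$$\sqrt{n}(\hat{\theta}-\theta_0) = \sqrt{n}\,\mathbb{E}_n[\psi_0(w)] + \sqrt{n}\,(\mathbb{E}_n - \mathbb{E})[\psi(\cdot;\hat{\pi},\hat{\alpha},\hat{\gamma}) - \psi_0] + \sqrt{n}\,\mathbb{E}[\psi(\cdot;\hat{\pi},\hat{\alpha},\hat{\gamma}) - \psi_0],$$
where the conditional expectation $\mathbb{E}[\cdot]$ treats the cross-fitted nuisances as fixed. I would show the three terms converge to $N(0,V)$, $o_p(1)$, and $o_p(1)$ respectively, then conclude by Slutsky's theorem.

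The first term is handled by the Lindeberg--L\'evy CLT: $\psi_0$ has mean zero (it is the centered influence function) and finite second moment $V$ under Assumption \ref{assmp-1}, giving $\sqrt{n}\,\mathbb{E}_n[\psi_0] \xrightarrow{D} N(0,V)$. The second, empirical-process term is $o_p(1)$ by cross-fitting: conditional on the auxiliary fold it is a sample average of a mean-zero, bounded (Assumptions \ref{assmp-1} and \ref{assmp-2}(iii)) term whose conditional variance vanishes by the consistency $\|\hat{\pi}(\cdot;\gamma) - \pi_0(\cdot;\gamma)\| \xrightarrow{p} 0$ and $\|\hat{\alpha} - \alpha_0\| \xrightarrow{p} 0$ of Assumption \ref{assmp-2}(i); a conditional Chebyshev bound then delivers the claim with no Donsker condition required.

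The third, bias term is the crux, and I would peel off the effect of $\hat{\gamma}$ first:
$$\mathbb{E}[\psi(\cdot;\hat{\pi},\hat{\alpha},\hat{\gamma})] = \underbrace{\mathbb{E}[\psi(\cdot;\hat{\pi},\hat{\alpha},\hat{\gamma}) - \psi(\cdot;\hat{\pi},\hat{\alpha},\gamma_0)]}_{\text{(a)}} + \underbrace{\mathbb{E}[\psi(\cdot;\hat{\pi},\hat{\alpha},\gamma_0)]}_{\text{(b)}}.$$
For (b), adding and subtracting $\alpha_0$ and using linearity of $m$ in $\pi$ (Assumption \ref{assmp-3}) splits it into $\mathbb{E}[m(w;\hat{\pi}) - \theta_0 + \alpha_0(z)(y - \hat{\pi}(z;\gamma_0))]$, which is $O(\|\hat{\pi} - \pi_0\|^2)$ by Assumption \ref{assmp-3}, plus $\mathbb{E}[(\hat{\alpha} - \alpha_0)(y - \hat{\pi})] = -\mathbb{E}[(\hat{\alpha} - \alpha_0)(\hat{\pi} - \pi_0)]$ (using $\mathbb{E}[y\mid z] = \pi_0(z)$ and cross-fitting), which Cauchy--Schwarz bounds by $\|\hat{\alpha} - \alpha_0\|\,\|\hat{\pi} - \pi_0\|$; multiplied by $\sqrt{n}$ both vanish by the quarter-rate and product-rate conditions of Assumption \ref{assmp-2}(ii). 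The term (a) is the genuinely new contribution from plugging in $\hat{\gamma}$, and it is where I expect the main difficulty: the score is not assumed Neyman-orthogonal in $\gamma$, so this bias is first-order in $\hat{\gamma} - \gamma_0$. I would control it by a mean-value expansion of $\gamma \mapsto \mathbb{E}[\psi(\cdot;\hat{\pi},\hat{\alpha},\gamma)]$ about $\gamma_0$, bounding the G\^ateaux derivative uniformly via Lipschitz dependence of $\pi(z;\cdot)$ on $\gamma$ and the mean-square continuity of $m$, so that (a) is $O_p(\|\hat{\gamma} - \gamma_0\|)$; multiplying by $\sqrt{n}$ and invoking $\sqrt{n}\|\hat{\gamma} - \gamma_0\| \xrightarrow{p} 0$ from Assumption \ref{assmp-2}(iv) makes it negligible. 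Establishing the requisite differentiability and the $L_2$-boundedness of this derivative in $\gamma$ is the delicate step, precisely because no orthogonality in the first stage is imposed.

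Finally, for $\hat{V} \xrightarrow{p} V$ I would write $\hat{V} = \mathbb{E}_n[\hat{\psi}(w)^2]$ with the plugged-in cross-fitted nuisances and target $V = \mathbb{E}[\psi_0(w)^2]$. Adding and subtracting $\psi_0$ inside the square, the cross-product and squared-remainder terms are $o_p(1)$ by the consistency and boundedness of Assumptions \ref{assmp-1}--\ref{assmp-2}, while $\mathbb{E}_n[\psi_0^2] \xrightarrow{p} V$ by the law of large numbers; combining the two convergence statements completes the proof.
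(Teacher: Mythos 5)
Your proposal is correct and follows essentially the same route as the paper's proof: both run the cross-fitted automatic debiased machine learning argument, kill the bias term via the mixed-bias property plus Cauchy--Schwarz and the quadratic-remainder bound of Assumption \ref{assmp-3}, absorb the extra first stage $\hat{\gamma}$ through a Taylor/mean-value expansion combined with $\sqrt{n}\|\hat{\gamma}-\gamma_0\|\xrightarrow{p}0$, and establish $\hat{V}\xrightarrow{p}V$ by the same add-and-subtract decomposition of $\hat{\psi}_i-\psi_i$. The only difference is presentational: the paper verifies Assumptions 1--3 of \cite{chernozhukov2022locally} and then invokes their Lemma 15 for asymptotic normality, whereas you inline that lemma's three-term decomposition (oracle CLT term, empirical-process term controlled by conditional Chebyshev, and conditional-bias term) directly.
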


We show the proof in Web Appendix \ref{sec:appendix_proof_thrm1}. This theorem shows that if $\hat{\gamma}$ is estimable at a fast enough rate one can still construct valid confidence intervals for $\hat{\theta}$. This result can be shown following similar arguments as in \cite{chernozhukov2022locally}. Finally, we note that while the above arguments focus on the estimation of the average effect of a price change on demand, we can follow the same arguments to derive inference results for other economic quantities of interest, e.g., the effect of changing some product features on demand.

\begin{comment}
\subsection{Summary of Theory}

\end{comment}

\section{Estimation Procedure}
\label{sec:estimation}
Based on theoretical results presented above, we now outline an estimation procedure for both the choice function ($\pi$) and the average effects of price changes ($\theta$). 

Consider a dataset, where $\{y_t, z_t, IV_t \}_{t=1}^{n}$ are independently and identically distributed. Here, $y_t$ is the vector of market shares in market $t$, $z_t$ is $J \times (d+1)$ matrix of product features and $IV_t$ is the $J $-dimension vector of instrumental variables. 

\begin{itemize}
\item Stage 0 (Data partition): We randomly split the observed markets into L folds such that the data $D_l := \{y_t, z_t \}_{t \in l}$, where $l$ denotes the $l^{th}$ partition. Note that all the observations for one market are always in one fold. 

\item Stage 1 (Estimate $\hat{\gamma}$): For each fold $l$, we estimate $\gamma_l$ by regressing the endogenous variable on the exogenous instruments on the left out data $D_{l}^c := \{y_t, z_t \}_{t \notin l}$. We then use the cross-fitting technique, same as \cite{chernozhukov2021automatic}, to calculate the residual $\hat{\mu}_l$ of fold $l$ with estimated $\hat{\gamma}_l$ on $D_{l}^c$.

\item Stage 2 (Estimate $\hat{\pi}$ and $\hat{\theta}$): %\hy{And $\theta$?}
\begin{itemize}
    \item Stage 2a (Estimation): In the second stage, for each fold $l$, we estimate both the choice function ($\hat{\pi}$) and the Riesz estimator ($\hat{\alpha}$) on the left out data $D_{l}^c := \{y_t, z_t \}_{t \notin l}$
    \begin{equation} \label{eq:pi_loss}
    \hat{\pi}_l=\argmin_{\pi \in \mathcal{F}} \frac{1}{\sum_{t \in D_{l}^c} J_t}\sum_{t \in D_{l}^c}\sum_{j \in J_t}[(y_{jt}-\pi(z_{jt};\hat{\gamma})) ^2] 
\end{equation}

\begin{equation}\label{eq:alpha}
    \hat{\alpha}_l = \argmin_{\alpha \in \mathcal{A}} 
\frac{1}{\sum_{t \in D_{l}^c} J_t}\sum_{t \in D_{l}^c}\sum_{j \in J_t} \left[\alpha(z_{jt})^2-2 m(w_{jt} ; \alpha)\right].
\end{equation}
Based on Theorem \ref{thm2}, instead of directly estimating the function $\pi$, we decompose the estimation into three sub-components: $\rho, \phi_1$, and $\phi_2$. Specifically, for each component of our model ($\phi_1$, $\phi_2$, and $\rho$), we use a standard 3-layer neural network\footnote{For $\phi_1$ and $\phi_2$, each of the three layers consists of 64 neurons, with the output vector also featuring 64 neurons. For $\rho$, the layers are configured with 300, 100, and 64 neurons for the first, second, and third layers, respectively.}, and this is implemented without further hyperparameter tuning. We implement ReLU activation function at each layer as it is standard in feedforward designs due to simplicity and computation efficiency in gradients. Figure \ref{fig:NN} illustrates how we pass the data to the neural networks. Specifically, we pass the focal product's characteristics (price $p_{jt}$ and other product features $X_{jt}$) and the residuals ($\hat{\mu}_{jt}$) estimated from the first stage regression to the $\phi_1$. In parallel, we pass all the products' characteristics of the other products in the same market ($p_{jt}$ and $X_{kt}$) and the corresponding residuals ($\hat{\mu}_{kt}$) to the same $\phi_2$, and then sum the output up. The output of $\phi_1$ and $\phi_2$ have the same data structure (e.g., a 64-dimension vector). Next, we pass the summation of the output of $\phi_1$ and $\phi_2$ to a third neural network $\rho$. The output of $\rho$ is a scalar which represents the market share of the focal product $jt$. 

We use the same neural network structure (with the three sub-components same as $\phi_1$, $\phi_2$, and $\rho$)  to estimate $\alpha$. The only difference is that the loss function of $\alpha$ is not based on the difference between the observed and the predicted market share as in Equation \ref{eq:pi_loss}. Instead, the loss function is based on the squared difference between $\alpha$ and the moment function of $\alpha$ as stated in Equation \ref{eq:alpha} and Theorem \ref{thm3}. %\hy{Ask Ye to explain.}

\begin{figure}
    \centering
    \includegraphics[width=0.9\textwidth]{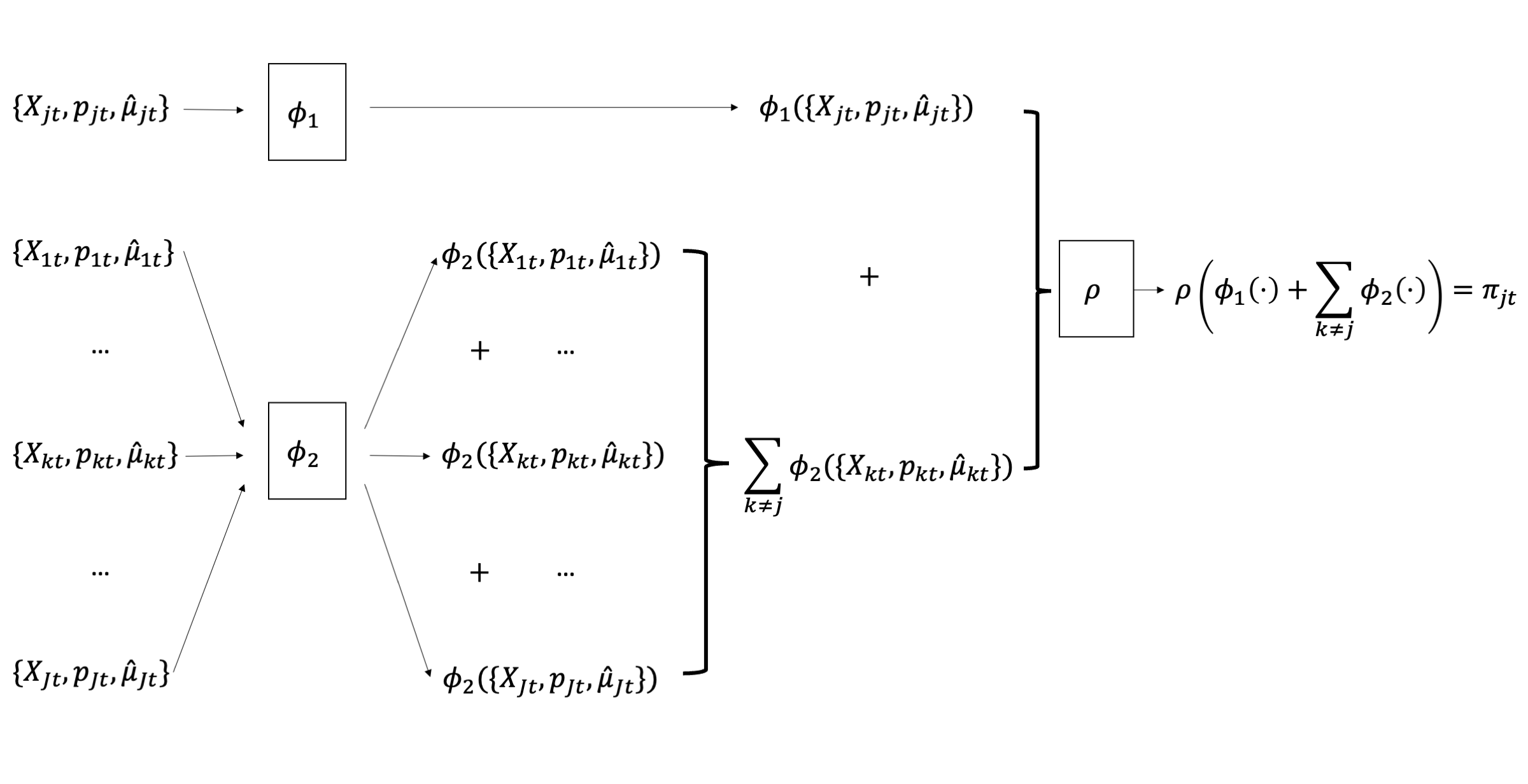}
    \caption{Illustration of Neural Network Architecture}
    \label{fig:NN}
\end{figure}

    \item Stage 2b (Cross-fitting): Now we again use the cross-fitting technique to reduce the bias when estimating $\hat{\theta}$. Specifically,  we use the estimators ($\hat{\pi}$ and $\hat{\alpha}$) estimated on  $D_{l}^{c}$ to estimate the $\hat{\theta}_l$ of $l$. By applying cross-fitting, we ensure that the nuisance functions and the parameters are estimated on separate, non-overlapping datasets.  This approach diminishes the risk of overfitting and enhances the robustness of our estimation.  And finally, to estimate $\hat{\theta}$, we randomly select one observation $t^*$ in each market $t$ and average it out across all folds. Thus the estimator for $\theta_0$ and its variance can be given as follows:
\begin{comment}
    
\begin{eqnarray}
\hat{\theta} &=& \frac{1}{n} \sum_{\ell=1}^{L} \sum_{t \in D_{\ell}^c}\left\{m\left(w_{t}, \hat{\pi}_{\ell}\right)+\hat{\alpha}_{\ell}\left(z_{t}\right) \left(y_t-\hat{\pi}_{\ell}(z_{t};\hat{\gamma})\right)\right\} \\
\hat{V} &=& \frac{1}{n} \sum_{\ell=1}^{L} \sum_{t \in D_{\ell}^c} 
\hat{\psi}_{t \ell}^{2}, \quad \hat{\psi}_{t \ell}=m\left(w_{t}, \hat{\pi}_{\ell}\right)-\hat{\theta}+\hat{\alpha}_{\ell}\left(z_{t}\right) \left(y_t-\hat{\pi}_{\ell}(z_{t};\hat{\gamma})\right) 
\end{eqnarray}
\end{comment}
\begin{eqnarray}
\hat{\theta} &=& \frac{1}{n} \sum_{\ell=1}^{L} \sum_{t \in D_{\ell}^c}\left\{m\left(w_{t^*}, \hat{\pi}_{\ell}\right)+\hat{\alpha}_{\ell}\left(z_{t^*}\right) \left(y_{t^*}-\hat{\pi}_{\ell}(z_{t^*};\hat{\gamma})\right)\right\} \\
\hat{V} &=& \frac{1}{n} \sum_{\ell=1}^{L} \sum_{t \in D_{\ell}^c} 
\hat{\psi}_{t^* \ell}^{2}, \quad \hat{\psi}_{t^* \ell}=m\left(w_{t^*}, \hat{\pi}_{\ell}\right)-\hat{\theta}+\hat{\alpha}_{\ell}\left(z_{t^*}\right) \left(y_{t^*}-\hat{\pi}_{\ell}(z_{t^*};\hat{\gamma})\right) 
\end{eqnarray}

\end{itemize}
\end{itemize}

\section{Numerical Experiments} 
\label{sec:simulation}

We now present a series of simulation studies that establish the numerical performance of our approach. First, in $\S$\ref{ssec:predictive_performance}, we examine the predictive performance of our model on a series of models, including stylized discrete choice models with linear utilities as well as more general models that allow non-linear utilities and realistic consumer behaviors such as inattention. Next, in $\S$\ref{ssec:counterfactual}, we present numerical experiments that demonstrate our approach's ability to simulate counterfactuals. Finally, in $\S$\ref{ssec:inf_sim}, we demonstrate the applicability of the inference procedure proposed in $\S$\ref{ssec:inference}.

%e start with the predictive performance of our model against other estimators when the data is generated by stylized models. We show that our model has better performance than both traditional choice models and standard non-parametric estimators. We also investigate the sensitivity of our model by adjusting key parameters, such as the number of products and markets, within the data generation processes. We also demonstrate the applicability of the inference procedure proposed in Section \ref{sec:inference} by estimating the average effect of 1\% change in own price on demand over all products ($\hat{\theta}$) and validate the estimated confidence intervals using coverage analysis. Second, since the stylized MNL and RCL cannot fully capture how each predictor affects the market shares, for example, the relationship between some predictor and the market share could be non-linear. We expand to other models where models are allowed to be misspecified.  We also demonstrate the performance of our model when broader consumer behaviors, such as consumer inattention, are considered. Third, we demonstrate the capability of our model in estimating counterfactuals. 

\subsection{Predictive Performance}
\label{ssec:predictive_performance}
In this section, we show that our approach can recover demand and elasticity estimates for a wide variety of settings without the knowledge of the underlying choice model and/or making parametric assumptions on consumers' behaviors. Before describing the simuations, we first describe the metrics used for comparing the performance of different estimators and the benchmark estimation approaches used.

First, to assess the predictive performance of our approach, we focus on three quantities of interest: 
\squishlist
\item Market share  ($\hat{\pi}_{jt}$)
\item Own price elasticity ($\frac{\partial \hat{\pi}_{jt} / \pi_{jt}  }{\partial p_{jt}/p_{jt}}$)
\item Cross-price elasticity ($\frac{\partial \hat{\pi}_{jt} / \pi_{jt}  }{\partial p_{k\neq j t}/  p_{k\neq j t} }$)
\squishend

In each simulation, we compare the performance of our model with the predictive performance of four baseline models:
\squishlist
    \item Multinomial Logit model (MNL)
    \item Random Coefficient Logit model (RCL)
    \item A standard neural network-based non-parametric method (NP).
    
    Here we simply use all the features of all the products in the market as input and predict a market share vector without using the permutation invariance property or correcting for endogeneity. This is similar to the approaches used by \cite{gabel2022product} and \citet{cai2022deep}, which simply use a large neural network for demand prediction. For this baseline, we tune the hyperparameters of the neural network, including the number of layers, number of nodes in each layer, learning rate, and the number of epochs using 5-fold cross-validation for each data generation. We detail the space of hyperparameters in Web Appendix \ref{appendix_hyper}. We also apply the ReLU activation for each layer. 

    \item A Mean Predictor (MP).
    
    Here we predict a uniform market share for all observed products within a market, excluding the outside option. This predicted share is set to the average of all observed market shares in that market\footnote{For instance, consider a market with three observed products having market shares of 0.2, 0.3, and 0.4, respectively. This implies the outside option holds a market share of 0.1. In such a scenario, the MP would predict the market share for each of the three products to be 0.3, which is the average ((0.2 + 0.3 + 0.4) / 3).}. %\hy{Unclear} 
    This serves as a baseline estimator as it does not account for individual product characteristics, providing a benchmark for the simplest prediction scenario. Additionally, comparing the performance of other models with MP also enables us to quantify the natural decrease in the MAE and RMSE when the number of products increases because the magnitude of the market shares decreases as the number of products increases\footnote{For instance, in a market with 100 products, the MAE and RMSE for any estimator are expected to be quite low. Comparing with the MP in such situations helps establish a baseline for MAE or RMSE.}. %\hy{It is a weird benchmark. Ask Ye to explain.} 
\squishend

\subsubsection{Multinomial Logit and Random Coefficient Logit with Linear Utility}
\label{sssec:mnl_rcl}

%by estimating the average effect of 1\% change in own price on demand for all products ($\hat{\theta}$) and validate the estimated confidence intervals using coverage analysis. \hy{In Equation 6, we don't sum over all products, but only over all observations. We need to clarify this in the Inference section as well as here.}

We first consider the two standard Data Generating Processes (DGPs) used in the demand estimation literature that use linear utility-based choice models  -- (1) Multinomial Logit model, and (2) the Random Coefficient Logit model. For both cases, we consider a setting with 10 products ($J = 10$), 100 markets ($T = 100$), one price feature, and 10 non-price features ($d = 10$). We define the utility $u_{ijt}$ that consumer $i$ in market $t$ derives from product $j$ as the following linear function:
\begin{equation}
u_{ijt} = \alpha_i p_{jt} + \beta_{i} X_{jt} + \varepsilon_{ijt},
\label{eq:linutil_rcl}
\end{equation}
where $\varepsilon_{ijt}$ represents an independently and identically distributed (iid) Type-I extreme value across products and consumers. $X_{jt} \in \mathbb{C}^{d}$ denotes the non-price features of the product. $\alpha_i, \beta_{i}$ are the model coefficients, which are kept constant for all consumers in the MNL, while in the RCL, they are normally distributed across consumers. The probability distribution of features and coefficients used are shown in Web Appendix \ref{sec:appendix_dist}. Also, the mean utility from the outside option is normalized to 0.  

We denote the market share of product $j$ in market $t$ generated from MNL by $\pi_{jt}^{MNL}$ and the market share generated from RCL by $ \pi_{jt}^{RCL}$. For each market, we generate the market shares of each product by simulating $N = 10,000$ individual choices and aggregating by each market as shown below. 
\begin{equation}
    \pi_{jt}^{MNL} = \frac{1}{N}\sum_i^{N} 1(u_{ijt} = \max_{k\in \mathcal{S}_t}(u_{ikt})) 
\end{equation} 

\begin{equation}
    \pi_{jt}^{RCL} = \frac{1}{N} \sum_i^{N} \frac{exp(\alpha_i p_{jt} +  \beta_{i} X_{jt})}{1 + \sum_{k \in \mathcal{S}_t}  exp(\alpha_i p_{kt} +  \beta_{i} X_{kt})}
    \label{eq:rcl_mktshare}
\end{equation} 
Note that for MNL, instead of simulating each individual's choice probability, we simulate each individual's choice based on the utility maximization principle. This approach ensures that when we use MNL (true model) for estimation, it does not reproduce the data perfectly.

%\hy{Footnote 5 is weird. Why log and if so, explain in the main text. Is that what Table 2 is showing?}

%Note that here we focus on the predicted own-elasticity and cross-elasticity for each observation (product), which is different from the average effect of a price change on the market share as discussed in $\S$\ref{sssec:inference}. We will elaborate on how to conduct inference on the average effect of price ($\hat{\theta}$) in Section \ref{ssec:inf_sim}.

\begin{table}[htp!]
\caption{Baseline Predictive Performance: Market Shares, Own-Elasticity, and Cross-Elasticity}
\label{tbl:baseline_summary}

\centering

\subcaption{Market Shares ($\hat{\pi}_{jt}$)}
\label{tbl:base_err}
\scalebox{0.7}{
\begin{tabular}{cccccccccccccccc}
\hline
\hline
    \# & True Model & J & T & d & \multicolumn{2}{c}{Our Model} & \multicolumn{2}{c}{MNL} & \multicolumn{2}{c}{RCL} & \multicolumn{2}{c}{NP} & \multicolumn{2}{c}{MP} & No. Obs. \\
    &             &   &   &   & MAE & RMSE & MAE & RMSE & MAE & RMSE & MAE & RMSE & MAE & RMSE & \\
\hline
  0 & MNL        & 5 & 100 & 10 & 0.0534 & 0.0834 & 0.0078 & 0.0105 & 0.0082 & 0.0052 & 0.1269 & 0.2364 & 0.2312 & 0.2220 & 2000 \\
  1 & MNL        & 10 & 20 & 10 & 0.0585 & 0.1086 & 0.0040 & 0.0131 & 0.0089 & 0.0134 & 0.1129 & 0.2191 & 0.1365 & 0.2948 & 800 \\
  2 & MNL        & 10 & 100 & 10 & 0.0333 & 0.0591 & 0.0044 & 0.0039 & 0.0026 & 0.0053 & 0.1181 & 0.1717 & 0.1422 & 0.1503 & 4000 \\
  3 & MNL        & 10 & 200 & 10 & 0.0307 & 0.1346 & 0.0032 & 0.0102 & 0.0034 & 0.0197 & 0.1096 & 0.2170 & 0.1416 & 0.2106 & 8000 \\
  4 & MNL        & 20 & 100 & 10 & 0.0194 & 0.0765 & 0.0015 & 0.0077 & 0.0023 & 0.0068 & 0.0707 & 0.2242 & 0.0768 & 0.2201 & 8000 \\ \hline
  5 & RCL        & 5 & 100 & 10 & 0.0240 & 0.0314 & 0.0307 & 0.0382 & 0.0033 & 0.0042 & 0.0456 & 0.0583 & 0.0538 & 0.0656 & 2000 \\
  6 & RCL        & 10 & 20 & 10 & 0.0206 & 0.0281 & 0.0270 & 0.0343 & 0.0034 & 0.0044 & 0.0540 & 0.0612 & 0.0418 & 0.0525 & 800 \\
  7 & RCL        & 10 & 100 & 10 & 0.0171 & 0.0231 & 0.0262 & 0.0326 & 0.0025 & 0.0033 & 0.0458 & 0.0583 & 0.0413 & 0.0514 & 4000 \\
  8 & RCL        & 10 & 200 & 10 & 0.0141 & 0.0187 & 0.0252 & 0.0318 & 0.0032 & 0.0039 & 0.0431 & 0.0559 & 0.0412 & 0.0513 & 8000 \\
  9 & RCL        & 20 & 100 & 10 & 0.0099 & 0.0140 & 0.0262 & 0.0281 & 0.0018 & 0.0024 & 0.0390 & 0.0489 & 0.0276 & 0.0354 & 8000 \\
\hline\hline
\end{tabular}}

\bigskip % Add a larger vertical space before the next sub-table

% Sub-table for Estimated Own-Elasticity
\subcaption{Own-Elasticity ($\frac{\partial \hat{\pi}_{jt} / \pi_{jt}  }{\partial p_{jt}/p_{jt}}$)}
\label{tbl:base_elas_s}
\scalebox{0.7}{
\begin{tabular}{clcccccccccccc}
\hline\hline
\# & True Model & J & T & d & \multicolumn{2}{c}{Our Model} & \multicolumn{2}{c}{MNL} & \multicolumn{2}{c}{RCL} & \multicolumn{2}{c}{NP} & No. Obs \\
& & & & & MAE & RMSE & MAE & RMSE & MAE & RMSE & MAE & RMSE & \\
\hline
0 & MNL & 5 & 100 & 10 & 0.2588 & 1.1815 & 0.1414 & 1.1232 & 0.1757 & 1.1322 & 0.4554 & 1.2115 & 8000 \\
1 & MNL & 10 & 20 & 10 & 0.3523 & 1.3557 & 0.1967 & 1.2970 & 0.2585 & 1.3118 & 1.0057 & 1.5316 & 3200 \\
2 & MNL & 10 & 100 & 10 & 0.3346 & 1.4327 & 0.2066 & 1.3851 & 0.2150 & 1.3863 & 0.9266 & 1.5857 & 16000 \\
3 & MNL & 10 & 200 & 10 & 0.3245 & 1.4131 & 0.2007 & 1.3842 & 0.2357 & 1.3876 & 0.8266 & 1.5768 & 32000 \\
4 & MNL & 20 & 100 & 10 & 0.4146 & 1.6596 & 0.3305 & 1.6203 & 0.3570 & 1.6229 & 1.0151 & 1.8353 & 32000 \\ \hline
5 & RCL & 5 & 100 & 10 & 0.1189 & 0.2310 & 0.1474 & 0.3735 & 0.0125 & 0.0305 & 0.1802 & 0.3145 & 8000 \\
6 & RCL & 10 & 20 & 10 & 0.1799 & 0.3729 & 0.2039 & 0.5326 & 0.0365 & 0.1196 & 0.3768 & 0.3254 & 3200 \\
7 & RCL & 10 & 100 & 10 & 0.1498 & 0.2862 & 0.2154 & 0.5531 & 0.0224 & 0.0685 & 0.2987 & 0.3643 & 16000 \\
8 & RCL & 10 & 200 & 10 & 0.1209 & 0.2416 & 0.2188 & 0.5512 & 0.0233 & 0.0732 & 0.2464 & 0.4241 & 32000 \\
9 & RCL & 20 & 100 & 10 & 0.1658 & 0.3533 & 1.4591 & 1.7099 & 0.0429 & 0.1319 & 0.4555 & 0.4741 & 32000\\
\hline\hline
\end{tabular}}

\bigskip % Add a larger vertical space before the next sub-table

% Sub-table for Estimated Cross-Elasticity
\subcaption{Cross-Elasticity ($\frac{\partial \hat{\pi}_{jt} / \pi_{jt}  }{\partial p_{k\neq j t}/  p_{k\neq j t} }$)}
\label{tbl:base_elas_c}
\scalebox{0.7}{
\begin{tabular}{clcccccccccccc}
\hline\hline
  \# & True Model &   J &   T &   d &   \multicolumn{2}{c}{Our Model} &   \multicolumn{2}{c}{MNL} &   \multicolumn{2}{c}{RCL} & \multicolumn{2}{c}{NP} & No. Obs \\
  & &&&& MAE & RMSE & MAE & RMSE & MAE & RMSE & MAE & RMSE & \\
\hline
0 & MNL & 5 & 100 & 10 & 0.1349 & 0.8115 & 0.0107 & 0.6780 & 0.0118 & 0.6807 & 0.1968 & 0.9442 & 32000 \\
1 & MNL & 10 & 20 & 10 & 0.0649 & 0.6742 & 0.0043 & 0.5326 & 0.0059 & 0.5424 & 0.1527 & 0.7123 & 28800 \\
2 & MNL & 10 & 100 & 10 & 0.0862 & 0.6104 & 0.0043 & 0.5142 & 0.0049 & 0.5143 & 0.1885 & 0.7488 & 144000 \\
3 & MNL & 10 & 200 & 10 & 0.0901 & 0.6075 & 0.0042 & 0.5140 & 0.0047 & 0.5153 & 0.2110 & 0.7831 & 288000 \\
4 & MNL & 20 & 100 & 10 & 0.0482 & 0.4665 & 0.0015 & 0.4151 & 0.0016 & 0.4157 & 0.1270 & 0.5303 & 608000 \\ \hline
5 & RCL & 5 & 100 & 10 & 0.0293 & 0.0571 & 0.0492 & 0.0551 & 0.0030 & 0.0070 & 0.0617 & 0.0960 & 32000 \\
6 & RCL & 10 & 20 & 10 & 0.0261 & 0.0435 & 0.0332 & 0.0455 & 0.0039 & 0.0143 & 0.0972 & 0.0791 & 28800 \\
7 & RCL & 10 & 100 & 10 & 0.0257 & 0.0493 & 0.0324 & 0.0447 & 0.0028 & 0.0090 & 0.0795 & 0.1124 & 144000 \\
8 & RCL & 10 & 200 & 10 & 0.0213 & 0.0417 & 0.0353 & 0.0455 & 0.0035 & 0.0097 & 0.0745 & 0.1277 & 288000 \\
9 & RCL & 20 & 100 & 10 & 0.0220 & 0.0431 & 0.0204 & 0.0359 & 0.0022 & 0.0103 & 0.0715 & 0.0890 & 608000 \\
\hline\hline
\end{tabular}}

\smallskip
\footnotesize
\raggedright

Note: Table \ref{tbl:base_err}, \ref{tbl:base_elas_s}, \ref{tbl:base_elas_c}  present the baseline predictive performance for predicted market shares, own- and cross-elasticities of our model and four baseline models. J, T, and d represent the number of products, non-price features, and markets, respectively. NP denotes a benchmark non-parametric method, which is a standard neural network. MP denotes the mean predictor. 
The Mean Absolute Error (MAE) and Root Mean Square Error (RMSE) of predicted market shares for each scenario (i.e., true model, J, d, T) are computed using the test data from 20 iterations of data generation, while the MAE and RMSE of own- and cross-elasticities are computed based on the training data from 20 iterations of data generation. The column titled ``No. Obs." indicates the total number of observations for each metric. Specifically, the number of observations for market share ($\hat{\pi}_{jt}$) is calculated based on $T \times J \times 20\% $ (the portion of test data) $\times 20$ (the number of draws of simulations). The number of observations for own-elasticity ($\frac{\partial \hat{\pi}_{jt} / \pi_{jt}  }{\partial p_{jt}/p_{jt}}$) is calculated based on $T \times J \times 80\% $ (the portion of training data) $\times 20$ (the number of draws of simulations).  The number of observations for cross-elasticity ($\frac{\partial \hat{\pi}_{jt} / \pi_{jt}  }{\partial p_{k\neq j t}/  p_{k\neq j t} }$) is calculated based on $T \times J \times (J-1) \times 80\% $ (the portion of training data) $\times 20$ (the number of draws of simulations). 

\end{table}

%$J$ represents the number of products, $T$ represents the number of markets (in the full data) and $d$ represents the number of non-price features. 

%The number of observations for market share ($\hat{\pi}_{jt}$) is calculated based on $T \times J \times 20\% $ (the portion of test data) $\times 20$ (the number of draws of simulations). The number of observations for own-elasticity ($\frac{\partial \hat{\pi}_{jt} / \pi_{jt}  }{\partial p_{jt}/p_{jt}}$) is calculated based on $T \times J \times 80\% $ (the portion of training data \footnote{The reason that we use training data for evaluating elasticity is to mirror the process when our method is applied in estimating elasticity. That is when all data is used for estimation. }) $\times 20$ (the number of draws of simulations). The number of observations for cross-elasticity ($\frac{\partial \hat{\pi}_{jt} / \pi_{jt}  }{\partial p_{k\neq j t}/  p_{k\neq j t} }$) is calculated based on $T \times J \times (J-1) \times 80\% $ (the portion of training data) $\times 20$ (the number of draws of simulations). 

For each DGP, we split the generated data into training data (80\%) and test data (20\%). We use the training data for estimation (both our model and the benchmark models described above).\footnote{In our simulations, we train both our model and NP with the log market shares ($log(\pi_{jt})$). The performance metrics reported for predicted market shares are computed based on the exponential values of the predicted log market shares, bringing these metrics back to market shares ($\pi_{jt}$). The performance metrics reported for elasticities are computed based on the relative change of the predicted log market shares ($\partial log(\pi_{jt})$) divided by the percentage change of the price ($\partial p_{jt}/p_{jt}$ for own-elasticity and $\partial p_{kt}/p_{kt}$ for cross-elasticity), which is equivalent to the elasticity calculated directly using the market share ($\frac{\partial \hat{\pi}_{jt} / \pi_{jt}  }{\partial p_{jt}/p_{jt}}$ for own-elasticity and  $\frac{\partial \hat{\pi}_{jt} / \pi_{jt}  }{\partial p_{k\neq j t}/  p_{k\neq j t} }$  for cross-elasticity).} For the predicted market share, we present all the model results and comparisons on the test data. For the predicted own- and cross-elasticities, we present all the model results and comparisons on the training data.\footnote{The reason we only use test data to report predicted market share accuracy is to demonstrate the model's predictive performance on unseen data. In contrast, we use training data to report accuracy in elasticities to mimic the real empirical setting where we use full data to estimate elasticity.}

Tables \ref{tbl:base_err} shows the Mean Absolute Error (MAE) and Root Mean Square Error (RMSE) in the predicted market share ($\hat{\pi}_{jt}$) for our approach as well as the baseline models.We see that when the true model is MNL, our model cannot beat RCL or MNL, which is as expected; but the error of our model is quite close to the true model. When the true model is RCL, our model can beat MNL consistently and the performance of our model is also close to the true model. Importantly, we find that our model consistently outperforms the benchmark Non-Parametric (NP) method in all data generation processes. This is despite extensive hyperparameter tuning for the NP method. 

There are two key reasons why our approach outperforms the standard neural network-based non-parametric method, especially as the number of products increases. First, unlike the standard neural network, our method can circumvent the curse of dimensionality that arises with the increase in the number of products. The standard neural network uses the stacked product features (d-dimension non-price features $X_{jt}$ and price $p_{jt}$) as input and has $(J \times (d +1) ) \times h_1$ parameters in the input layer, where $h_1$ denotes the size of the first hidden layer.\footnote{In this section, we consider only cases where there are no correlated unobservables. When there are potential endogeneity concerns, then we can include a residual $\mu_{jt}$ estimated from a first-stage regression, and then size of the input layer becomes $J \times (d+2)$. We consider settings with endogeneity in Web Appendix $\S$\ref{sec:appendix_endo} and in the experiments on inference in $\S$\ref{ssec:inf_sim}.}  In contrast, our model only uses product features as the input (with dimension $d+1$); see Figure \ref{fig:NN}.  Therefore, the parameters for our model do not scale with the number of products. Thus, as the number of products increases, our method is able to exploit this information to improve its performance, whereas the standard NP is unable to do so. Second, our model can leverage product-level market-share data more effectively. Note that one observation in the standard NN consists of one market, whereas one observation for our method consists of one product in a market. Hence, given data on $T$ markets, the number of samples available for the NP method is $T$, whereas the sample available for our model is $T \times J$. Together, these strengths of our approach lead to significantly better performance compared to a naive neural network. %\hy{Should we consider a setting with endogeneity and show it in the appendix.} \yl{The reason that we did not show the endogeneity in the simulation is that it is equivalent to adding one more feature (the residual term) if our instruments are perfect.  }

We observe both sources of performance improvement in the numerical simulation results in Table \ref{tbl:base_err}. As we vary the number of products (5, 10, and 20), the MAE of the predicted market shares from our model decreases monotonically (see simulation numbers 0, 2, and 4 for MNL, and 5, 7, and 9 for RCL). In contrast, the performance of the benchmark non-parametric estimator deteriorates as the number of products increases and becomes even worse than the Mean Prediction for the 20 product cases (simulation numbers 4 and 9). These findings demonstrate the overfitting problems and the curse of dimensionality issues discussed above. Indeed, this limitation has also been theoretically established by \citet{sannai2019improved}, who showed that for neural networks that do not take into account the inherent invariance structure, the generalization gap increases in proportion to the number of possible permutations, which is $\sqrt{J!}$ in our case. Further, we examine the performance of our model and the other benchmarks by varying the number of markets (20, 100, and 200) while keeping the number of products constant at 10; see simulations 1, 2, and 3 for MNL and 6, 7, and 8 for RCL. Although both our model and the NP method show improved performance with more markets, the non-parametric estimator is more adversely affected by a decrease in market numbers due to a more significant reduction in its sample size. This is particularly problematic in scenarios with one market, as the NP method becomes infeasible for estimation for only one sample. 

Finally, in Tables \ref{tbl:base_elas_s} and \ref{tbl:base_elas_c}, we show the predictive performance of own-elasticity ($\frac{\partial \hat{\pi}_{jt} / \pi_{jt}  }{\partial p_{jt}/p_{jt}}$) and cross-elasticity ($\frac{\partial \hat{\pi}_{jt} / \pi_{jt}  }{\partial p_{k\neq j t}/  p_{k\neq j t} }$), respectively. Again, we find that our model consistently outperforms the NP method in all scenarios for both own- and cross-elasticity predictions. When the true model is MNL, our model underperforms compared to RCL, given that RCL inherently captures the substitution pattern among products in MNL\footnote{In an extreme case, when the variance of random coefficients is zero, RCL is equivalent to MNL.}. When the true model is RCL, our model is the closest one to the true model. Unlike the market share predictions, the accuracy of own-elasticity predictions does not exhibit a clear monotonical improvement as the number of products increases. We observe a similar pattern even for the true model--the accuracy of own-elasticity decreases as the number of products increases. This suggests that it is not a deficiency of our model but due to the inherent complexities in estimating own-elasticities in markets with many products.  In the prediction of cross-elasticity, our model shows a monotonical improvement in accuracy with an increase in the number of products. Additionally, for both own- and cross-elasticities, as the number of markets increases, the performance of our model is better due to the increase in the sample size. 

So far, in the above simulations, we did not consider any endogenous explanatory variables. As discussed in $\S$\ref{ssec:endo}, our approach can easily account for endogeneity following the estimation steps in $\S$\ref{sec:estimation}. For interested readers, we present a set of numerical experiments with endogeneous explanatory variables in Web Appendix $\S$\ref{sec:appendix_endo}. The key takeaway from this analysis is that ignoring endogeneity can lead to significant biases in the estimates of own- and cross-price elasticities. Therefore, in the application to real data, we take care of endogeneity carefully; please see $\S$\ref{sec:empirical_blp} for further details.

\subsubsection{Random Coefficient Logit with Non-Linear Utility}
\label{sssec:rcl_nonlinear}
In the previous section, we focused on linear utility specifications and standard choice behaviors. Recent literature has highlighted that the oversight of non-linear relationships between features and utilities can introduce biases in the estimates \citep{allenby2004choice}. Conversely, non-parametric estimators are adept at capturing these non-linear patterns directly from the data. As a result, there has been a growing trend towards the adoption of non-linear utility functions. Therefore, we now focus on data generated from a random coefficient logit model with non-linear transformations applied to observable features. We consider a case with two product features -- price and a non-price feature $x$. We apply a non-linear transformation $g(x)$. Following \cite{bakhitov2022causal},we consider two functions for $g(x)$:
\begin{enumerate}
    \item[a.] log(): $g(x) =  log(|16x - 8| + 1)\text{sign}(x - 0.5)$
    \item[b.] sin(): $g(x) =   sin(x)$
\end{enumerate}
The log transformation is common in empirical studies, to capture a diminishing sensitivity of a feature on the market share. The sine transformation captures periodic or cyclical effects. For example, when the feature represents the time of year, normalized from 0 to 1, then using sine transformation can effectively capture the seasonal variations in consumer preferences.  

The utility that consumer $i$ in market $t$ derives from product $j$ then has the following non-linear form:
\begin{equation}
u_{ijt} = \alpha_i p_{jt} + \beta_{i} g(X_{jt}) + \varepsilon_{ijt}.
\end{equation} 
The marketshares then follow a similar structure to that from Equation \eqref{eq:rcl_mktshare}.

As before, we generate data using this model and estimate the marketshares and elasticities using both our approach and the baseline models. When estimating the baseline MNL and RCL models, we assume that the researcher does not have knowledge of the non-linearities in the utility function and hence uses the simple linear utility(s) in their estimation (as shown in Equation \eqref{eq:linutil_rcl}). The results for the predicted market shares ($\hat{\pi}_{jt}$), own-price elasticity ($\frac{\partial \hat{\pi}_{jt} / \pi_{jt}  }{\partial p_{jt}/p_{jt}}$) and cross-elasticity ($\frac{\partial \hat{\pi}_{jt} / \pi_{jt}  }{\partial p_{k\neq j t}/  p_{k\neq j t} }$) are presented in Tables \ref{tbl:nl_err}, \ref{tbl:nl_elas_s}, and \ref{tbl:nl_elas_c}, respectively. 

Regarding the MAE of predicted market shares, our model surpasses the RCL model by a factor of 8X and 4X across transformations (a) and (b), respectively. Similarly, considering the MAE of predicted own-elasticity in transformations (a) and (b), our model outperforms RCL by factors of 20X and 2.5X, respectively. For the MAE of predicted cross-elasticity, our model is 2X and 1.5X superior to RCL across transformations (a) and (b), respectively. It's worth noting that while our model consistently outperforms the NP method across metrics, the NP method still shows better performance than both RCL and MNL in terms of MAE and RMSE for estimated own-elasticity ($\frac{\partial \hat{\pi}_{jt} / \pi_{jt}  }{\partial p_{jt}/p_{jt}}$), underscoring the strengths of neural network-driven approaches in navigating non-linearities. 

\begin{table}[htp!]
\caption{Predictive Performance in Non-Linear Utility: Market Shares, Own-Elasticity, and Cross-Elasticity}
\label{tbl:non_linearity_combined}

% Subtable for Predicted Market Shares
\begin{subtable}{\textwidth}
\centering
\caption{Market Shares ($\hat{\pi}_{jt}$)}
\label{tbl:nl_err}
\scalebox{0.7}{
\begin{tabular}{lccccccccccccccc}
\hline
\hline
    \# & True Model          & \multicolumn{2}{c}{Our model} & \multicolumn{2}{c}{MNL} & \multicolumn{2}{c}{RCL} & \multicolumn{2}{c}{NP}  & \multicolumn{2}{c}{Mean}& No. Obs.\\
    &   &   MAE & RMSE & MAE & RMSE & MAE & RMSE & MAE & RMSE  & MAE & RMSE  &\\
\hline
  0 & RCL-log() &  0.0025 & 0.0063 &   0.0358 & 0.0361 &   0.0213 & 0.0309 & 0.0588 & 0.1235 & 0.0836 &  0.1401 & 200 \\
  1 & RCL-sin() &   0.0029 & 0.0046 &   0.0281 & 0.0340 &   0.0102 & 0.0172 & 0.0315 & 0.0449 & 0.0388  & 0.0527 & 200 \\
\hline
\hline
\end{tabular}}
\end{subtable}
\smallskip

% Subtable for Estimated Own-Elasticity
\begin{subtable}{\textwidth}
\centering
\caption{Own-Elasticity ($\frac{\partial \hat{\pi}_{jt} / \pi_{jt}  }{\partial p_{jt}/p_{jt}}$)}
\label{tbl:nl_elas_s}
\scalebox{0.7}{
\begin{tabular}{lccccccccccccc}
\hline\hline
  \# & True Model & \multicolumn{2}{c}{Our Model} & \multicolumn{2}{c}{MNL} & \multicolumn{2}{c}{RCL} & \multicolumn{2}{c}{NP} & No. Obs \\
  & &  MAE & RMSE & MAE & RMSE & MAE & RMSE & MAE & RMSE & \\
\hline
  0 & RCL-log()  & 0.0566 & 0.1523 & 5.4278 & 2.8249 & 1.1961 & 2.1135 & 0.6119 & 0.9085 & 16000 \\
  1 & RCL-sin() &   0.0609 & 0.2820 & 0.6229 & 1.1350 & 0.1777 & 0.4246 & 0.4057 & 1.0787 & 16000 \\
\hline\hline
\end{tabular}
% Your tabular environment for Estimated Own-Elasticity here
}
\end{subtable}
\smallskip

% Subtable for Estimated Cross-Elasticity
\begin{subtable}{\textwidth}
\centering
\caption{Cross-Elasticity ($\frac{\partial \hat{\pi}_{jt} / \pi_{jt}  }{\partial p_{k\neq j t}/  p_{k\neq j t} }$)}
\label{tbl:nl_elas_c}
\scalebox{0.7}{
\begin{tabular}{lccccccccccccc}
\hline\hline
  \# & True Model & \multicolumn{2}{c}{Our Model} & \multicolumn{2}{c}{MNL} & \multicolumn{2}{c}{RCL} & \multicolumn{2}{c}{NP}  & No. Obs \\
  & &  MAE & RMSE & MAE & RMSE & MAE & RMSE & MAE & RMSE & \\
\hline
  0 & RCL-log() &   0.0150 & 0.0543 & 0.0436 & 0.1389 & 0.0372 & 0.4414 & 0.2552 & 0.5444 & 144000 \\
  1 & RCL-sin() &   0.0226 & 0.1047 & 0.0471 & 0.1794 & 0.0354 & 0.1751 & 0.1448 & 0.3357 & 144000 \\
\hline\hline
\end{tabular}}
\end{subtable}

\smallskip
\footnotesize
\raggedright
Note: This table presents the results when we add non-linear transformation in data generation. We generate using the Random Coefficient Logit (RCL) model, with 10 products and 100 markets, while only considering a single non-linearly transformed feature, which is the price.  
\end{table}

\subsubsection{Models with Consumer Inattention and Consideration Set Formation}
\label{sssec:consumer_inattention}
Finally, we consider a scenario where consumers do not pay attention to all the products and/or are not fully informed of all the alternatives in the choice set. Recent literature has shown that this is often the case in many empirical settings \citep{goeree2008limited, gabaix2019behavioral, honka2019empirical, abaluck2020method, compiani2022market}. However, such cases violate a standard assumption of the choice model: that consumers are informed and consider all options when they make purchase decisions.  
 %Particularly, we will specifically look at the scenario when consumers are inattentive. When consumers are inattentive, they may not carefully consider all of their options before making a purchase decision. 
In some parametric models \citep[e.g.,][]{van2010retrieving}, this issue is managed by constructing a consumer-level consideration set. However, consideration sets are usually unobserved in data; so these approaches often require assumptions on how consideration sets are formed, which might not always be appropriate or reflective of actual consumer behavior. Another way to manage this issue is to model search costs, i.e., allow consumers to ignore certain products because search is costly \citep[e.g.,][]{weitzman1978optimal, mehta2003price, hortaccsu2004product, kim2010online}. Similarly, it also requires researchers to specify how search cost enters the utility function and decision process. In contrast to these models, our approach refrains from making any parametric assumptions, which allows for a potentially more flexible representation of consumer behavior in the case of consumer attention. To demonstrate how our model can capture the inattentive behavior, we look at a scenario where consumers are inattentive and deviate from the traditional random coefficient logit model. %\hy{Weird that we only cite the Mehta paper here. There must be more search cost papers?}

\begin{figure}[htbp!]
\centering
\begin{subfigure}{\textwidth}
\centering
\includegraphics[width=0.8\textwidth]{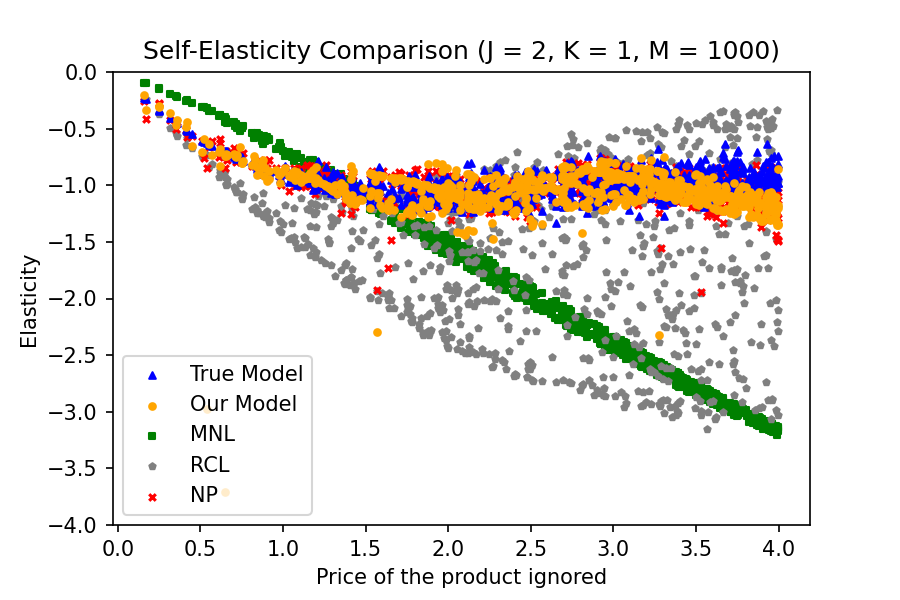}
\caption{Own-elasticity ($\frac{\partial \hat{\pi}_{jt} / \pi_{jt}  }{\partial p_{jt}/p_{jt}}$) in consumer inattention}
\label{fig:in_own_elas}
\end{subfigure}
\begin{subfigure}{\textwidth}
\centering
\includegraphics[width=0.8\textwidth]{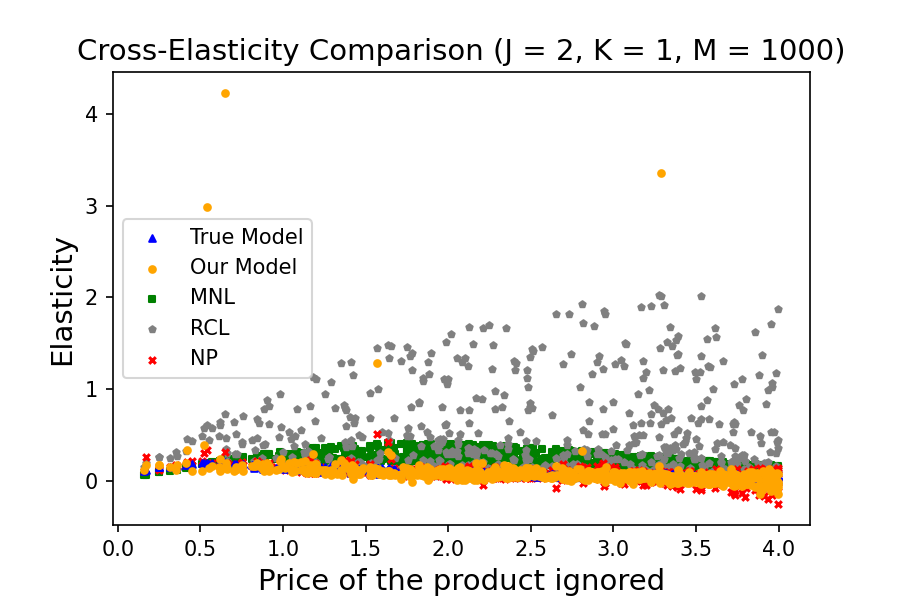}
\caption{Cross-elasticity   ($\frac{\partial \hat{\pi}_{jt} / \pi_{jt}  }{\partial p_{k\neq j t}/  p_{k\neq j t} }$) in consumer inattention}
\label{fig:in_cross_elas}
\end{subfigure}
\caption{Elasticity Effects in Consumer Inattention}
\label{fig:in_elasticity}
\raggedright
\footnotesize{Note: Figure \ref{fig:in_elasticity} illustrates how different models perform when there is $1 - \frac{1}{1+p_{ht}}$ consumers who ignore the product with the highest price. We simulate market shares in the case of 2 products, 1000 markets, and 1 feature (price). Due to the existence of inattention, when the price is higher, the portion of inattentive consumers is higher. Thus when we change the price, the change in market share is smaller than the case without inattention.  }
\end{figure}

%\hy{The subscript j, h, and t seem confusing for the highest price product. Should it simply be $p_{ht}$?}
We consider a simple model of inattention where a portion of consumers ($1 - \frac{1}{1+p_{ht}}$) ignore the product with the highest price, following \cite{compiani2022market}. In other words, the consideration set of $1 - \frac{1}{1+p_{ht}}$ consumers in market $t$ excludes the highest-price product $h$.  So when the price increases, the portion of inattentive consumers increases. Suppose there is only one feature, price, then, the choice probability of the highest price product $h$ in the market $t$ is:
\[
\frac{1}{1+p_{h t}}\frac{exp(\alpha_i p_{ht} )}{  \sum_k exp(\alpha_i p_{kt} )}.
\]
The choice probability of other products $j \neq h$ are given by:
\[
\frac{1}{1+p_{h t}}\frac{exp(\alpha_i p_{j t} )}{  \sum_k exp(\alpha_i p_{kt}) } + (1- \frac{1}{1+p_{h t}}) \frac{exp(\alpha_i p_{jt})}{ \sum_{k \neq h t } exp(\alpha_i p_{kt} )}. 
\]

%\hy{In the text, you say 1000 markets, and under Tables 10 and 11, it says 100 markets}

\begin{table}[htp!]
\caption{Consumer Inattention - Model Performance of Predicted Market Shares, Own-Elasticity, and  Cross-Elasticity}
\label{tbl:in_summary}

\centering

\subcaption{Market Shares ($\hat{\pi}_{jt}$)}
\label{tbl:in_err}
\scalebox{0.7}{
\begin{tabular}{clccccccccccccccc}
\hline\hline
    \#              & J  & \multicolumn{2}{c}{Our Model} & \multicolumn{2}{c}{MNL} & \multicolumn{2}{c}{RCL} & \multicolumn{2}{c}{NP} & \multicolumn{2}{c}{Mean} & No. Obs \\
\hline
                &   & MAE & RMSE & MAE & RMSE & MAE & RMSE & MAE & RMSE & MAE & RMSE & \\
\hline
0 & 2  & 0.0047 & 0.0198 & 0.0590 & 0.0753 & 0.0316 & 0.0439 & 0.0076 & 0.0273 &    0.1840 & 0.2044 & 40 \\
1 &  5  & 0.0064 & 0.0139 & 0.0203 & 0.0252 & 0.0178 & 0.0226 & 0.0250 & 0.0345 & 0.0780 & 0.1022 & 100 \\
2 &10  & 0.0033 & 0.0068 & 0.0137 & 0.0166 & 0.0072 & 0.0100 & 0.0258 & 0.0365 &  0.0492 & 0.0656 & 200 \\
\hline\hline
\end{tabular}}

\bigskip % Add a larger vertical space before the next sub-table

\subcaption{Own-Elasticity ($\frac{\partial \hat{\pi}_{jt} / \pi_{jt}  }{\partial p_{jt}/p_{jt}}$)}
\label{tbl:in_elas_s}
\scalebox{0.7}{
\begin{tabular}{clccccccccccccccc}
\hline\hline
\# &   J &  \multicolumn{2}{c}{Our Model} & \multicolumn{2}{c}{MNL} & \multicolumn{2}{c}{RCL} & \multicolumn{2}{c}{NP} & No. Obs \\
\hline
                           &     &   MAE   &   RMSE   &   MAE   &   RMSE   &   MAE   &   RMSE   &   MAE   &   RMSE   &   \\
\hline
0 &   2  & 0.0609 & 0.5190 & 0.6758 & 1.5698 & 0.3929 & 1.4598 & 0.0573 & 0.8804 & 3200 \\
1 &   5  & 0.0978 & 1.8579 & 0.6917 & 1.9614 & 0.3753 & 1.9782 & 0.4288 & 1.9546 & 8000 \\
2 &  10  & 0.0708 & 2.4273 & 0.8306 & 2.1031 & 0.1842 & 2.2787 & 0.5464 & 2.1450 & 16000 \\
\hline\hline
\end{tabular}}

\bigskip % Add a larger vertical space before the next sub-table

\subcaption{Cross-Elasticity ($\frac{\partial \hat{\pi}_{jt} / \pi_{jt}  }{\partial p_{k\neq j t}/  p_{k\neq j t} }$)}
\label{tbl:in_elas_c}
\scalebox{0.7}{
\begin{tabular}{cccccccccccccccc}
\hline\hline
\# &   J &    \multicolumn{2}{c}{Our Model} & \multicolumn{2}{c}{MNL} & \multicolumn{2}{c}{RCL} & \multicolumn{2}{c}{NP} & No. Obs \\
\hline
  &                            &        MAE   &   RMSE   &   MAE   &   RMSE   &   MAE   &   RMSE   &   MAE   &   RMSE   &   \\
\hline
0 &    2 & 0.0419 & 4.3553 & 0.1911 & 8.8197 & 0.3350 & 8.5791 & 0.0378 & 5.6814 & 3200 \\
1 &   5 & 0.0486 & 4.7745 & 0.0827 & 5.8357 & 0.0503 & 5.8404 & 0.1897 & 5.5601 & 32000 \\
2 &  10 & 0.0212 & 3.8765 & 0.0476 & 4.0961 & 0.0146 & 4.1113 & 0.1676 & 4.1198 & 144000 \\
\hline\hline
\end{tabular}}

\bigskip
\footnotesize
\noindent
\raggedright
Note: These tables present the MAE and RMSE for predicted market shares, own-elasticity, and cross-elasticity under consumer inattention across scenarios with 2, 5, and 10 products. The market share and elasticity are simulated assuming a portion of consumers ($1 -\frac{1}{1+p_{j_{ht}}})$, $j_{ht}$ is the index of the highest-price product in market $t$) ignore the product with the highest price. Each scenario is fixed at 100 markets and 1 feature (price) to maintain consistency with the RCL baseline. 

\end{table}
We first present detailed results on the own- and cross-elasticity for the two-product case in Figure \ref{fig:in_elasticity}. We consider the number of markets to be 1,000 so that we can observe enough variance in our data. Figure \ref{fig:in_own_elas} shows how the estimated own-elasticity for the highest-priced product (which is ignored by a subset of consumers), for a range of prices. Note that in our model, when the price is higher, the portion of inattentive consumers is higher. Thus when we change the price, the change in market share is smaller than the case without inattention. While our model and the fully NP model are able to capture this pattern, both the parametric models (MNL and RCL) are unable to do so.  Figure \ref{fig:in_cross_elas} shows how the estimated cross-elasticity of the other products vary with the price of the highest-priced product. Similarly, due to the ignorance of inattention, both MNL and RCL overestimate the magnitude of the elasticity of the other product. In contrast, our model and the fully NP are able to capture the true cross-price elasticity and are close to the true model.

Next, we show a more comprehensive set of results for all three metrics (market-shares, own-, and cross-price elasticities) when there are more products (2, 5, and 10) and fewer markets (100) in Table \ref{tbl:in_err}, \ref{tbl:in_elas_s}, and \ref{tbl:in_elas_c}. We find that our approach consistently outperforms RCL and MNL, as expected. Further, the performance of our model improves as the number of products increases while that of the NP model monotonically decreases with the number of products (for the reasons discussed in $\S$\ref{sssec:mnl_rcl}). 

In summary, we find that our approach adapts well even as the underlying model of consumer behavior changes without the need to impose any specific assumptions on consumer decision-making.

\subsection{Counterfactual Analysis}
\label{ssec:counterfactual}

%Finally, we examine our approach's ability to estimate counterfactuals. 

In general, a key advantage of parametric models like MNL and RCL, or a structural approach to consumer behavior is their ability to predict outcomes in counterfactual scenarios outside the distribution of the data used for estimation. In contrast, fully non-parametric approaches like the naive neural network model that simply uses all product features as inputs cannot be used for counterfactual predictions\footnote{Both our model and fully NP can do counterfactuals when shocks only result in the change in features $\{X_{jt}, P_{jt}\}$. For example, consider a choice model where ranking is a feature. In such a scenario, researchers would like to see how the demand would change if the ranking policy is changed. To estimate this counterfactual scenario, one would update the value assigned to the ranking feature within a model. Therefore, both our model and fully NP are capable of doing such analysis, offering insights into how changes in specific features could impact demand.}. By leveraging the choice invariance property, our method adds some structure to the neural network architecture; and as a result, accommodates certain types of counterfactual models. In particular, we focus on a specific type of counterfactual that is of interest to firms and policy-makers -- demand estimation when the choice set changes; e.g., through the introduction of a new product, the exit of an existing product, or the merger of two firms \citep{nevo2000mergers, petrin2002quantifying, nevo2003new, wollmann2018trucks}. Our model can easily handle such counterfactuals since the choice function is specified as a function of the focal product features and the features of a set of competing products (see Theorem \ref{thm1} and Figure \ref{fig:NN}). In contrast, estimating counterfactual demand when the choice set changes is infeasible with a standard neural network estimator due to its structural constraints on the input space. A change in the choice set would result in the change of size of the input vector, making such estimation infeasible. 

%\hy{Is this just the same as changing Xs? And even standard NNs can do this? Need to merge this with previous para.}
%First, our model can handle counterfactuals where shocks only result in the change in features of each product. For example, in a choice model where ranking is an important component that influences the choice behavior of each individual, researchers would like to see how the demand would change if the ranking policy is changed. This counterfactual can be estimated by updating the value of the ranking feature in prediction. Second, our model can also handle the counterfactual estimates when the choice set (product set) changes. \hy{Note to self -- make this a footnote. If it just a change to the X vector, both our model and the fully NP can do it. Though, as we showed earlier, the performance of the fully NP model maybe poor for high dimensional product features and few markets.}

To showcase the capability of our model to estimate counterfactuals, we consider a setting where a new product is introduced to the market. For comparison, we will only consider an MNL and RCL estimator since it is infeasible for the NP method to estimate the counterfactual. We use two data generation processes --  Multinomial Logit (MNL) and Random Coefficient Logit (RCL), the same as $\S$\ref{sssec:mnl_rcl} with 100 markets, 10 products (and one outside option)m and 10 features. For each of these two cases, we consider a counterfactual where an 11th product is introduced to each market. The observable characteristics of the new product are simulated from the same distribution as other products. 

In Table \ref{tbl:counter1}, we present the error in the estimated market share of all products when a new product is introduced to the market. We find that our model does quite well. When the true model is MNL, both MNL and RCL outperform our method, similar to what we find in $\S$\ref{sssec:mnl_rcl}. 
Our model also outperforms the MNL when the underlying data generation process is RCL, and produces results comparable to the true model. Overall, these results suggest that our approach can be used for counterfactual estimation, which is often a key focus of many substantive studies in marketing and economics.

\begin{table}[htp!]
\caption{New Product Demand Estimation - Predicted Market Share ($\hat{\pi}_{jt}$)}
\label{tbl:counter1}
\centering\scalebox{0.7}{
\begin{tabular}{ccccccccc}
\hline\hline
True Model & \multicolumn{2}{c}{Our Model} & \multicolumn{2}{c}{MNL} & \multicolumn{2}{c}{RCL}  & No. Obs.  \\

 &   MAE   &   RMSE   &   MAE   &   RMSE   &   MAE   &   RMSE  & \\
    \hline
MNL  &      0.0234	 & 0.0644   &  0.0041  & 0.0095 &     0.0023 & 0.0045	   &22,000\\
RCL  &      0.0186	 & 0.0145   &  0.0265 & 0.0331  &    0.0023	 & 0.0031      &22,000 \\
\hline\hline
\end{tabular}}
\par
\smallskip % Add a small vertical space
\footnotesize % Use a smaller font size for the notes
\noindent
\raggedright
Note: This table presents the MAE and RMSE of predicted market shares of all products in the market when a new product enters. We simulate market shares as in our baseline scenario (10 products and one outside option, 100 markets, 10 non-price features) when an eleventh product is introduced.  
\end{table}

\subsection{Inference and Coverage Analysis}
\label{ssec:inf_sim}
We now demonstrate the performance of the debiasing and inference procedure discussed in $\S$\ref{ssec:inference}. The objective is to demonstrate the validity of the estimated confidence intervals. To this end, we estimate the average effect of a 1\% change in own price on demand over all products ($\hat{\theta}$) and compute the corresponding confidence intervals of this effect. The difference between this section and $\S$\ref{sssec:mnl_rcl} lies in both the estimators and the methods. In terms of estimators, in $\S$\ref{sssec:mnl_rcl}, we predict the market share ($\hat{\pi}_{jt}$), own-elasticity ($\frac{\partial \hat{\pi}_{jt} / \pi_{jt}  }{\partial p_{jt}/p_{jt}}$) and cross-elasticity ($\frac{\partial \hat{\pi}_{jt} / \pi_{jt}  }{\partial p_{k\neq j t}/  p_{k\neq j t} }$) for individual products. In contrast, the object of interest in this section is the \textit{average} effect of price on demand across all products ($\hat{\theta}$). As a result, in $\S$\ref{sssec:mnl_rcl}, we did not use the debiasing techniques that we apply here. It is important to emphasize that, in our approach, constructing a confidence interval is viable only for aggregate measures, not for individual observations. 

%\hy{Why a different data generation process?} \yl{tried the one we used in the prediction part, but the coverage rate was around 85\% so used a simpler data structure.  }

%\hy{We should either make this a new subsection or have the same set of DGPs.}

To simulate the data, we consider a random coefficient logit model of demand with 3 products across 100 markets. We set the true model parameters to be $\beta_{ik} \sim \mathcal{N}(1, 0.5), \alpha_{i} \sim \mathcal{N}(-1,0.5)$. The effect of a 1\% increase in a product's price is given by 
$$\theta_0 = \mathbb{E}[m(w_{jt},\pi)] = \mathbb{E}[\pi(p_{jt}*(1.01), X_{jt}, \{p_{kt}, X_{kt}\}_{k \ne j}) - \pi(p_{jt}, X_{jt}, \{p_{kt}, X_{kt}\}_{k \ne j})],$$ As discussed earlier, one way to estimate this effect is to compute the sample analog of this using the estimated $\hat{\pi}$, such that $\hat{\theta} = \frac{1}{n}\sum_{i = 1}^{n}m(w_{jt},\hat{\pi})$. However, as we pointed out earlier,  the distribution of $\hat{\theta}$ might not be asymptotically normal. To demonstrate this, in Figure \ref{fig:no_correct}, we display the histogram of the estimated effect across 50 random samples by using the plug-in method. We standardize the estimates by subtracting the mean and then dividing by the standard deviation and plot them against the standard normal distribution. As one can observe the distribution appears multi-modal and deviates from a standard normal distribution. Next, we use our proposed debiased estimator and plot the standardized estimates across 50 samples of draws in Figure \ref{fig:with_correct}. The resultant distribution with the debiased estimator is much closer to a standard normal. Finally, we calculate the 95\% confidence intervals using our debiased estimator across 50 random draws. In Table \ref{tab:inference}, we report the bias (mean absolute error from all draws) and the coverage i.e., the percentage of times the true parameter is covered in the estimated confidence intervals. We find that bias across both data-generating processes (RCL and MNL) is notably low, reflecting only a -0.0001 difference from the true effect. The coverage rate of the 95\% confidence interval in our corrected model is 90\%, indicating good coverage. This shows that our debiased estimator can be used to conduct valid inference in finite samples.

\begin{figure}[t]
\centering
\begin{subfigure}{0.49\textwidth}
\includegraphics[width=\textwidth]{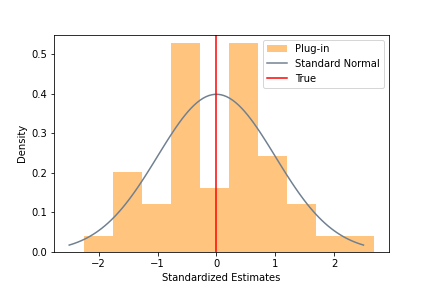}
\caption{Distribution of Estimated Average Effect of Price Change with Plug-in }
\label{fig:no_correct}
\end{subfigure}
\begin{subfigure}{0.49\textwidth}
\includegraphics[width=\textwidth]{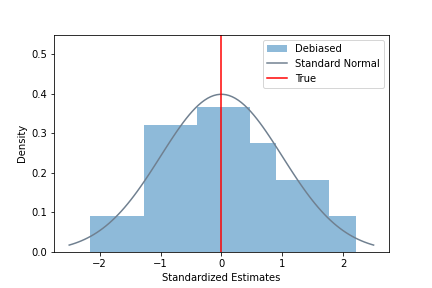}
\caption{Distribution of Estimated Average Effect of Price Change with the Debiased Estimator}
\label{fig:with_correct}
\end{subfigure}
\caption{Distribution of Estimated Average Effect of Price Change}
\label{fig:inference}
\raggedright
\footnotesize{Note: The figure shows the distribution of the standardized plug-in and debiased estimators of the average effect of 1\% change in price on demand. To simulate the data, we consider 3 products across 100 markets with 5 non-price features using  RCL for 50 samples. For each sample, we set the true model parameters to be $\beta_{ik} \sim \mathcal{N}(1, 0.5), \alpha_{i} \sim \mathcal{N}(-1,0.5)$. Figure \ref{fig:no_correct} displays the distribution of the estimated effect with the Plug-in method and Figure \ref{fig:with_correct} shows the result when employing the debiased estimator.  }
\end{figure}

\begin{table}[h!]
    \centering
    \caption{Inference Coverage Analysis}
    \scalebox{0.7}{
    \begin{tabular}{cccc}
    \hline\hline
         True Model & True Effect & Bias   & 95\% CI Cov. \\
    \hline
         RCL &   -0.0013 &   -0.0001  &  90\% \\
         MNL &   -0.0016  &  -0.0001   & 90\% \\
    \hline\hline
    \end{tabular}}
\par
\smallskip % Add a small vertical space
\footnotesize % Use a smaller font size for the notes
\noindent\raggedright
Note: This table presents the bias and coverage rate of 95\% confidence interval using our debiased estimator of the average effect of 1\% change in price on demand.  To simulate the data, we consider 3 products across 100 markets with 5 non-price features using  RCL and MNL for 50 samples of draws. For each RCL sample, we set the true model parameters to be $\beta_{ik} \sim \mathcal{N}(1, 0.5)$, $\alpha_{i} \sim \mathcal{N}(-1,0.5)$.  For each MNL sample, we set the true model parameters to be $\beta_{ik} = 1, \alpha_{i} =-1$.       
    \label{tab:inference}
\end{table} %\hy{Fontsize of this table seems much larger than others.}

\section{Emprical Data Analysis: US Automobile Data (1971 - 1990)}
\label{sec:empirical_blp}

In this section, we apply our model to a real-world dataset. We use the ``US Automobile Data (1971 -- 1990)"  from \citet{berry1995automobile}. The dataset features cars in the US market from 1971 to 1990, with each year regarded as a  market. The number of cars varies from 86 to 150 each year. For each car, the dataset provides information such as the car's name, the manufacturing company, factory region, market share, price, and four exogenous car characteristics: horsepower, space, mileage per dollar, and the presence of an air conditioning device. 

Even though the dataset is relatively small, it presents three key challenges that make it difficult to use naive non-parametric estimators: (i) the dataset features markets with more than 100 products and only 20 markets in total, (ii) the product assortment in each year or market varies; (iii) the feature ``price" has the endogeneity issue, which was not considered in numerical experiments in $\S$\ref{sec:simulation}. In this section, we demonstrate the use of our estimator, which is capable of effectively addressing such challenges posed in real-world datasets. %\hy{Should we also say endogeneity; which was not considered in the numerical simulations?}

For each component of our model ($\phi_1$, $\phi_2$, and $\rho$), we use a standard 3-layer neural network, and this is implemented without further hyperparameter tuning. We implement ReLU activation function at each layer. This architecture is the same as the one we used in our numerical experiments. For comparison, we replicate the random coefficient logit model (with only the demand side) used by \cite{berry1995automobile} using the Python package \texttt{pyblp} \parencite{PyBLP}. In our replication, we allow for heterogeneity in random coefficients across all variables. Our findings show that the estimates obtained from our model are comparable to the random coefficient logit estimation presented in \cite{berry1995automobile}. 

We estimate our model both without and with consideration of endogeneity. To address endogeneity, we utilize three sets of IVs -- (i) the sum of characteristics of all car models, excluding the product in focus, produced by the same firm in the same year; (ii) the sum of characteristics of all car models, excluding the product in focus, produced by rival firms in the same year; and (iii) cost shifters, which encompass the wage and exchange rate prevalent in the year and region where the factory is located. The utilization of traditional BLP-style instruments, as discussed by \textcite{gandhi2019measuring}, can be problematic due to their relative weakness, often resulting in considerable bias in the estimation of parameters. These issues are significantly exacerbated in non-parametric models. Thus, to counter potential concerns related to weak instruments, we employ a machine-learning-based IV methodology (MLIV) as proposed by \textcite{singh2020machine}. We detail the estimation procedure and results using BLP style IVs in Web Appendix \ref{appendix_mliv}.

In Figure \ref{fig:IV_diff}, we present the estimated own-elasticity ($\frac{\partial \hat{\pi}_{jt} / \pi_{jt}  }{\partial p_{jt}/p_{jt}}$) of our model without IV and with IV. The x-axis represents the price of the focal product, while the y-axis shows the product's own-elasticity. Each point corresponds to a product in a market, resulting in 2,217 observations. We report the estimated elasticity based on the same price variation used in the BLP paper (a 1,000-dollar change). In Figure \ref{fig:noIV}, we observe that the majority of low-priced products (priced below 6,000 dollars) exhibit positive estimated own-elasticity, demonstrating the existence of the endogeneity. We also notice that this issue is attenuated when we correct for endogeneity (see Figure \ref{fig:IV_diff}), i.e., which suggests that our approach is able to handle situations with endogenous features. 

\begin{figure}[t]
\centering
\begin{subfigure}{0.49\textwidth}
\includegraphics[width=\textwidth]{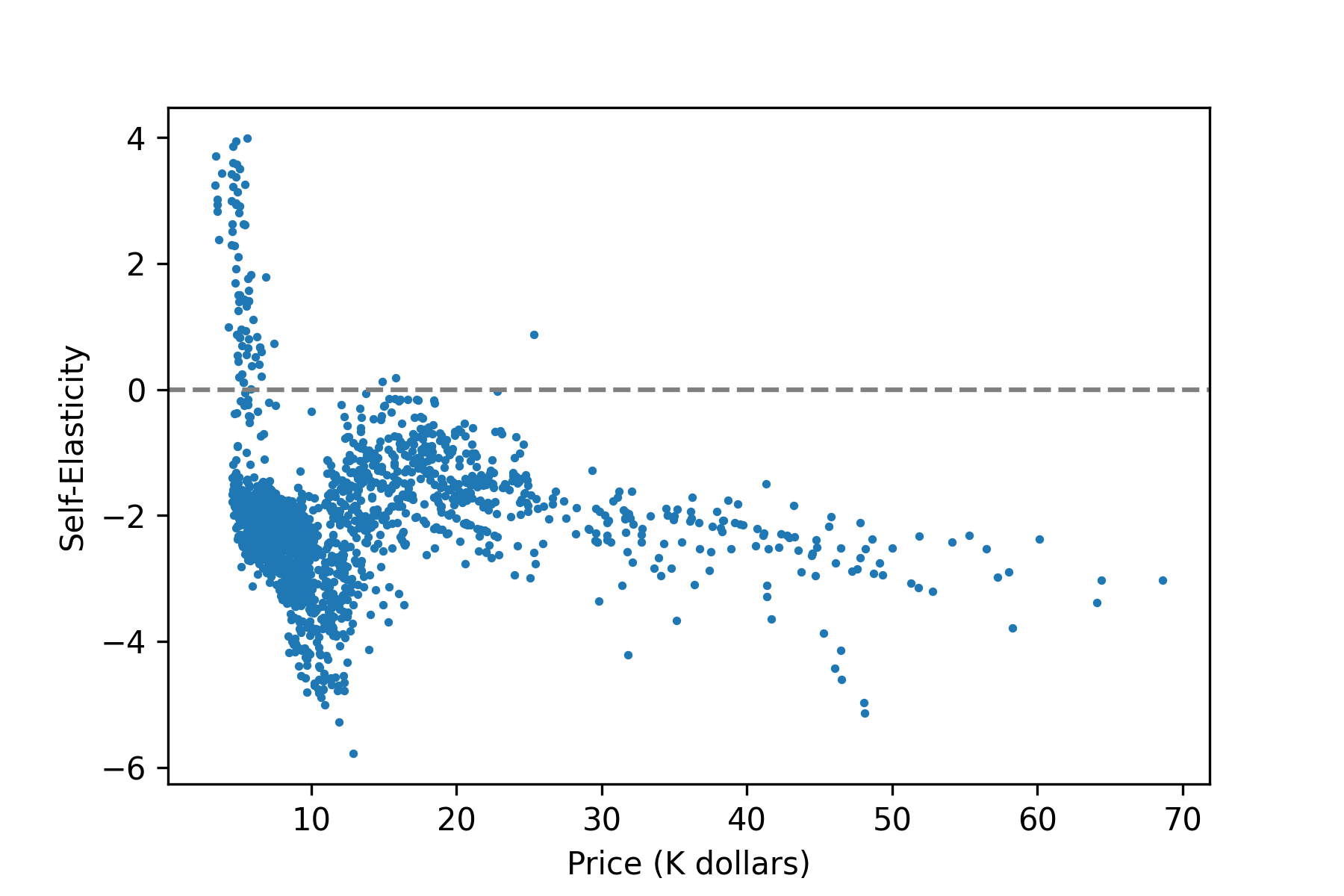}
\caption{Estimated own-Elasticity ($\frac{\partial \hat{\pi}_{jt} / \pi_{jt}  }{\partial p_{jt}/p_{jt}}$)without IV}
\label{fig:noIV}
\end{subfigure}
\begin{subfigure}{0.49\textwidth}
\includegraphics[width=\textwidth]{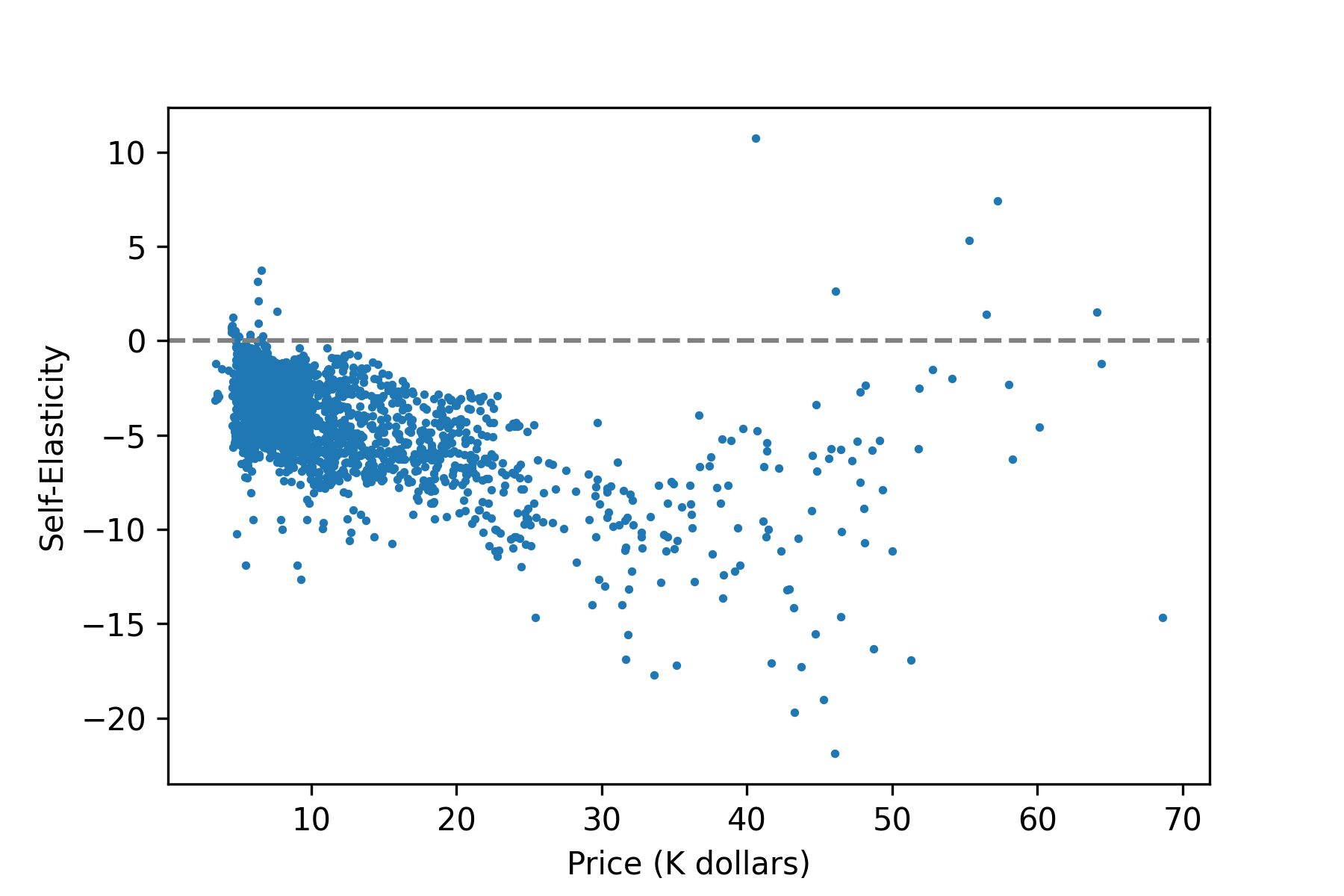}
\caption{Estimated own-Elasticity ($\frac{\partial \hat{\pi}_{jt} / \pi_{jt}  }{\partial p_{jt}/p_{jt}}$) with IV}
\label{fig:withIV}
\end{subfigure}
\caption{Elasticity Estimation Comparison}
\label{fig:IV_diff}
\raggedright
\footnotesize{Figure \ref{fig:IV_diff} presents the estimated own-elasticity ($\frac{\partial \hat{\pi}_{jt} / \pi_{jt}  }{\partial p_{jt}/p_{jt}}$) of our model without IV and with IV. The x-axis represents the price of the focal product, while the y-axis shows the products' own-elasticities. Each point corresponds to a product in a market, resulting in 2,217 observations. We report the estimated elasticity based on the same price variation used in the BLP paper (a 1,000-dollar change). Although we cannot ascertain the true value of own-elasticity, it is widely accepted that own-elasticity should generally be negative for most, if not all, products. In Figure \ref{fig:noIV}, we observe that the majority of low-priced products (priced below 5,000 dollars) exhibit positive estimated own-elasticity, demonstrating the existence of the endogeneity.  %\hy{The note under Figure 4 is centered. Fix it.}
}
\end{figure}

We report the own-elasticity ($\frac{\partial \hat{\pi}_{jt} / \pi_{jt}  }{\partial p_{jt}/p_{jt}}$)  and cross-elasticity ($\frac{\partial \hat{\pi}_{jt} / \pi_{jt}  }{\partial p_{k\neq j t}/  p_{k\neq j t} }$) estimated in our model and random coefficient logit model with a sample of 13 cars in the 1990 market in Tables \ref{tab:blp_our} and \ref{tab:blp_blp}. The sample of 13 cars is the same as the one reported in \cite{berry1995automobile}. Overall, our results are very similar and comparable to \cite{berry1995automobile}. We also plot the distributions of the estimated own-elasticity ($\frac{\partial \hat{\pi}_{jt} / \pi_{jt}  }{\partial p_{jt}/p_{jt}}$)  and cross-elasticity ($\frac{\partial \hat{\pi}_{jt} / \pi_{jt} }{\partial p_{k\neq j t}/  p_{k\neq j t} }$) obtained from our model and the BLP model in Figure \ref{fig:blp_elasticity}. The filled areas in the violin plots represent the complete range of the elasticities, while the text labels next to the line indicate the mean values. The estimated mean own- and cross-elasticities appear to be similar between our model and the BLP model, though our model exhibits a larger RMSE in the estimated elasticity values compared to the BLP model.

\begin{sidewaystable*}
    \centering
\scalebox{0.7}{
    \begin{tabular}{lccccccccccccc}
\hline
{} &  Acura  &  BMW  &  Buick  &  Cadillac  &  Chevy  &  Ford &  Ford  &  Honda  &  Lexus  &  Lincoln  &  Mazda  &  Nissan  &  Nissan \\
{} &  Legend &   735i &  Century &  Seville &  Cavalier & Escort &  Taurus &  Accord &  LS400 &  Town Car & 323 &   Maxima &  Sentra \\
\hline
Acura Legend     &       -5.6060 &    0.1993 &         0.2198 &            0.2221 &          0.0632 &       0.2317 &       0.2199 &        0.2337 &       0.2144 &            0.2187 &     0.2595 &         0.2354 &         0.2595 \\
BMW 735i         &        0.4095 &   -6.1528 &         0.3653 &            0.4093 &          0.0352 &       0.3525 &       0.3655 &        0.3807 &       0.4120 &            0.4084 &     0.3547 &         0.4161 &         0.3547 \\
Buick Century    &        0.1234 &    0.1020 &        -6.1023 &            0.1213 &          0.0376 &       0.1294 &       0.1215 &        0.1205 &       0.1175 &            0.1191 &     0.1400 &         0.1299 &         0.1400 \\
Cadillac Seville &        0.2895 &    0.2179 &         0.2631 &           -7.2896 &          0.0647 &       0.2692 &       0.2631 &        0.2566 &       0.2810 &            0.2818 &     0.2892 &         0.2849 &         0.2892 \\
Chevy Cavalier   &        0.0142 &   -0.0013 &         0.0202 &            0.0167 &         -1.3447 &       0.0171 &       0.0202 &        0.0353 &       0.0126 &            0.0167 &     0.0354 &         0.0291 &         0.0355 \\
Ford Escort      &        0.0410 &    0.0245 &         0.0353 &            0.0413 &         -0.0148 &      -1.8519 &       0.0353 &        0.0520 &       0.0392 &            0.0413 &     0.0494 &         0.0513 &         0.0494 \\
Ford Taurus      &        0.1166 &    0.0914 &         0.1188 &            0.1160 &          0.0183 &       0.1199 &      -6.1473 &        0.1258 &       0.1066 &            0.1157 &     0.1451 &         0.1290 &         0.1451 \\
Honda Accord     &        0.0975 &    0.0647 &         0.0968 &            0.1006 &         -0.0003 &       0.0945 &       0.0969 &       -5.7438 &       0.0914 &            0.1006 &     0.1166 &         0.1140 &         0.1165 \\
Lexus LS400      &        0.3357 &    0.2606 &         0.3136 &            0.3325 &          0.0822 &       0.3235 &       0.3137 &        0.3126 &      -6.8791 &            0.3271 &     0.3495 &         0.3348 &         0.3494 \\
Lincoln Town Car &        0.2681 &    0.2310 &         0.2548 &            0.2656 &          0.0713 &       0.2663 &       0.2548 &        0.2648 &       0.2586 &           -5.3996 &     0.3009 &         0.2719 &         0.3009 \\
Mazda 323        &        0.0361 &    0.0212 &         0.0272 &            0.0326 &         -0.0127 &       0.0249 &       0.0272 &        0.0363 &       0.0323 &            0.0326 &    -2.6589 &         0.0404 &         0.0357 \\
Nissan Maxima    &        0.1579 &    0.1367 &         0.1589 &            0.1555 &          0.0425 &       0.1670 &       0.1589 &        0.1689 &       0.1484 &            0.1534 &     0.1884 &        -7.2216 &         0.1884 \\
Nissan Sentra    &        0.0386 &    0.0239 &         0.0304 &            0.0384 &         -0.0202 &       0.0294 &       0.0304 &        0.0496 &       0.0375 &            0.0383 &     0.0439 &         0.0470 &        -1.8754 \\
\hline
\end{tabular}
}
    \caption{Estimated own- and cross-elasticities of a sample of automobile data using our method}
    \label{tab:blp_our}
\par
\smallskip % Add a small vertical space
\footnotesize % Use a smaller font size for the notes
\noindent
\raggedright

Note: This table presents the estimated own- and cross-elasticity of a sample of 13 cars in the 1990 market using our model. The selected cars are the same as \cite{berry1995automobile} reports. Each entry with row index $i$ and column index $j$ gives the percentage change in demand divided by the percentage change in price (based on  \$1,000 change in the price of $i$).  

\end{sidewaystable*}

\begin{sidewaystable*}
    \centering
\scalebox{0.7}{
    \begin{tabular}{lccccccccccccc}
\hline
{} &  Acura  &  BMW  &  Buick  &  Cadillac  &  Chevy  &  Ford &  Ford  &  Honda  &  Lexus  &  Lincoln  &  Mazda  &  Nissan  &  Nissan \\
{} &  Legend &   735i &  Century &  Seville &  Cavalier & Escort &  Taurus &  Accord &  LS400 &  Town Car & 323 &   Maxima &  Sentra \\
\hline
Acura Legend     &       -5.4677 &    0.0489 &         0.0205 &            0.1029 &          0.0221 &       0.0220 &       0.0143 &        0.1477 &       0.1503 &            0.0273 &     0.0013 &         0.1359 &         0.0039 \\
BMW 735i         &        0.1267 &   -9.8502 &         0.0156 &            0.1058 &          0.0122 &       0.0121 &       0.0057 &        0.1313 &       0.1546 &            0.0278 &     0.0006 &         0.1375 &         0.0022 \\
Buick Century    &        0.0184 &    0.0054 &        -5.1978 &            0.0124 &          0.1153 &       0.1043 &       0.1475 &        0.1982 &       0.0165 &            0.0800 &     0.0076 &         0.0379 &         0.0174 \\
Cadillac Seville &        0.1293 &    0.0513 &         0.0175 &           -6.6819 &          0.0151 &       0.0150 &       0.0099 &        0.1393 &       0.1576 &            0.0271 &     0.0008 &         0.1406 &         0.0027 \\
Chevy Cavalier   &        0.0131 &    0.0028 &         0.0766 &            0.0071 &         -3.1163 &       0.1421 &       0.0849 &        0.2608 &       0.0086 &            0.0404 &     0.0100 &         0.0395 &         0.0241 \\
Ford Escort      &        0.0137 &    0.0029 &         0.0726 &            0.0074 &          0.1487 &      -3.0590 &       0.0603 &        0.2781 &       0.0090 &            0.0263 &     0.0106 &         0.0419 &         0.0258 \\
Ford Taurus      &        0.0048 &    0.0007 &         0.0554 &            0.0027 &          0.0479 &       0.0326 &      -4.0258 &        0.0727 &       0.0017 &            0.1779 &     0.0026 &         0.0122 &         0.0057 \\
Honda Accord     &        0.0387 &    0.0133 &         0.0582 &            0.0291 &          0.1151 &       0.1173 &       0.0568 &       -4.3399 &       0.0409 &            0.0297 &     0.0081 &         0.0618 &         0.0200 \\
Lexus LS400      &        0.1350 &    0.0536 &         0.0166 &            0.1126 &          0.0130 &       0.0130 &       0.0045 &        0.1400 &      -7.4316 &            0.0243 &     0.0006 &         0.1464 &         0.0024 \\
Lincoln Town Car &        0.0087 &    0.0034 &         0.0286 &            0.0069 &          0.0217 &       0.0135 &       0.1693 &        0.0362 &       0.0086 &           -5.6139 &     0.0011 &         0.0123 &         0.0024 \\
Mazda 323        &        0.0114 &    0.0020 &         0.0743 &            0.0056 &          0.1476 &       0.1494 &       0.0679 &        0.2723 &       0.0063 &            0.0304 &    -2.8631 &         0.0390 &         0.0254 \\
Nissan Maxima    &        0.1008 &    0.0393 &         0.0314 &            0.0831 &          0.0493 &       0.0500 &       0.0271 &        0.1749 &       0.1209 &            0.0286 &     0.0033 &        -4.7872 &         0.0086 \\
Nissan Sentra    &        0.0140 &    0.0031 &         0.0707 &            0.0078 &          0.1471 &       0.1504 &       0.0617 &        0.2763 &       0.0095 &            0.0271 &     0.0105 &         0.0422 &        -3.1799 \\
\hline
\end{tabular}
}
    \caption{Estimated own and cross-elasticities of a sample of automobile data using the BLP model}
    \label{tab:blp_blp}

\par
\smallskip % Add a small vertical space
\footnotesize % Use a smaller font size for the notes
\noindent
\raggedright
Note: This table presents the estimated own- and cross-elasticity of a sample of 13 cars in the 1990 market using the BLP model. The selected cars are the same as \cite{berry1995automobile} reports. Each entry with row index $i$ and column index $j$ gives the percentage change in demand divided by the percentage change in price (based on  \$1,000 change in the price of $i$).  
\end{sidewaystable*}

\begin{figure}[h]
\centering
\begin{subfigure}{0.49\textwidth}
\includegraphics[width=\textwidth]{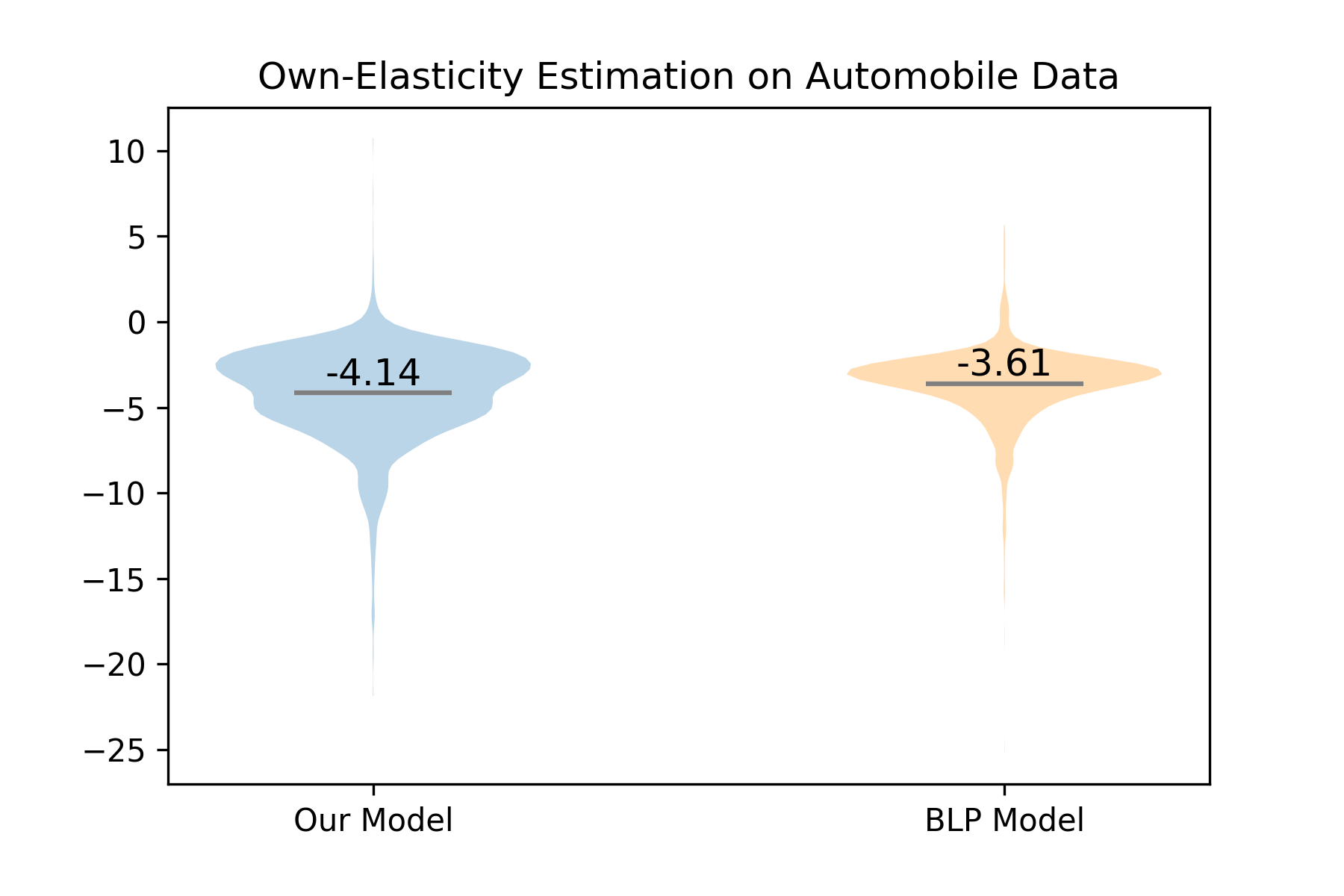}
\caption{Own-Elasticity Estimation (Our Model vs. BLP Model)}
\label{fig:blp_own_elas}
\end{subfigure}
\begin{subfigure}{0.49\textwidth}
\includegraphics[width=\textwidth]{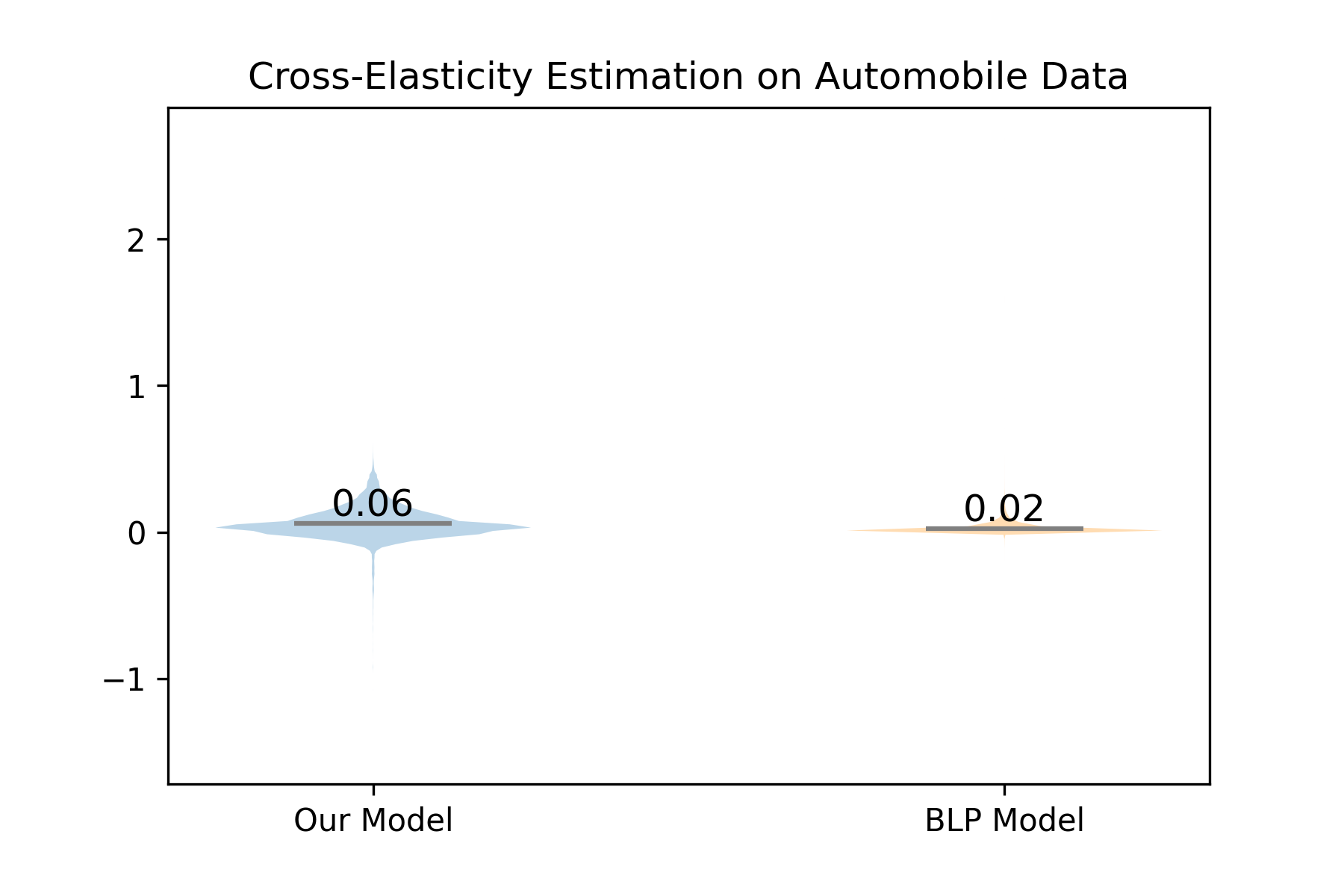}
\caption{Cross-Elasticity Estimation (Our Model vs. BLP Model)}
\label{fig:blp_cross_elas}
\end{subfigure}
\caption{Elasticity Estimation Comparison}
\label{fig:blp_elasticity}

\raggedright
\footnotesize{Note: Figure \ref{fig:blp_elasticity} illustrates the distributions of the estimated own- and cross-elasticities obtained from our model and the BLP model. The filled areas in the violin plots represent the complete range of the elasticities, while the text labels indicate the mean values. }
\end{figure}

We further estimate the average own-elasticity ($\hat{\theta}$) for high-priced, medium-priced, and low-priced cars and construct a confidence interval for each category using our inference procedure. We present our result in Table \ref{tab:inf_blp}. Both our model and the BLP model indicate that the average own-elasticity is highest (in terms of the absolute value of own-elasticity) for high-priced cars and lowest for low-priced cars.  Moreover, the 95\% confidence intervals for all three categories do not include zero. This also demonstrates the efficiency of our model even when there is a limited sample of only 20 observations.

%\hy{Add some discussion}

\begin{table}[h]
    \centering\scalebox{0.7}{
    \begin{tabular}{lccr}
    \hline\hline
     & BLP Model & Our Model & No. Obs.  \\
     & Mean Estimate &  Mean Estimate (95\% Confidence Interval) &\\
     \hline
     High &  -5.6705  & -4.1922 (-6.2204, -2.1641) & 20\\
     Medium    & -3.7354 & -3.7215 (-4.9260, -2.5169) & 20   \\
     Low &  -2.9174  & -2.0697 (-3.1980, -0.9415) & 20\\
     \hline\hline
    \end{tabular}}
    \caption{Estimates of Average Own-Elasticity}
    \label{tab:inf_blp}
\par
\smallskip % Add a small vertical space
\footnotesize % Use a smaller font size for the notes
\noindent
\raggedright
Note: This table presents the estimated average own-elasticity for cars across various price categories. We categorize cars with a price over \$20k as ``high-priced", cars priced between \$8k and \$20k as ``medium-priced", and all other cars as ``low-priced". For each category, we randomly select one car of the category from each market as one observation. For the BLP model, we calculate the average own-elasticity of the sampled cars as the mean estimate. For our model, we estimate the average own-elasticity and construct the confidence interval following our inference procedure. 
\end{table}

The empirical analysis demonstrates the applicability and effectiveness of our model in a real-world setting, addressing challenges such as limited sample size, variability in product assortments, and endogeneity. The comparable results with established econometric models, such as BLP model, help in validating the robustness and reliability of our approach.

\section{Conclusion} 
Choice models are fundamental in understanding consumer behavior and informing business decisions. Over the years, various methods, both parametric and non-parametric, have been developed to represent consumer behavior.  While parametric methods, such as logit or probit-based models, are favored for their simplicity and interpretability, their restrictive assumptions can limit their ability to fully capture consumer preferences' intricacies. On the other hand, non-parametric methods offer a more flexible approach, but they often suffer from the ``curse of dimensionality", where the complexity of estimating choice functions escalates exponentially with an increase in the number of products.

In this paper, we propose a fundamental characterization of choice models that combines the tractability of traditional choice models and the flexibility of non-parametric estimators. This characterization specifically tackles the challenge of high dimensionality in choice systems and facilitates flexible estimation of choice functions. Through extensive simulations, we validate the efficacy of our model, demonstrating its superior ability to capture a range of consumer behaviors that traditional choice models fail to capture.  We also show how to address the endogeneity issue and estimate counterfactuals in our characterization. 
Furthermore,  leveraging the recent strides in the automatic debiased machine learning literature, we offer an inference procedure that constructs confidence intervals on relevant objects, such as price elasticities. Finally, we apply our method to the automobile dataset from \cite{berry1995automobile}. Our empirical analysis affirms that our model produces results that align well with the extant literature.

Our paper opens many avenues for future research. We focus on using neural network-based estimators. However, estimators, such as Gaussian processes and Gradient boosting-based estimators can be adopted to estimate the proposed functionals. Also, we believe more experimentation needs to be done on the neural network design side.

\section*{Competing Interests Declaration}
Author(s) have no competing interests to declare.

%Another potential future direction could be exploring the connection between transformer networks \parencite{vaswani2017attention} and set functions as the attention mechanism used in transformer architectures has a very similar functional form.  

% \bibliographystyle{plainnat}
\setlength{\bibsep}{0.3em}

\bibliography{ref}
\newpage

% %%%%%%%%%%%%%%%%%%%%%%%%%%%%%%%%%%%%%%%%%%%%%%%%%%%%%%%%%%%%

\newpage

\begin{appendices}

\setcounter{table}{0}
\setcounter{figure}{0}
\setcounter{equation}{0}
\setcounter{page}{0}
\renewcommand{\thetable}{A\arabic{table}}
\renewcommand{\thefigure}{A\arabic{figure}}
\renewcommand{\theequation}{A\arabic{equation}}
\renewcommand{\thepage}{\roman{page}}
\pagenumbering{roman}

\singlespacing

%\noindent We provide proofs for main results in Appendix A and auxiliary lemmas in Appendix B.

% Optionally include extra information (complete proofs, additional experiments and plots) in the appendix.
% This section will often be part of the supplemental material.

\section{Identity Independence and Permutation Invariance of Traditional Choice Models}
\label{sec:appendix_per_inv}

In this Web Appendix, we show why choice models listed in Table \ref{tbl:choice_models} satisfy the identity independence and the permutation invariance assumption. 
Identity independence is achieved when all the information about a product utilized in the model is included in its features. To show that these models satisfy the permutation invariance assumption, we first permutate any two competitor products associated with a focal product. We then illustrate that such permutations do not alter the demand for the focal product, thereby establishing the models' compliance with the permutation invariance assumption. 

Since all models require parametric assumptions about the utility function, for the sake of simplicity in notation, we assume that the utility of a product is a parametric function of its observed features $X_{jt}$ and price $p_{jt}$, determined by a set of parameters $\beta $ unless otherwise specified. Mathematically, we express the utility as:
\[
v_{jt} = f(X_{jt}, p_{jt};\beta)
\]

\subsection{MNL}

Assuming there are $J$ products, the market share of each product under the MNL model is

\begin{equation}
    \pi_{jt} = g(X_{jt}, p_{jt} , \{X_{kt}, p_{kt}\}_{k \in S, k \neq j}) = \frac{exp( f(X_{jt}, p_{jt};\beta))}{exp( f(X_{jt}, p_{jt};\beta) + \sum_{k \in \mathcal{S}, k \neq j } exp(f(X_{kt}, p_{kt};\beta))}.
\end{equation}
If we permute two products $k_1$ and $k_2$ in the set $\mathcal{S} \backslash \{j\}$, the only change is in the order of terms in the summation in the denominator of the market share.  Since addition is commutative, the sum $\sum_{k \in \mathcal{S}, k \neq j } exp(f(X_{kt}, p_{kt};\beta))$ remains the same regardless of the order of $k_1$ and $k_2$. Therefore, the value of $\pi_{jt}$ remains unchanged, demonstrating that the MNL model satisfies permutation invariance.

\subsection{Mixed Logit}

Assuming there are $J$ products, the market share of each product under the RCL model is 
\begin{equation}
    \pi_{jt} = g(X_{jt}, \{X_{kt}\}_{k \in S, k \neq j}) = \int \frac{exp( f(X_{jt}, p_{jt};\beta_i))}{exp( f(X_{jt}, p_{jt};\beta_i) + \sum_{k \in \mathcal{S}, k \neq j } exp(f(X_{kt}, p_{kt};\beta_i))}  d F(\beta_i)
\end{equation}
, where  $F(\beta_i)$ is the CDF of random coefficient $\beta_i$. Similar to MNL, if we permute the position of any other two products $k_1, k_2  \in \mathcal{S} \backslash \{j\} $, it only affects the sequence of the summation in the denominator,  thus the demand of product $j$ remains the same. Therefore, it satisfies the permutation invariance. 

\subsection{Nested Logit}

In a nested logit model, products are grouped into nests to account for the correlation in unobserved factors among products within the same nest. Consider a scenario where there are $J$ products categorized into $L$ nests. The market share of a product $j$ in a nest $l$ is calculated taking into account this nesting structure. A key aspect of showing permutation invariance in the Nested Logit model is to include the nest affiliation of each product as a feature. This is represented by an $L$-dimensional 0-1 vector $N_{jt} \in \mathbb{R}^ {1 \times L}$, which denotes the affiliation of product $jt$ to one of the $L$ nests. Each element of the vector corresponds to a specific nest, indicating the product's membership in that nest. Specifically,
\begin{equation}
    \pi_{jt} = g(X_{jt}, p_{jt}, N_{jt}, \{X_{kt}, p_{kt}, N_{kt}\}_{k \in S, k \neq j}).
\end{equation}
Then, the Nested Logit model can be represented as:
\begin{equation}
    \pi_{jt} = P(jt|l) P(l)  
\end{equation}
\begin{equation}
    P(j|l) =  \frac{\exp\left(\frac{ v_{jt}}{N_{jt} \sigma }\right)}{\sum_{k \in \mathcal{N}_l}\exp\left(\frac{ v_{kt}}{N_{jt} \sigma }\right)} 
\end{equation}
\begin{equation}
     P(l) = \frac{\exp\left(N_l \sigma \log\left(\sum_{k \in \mathcal{N}_l}\exp\left(\frac{v_{kt}}{N_{kt} \sigma}\right)\right)\right)}{\sum_{m=1}^{L}\exp\left(N_{m} \sigma  \log\left(\sum_{k \in \mathcal{N}_m}\exp\left(\frac{v_{kt}}{N_{m} \sigma}\right)\right)\right)}
\end{equation}
Where:
\begin{itemize}
    \item $ \mathcal{N}_l $ is the set of products in nest $ l $.
    \item  $\sigma \in R^{L \time 1} $ is an L-dimentional vector where the $l$th  element represents the scale parameter for nest $l$ which captures the correlation in unobserved utilities among products in the same nest.  Therefore, the scale parameter of product $jt$, which belongs to nest $l$, can be expressed as $N_{jt} \sigma$. 
    \item $N_m$ denotes a binary vector of dimension $L$, in which the $m$th element is one and all other elements are zero. Therefore, $N_m \sigma$ is the scale parameter of the nest $m$.
\end{itemize}

It is intuitive that when we swap the position of any two products, the probability of choosing a nest ($P(l)$) would not change because both the denominator and the numerator only depend on the nest affiliations. Moreover, $P(j|l)$ also satisfies permutation invariance because of the additive nature of the denominator in its formula, where the sum over all products in nest $l$ remains constant despite any permutation of product positions within the nest. The permutation invariance property could be easily extended to that random coefficient nested logit following the same logic as random coefficient logit.

\subsection{Latent Class Logit Model}
In a latent class logit model, products can be divided into multiple latent classes and each latent class has a distinct set of parameters. Assume there are $c$ latent classes, the utility of a product in class $c$ is denoted by $v_{jt,c} = f(X_{jt}, p_{jt}; \beta_c)$. The probability of one product belonging to a class $c$ is determined by the linear combination of $Z_{jt}$ with parameters $\alpha_c$. 

\begin{equation}
\begin{aligned}
    \pi_{jt} &= g(X_{jt}, p_{jt}, Z_{jt}, \{X_{kt}, p_{kt}, Z_{kt}\}_{k \in S, k \neq j}) \\
    &= \sum_c  \frac{\exp(f(X_{jt}, p_{jt}; \beta_c))}{\exp(f(X_{jt}, p_{jt}; \beta_c) )+ \sum_{k \in S, k \neq j} \exp(f(X_{kt}, p_{kt}; \beta_c)}  \frac{\exp(\alpha_c Z_{jt})}{\sum_c \exp(\alpha_c Z_{jt})} 
\end{aligned}
\end{equation}

Similarly, the $\sum_{k \in S, k \neq j} \exp(f(X_{kt}, p_{kt}; \beta_c)$ is commutative thus the demand remains the same when we alter the order of any two products except $jt$. Therefore, latent class models satisfy the permutation invariance.

\subsection{Consumer Inattention and Search Model} 

In this part, we specifically demonstrate the permutation invariance property of the consumer inattention model that we used in the numerical experiments. Our analysis could easily be extended to other models based on consumer inattention in the literature, such as \cite{goeree2008limited}, \cite{turlo2023discrete}, \cite{compiani2022market} and \cite{joo2023rational}, as well as search models, such as \cite{mehta2003price}. 

In our simulation, we generate the market shares of products by considering a segment of consumers who ignore the highest-priced product. Here we generalize the specification by assuming the portion of the inattentive consumers is a function of observed features of the highest-priced product $1 - s(X_{ht})$, where $h$ denotes the highest-priced product. A consumer forms their consideration set with the products they pay attention to. For products in their consideration set, they use the logit function as the decision rule. We allow for each consumer to have a set of random coefficients. Then the choice probability of the highest-priced product becomes 
\begin{equation}
    \pi_{ht} =  g(X_{ht}, \{X_{kt}\}_{k\in \mathcal{S}, k\neq h}) =  \frac{s(X_{ht})exp(\beta_i X_{ht})}{f(X_{jt}, p_{jt};\beta_i) + \sum_{k \in \mathcal{S}, k \neq h} f(X_{kt}, p_{kt};\beta_i)}
\end{equation}
When we permutate the order of any two competitor products, the choice probability of the highest-price product remains the same. The choice probability of other products can be expressed as 
\begin{equation}
\begin{aligned}
    \pi_{jt} &= g(X_{jt}, \{X_{kt}\}_{k\in \mathcal{S}, k\neq j}) \\
    &= \frac{s(X_{ht}) exp(f(X_{kt}, p_{jt};\beta_i))}{exp(f(X_{kt}, p_{jt};\beta_i)) + \sum_{k \in \mathcal{S}, k \neq j} f(X_{kt}, p_{kt};\beta_i)} + \frac{(1-s(X_{ht}))  exp(f(X_{kt}, p_{jt};\beta_i))}{exp(f(X_{kt}, p_{jt};\beta_i)) + \sum_{k \in \mathcal{S}, k \neq j, h} f(X_{kt}, p_{kt};\beta_i)} 
\end{aligned}
\end{equation}
Similar to the highest-priced product, the choice probability for other products remains unaffected when the order of competitor products is altered. This is because the feature values of the most expensive product $X_{h}$ do not change even though the index $h$ may change. Therefore, consumer inattention models are also permutation invariant.

\subsection{Other Models}

We have illustrated that the Multinomial Logit model, Nested Logit model, Mixed Logit model, and Latent Class Logit model all exhibit the property of permutation invariance. It becomes apparent that other models, which employ a similar random utility framework but differentiate in the probability distribution of the error term, likewise possess this permutation invariance property.

For example, this permutation invariance property extends Generalized Extreme Value (GEV) models \citep{train2009discrete}, where the error term follows the generalized extreme value distribution, and the Probit model \citep{hausman1978conditional, chintagunta2001endogeneity}, where the error term follows a normal distribution. Although there is no straightforward closed-formed representation of choice probability for these models, the choice probability can be expressed as
\begin{equation}\label{eq:rum}
    \pi_{ijt} = \operatorname{Pr}(u_{ijt} \geq u_{ikt}, \forall k \neq j, k \in \mathcal{S}_t),
\end{equation}
where $u_{ijt}$ represents the consumer $i$'s utility of product $j$ in market $t$. Since $u_{ikt}$ for any $k \in \mathcal{S}_t$ is parameterized by its own feature ($X_{kt}$), permuting any two $k \neq j, k \in \mathcal{S}_t$, does not affect the choice probability of $\pi_{ijt}$ as well as the aggregate demand $\pi_{jt}$. Indeed, this shows that any models building on the random utility framework with the decision rule as stated in Equation \ref{eq:rum} are permutation invariant.

The permutation invariance property also extends to other recently developed choice models, such as Markov Chain Choice model\citep{blanchet2016markov}. The choice probability of a product $jt$ is modeled as a Markov Chain, which consists of two parts: arrival probability $\lambda_{jt}$ and transition probability $p_{jk,t}$, defined as
\begin{align}
    \lambda_{jt} &= f(j, \mathcal{S}_t), \\
    p_{jk,t} &= \begin{cases}
    1, & \text{if } j = 0 \text{ and } k = 0, \\
    \frac{f(j, \mathcal{S}_t \setminus \{j\}) - f(k, \mathcal{S}_t)}{f(j, \mathcal{S}_t)}, & \text{if } j, k \in \mathcal{S}_t, k \neq j, \\
    0, & \text{otherwise.}
    \end{cases}
\end{align}

Given that both $\lambda_{jt}$ and $p_{jk,t}$ only rely on $jt$, $\mathcal{S}_t$ and $\mathcal{S}_t \setminus \{j\}$, permuting the order of any two products that are not the focal product $jt$ in the market $t$ does not change both arrival probability and transition probability. Therefore, the choice probability does not change.

\section{Proof of Main Results} \label{sec:appendix_proof_thrm1}
%\renewcommand{\theequation}{A-\arabic{equation}} 

% reset equation numbering 

%\addcontentsline{toc}{section}{Appendix A}

\mainthm*

Proof. The sufficiency follows by observing that the function $\pi_{jt} = \rho (\phi_1 (X_{jt}, p_{jt}) + \sum_{k \in S \setminus \{j\}} \phi_2 (X_{kt}, p_{kt}))$ satisfies assumption  2 and 3. To prove necessity, first consider 
\(\mathbb{E} = \{2n \mid n \in \mathbb{N}\}\)
 and \(\mathbb{O} = \{2n + 1 \mid n \in \mathbb{N}\}\) as the set of all even and odd natural numbers, respectively. Next, to show that all choice functions satisfying assumptions \ref{assumpt_id} (identity independence) and \ref{assumpt_perminv} (permutation invariance) can be decomposed in the above manner, we begin by noting that there must be a mapping from the elements to the set of even number and odd numbers respectively, since the elements belong to a countable universe $\mathbb{C}^{k}$. Let these mappings be denoted by $c^{e}: \mathbb{C}^{k} \rightarrow \mathbb{E}$ and $c^{o}: \mathbb{C}^{k} \rightarrow \mathbb{O}$. Now if we let $\phi_1(X_{jt})=4^{-c^{e}(X_{jt}, p_{jt})}$ and $\phi_2(X_{kt}, p_{kt}) = 4^{-c^{o}(X_{kt}, p_{kt})}$ then $\phi_1(X_{jt}, p_{jt}) + \sum_{k \in \mathcal{S}_t \setminus \{j\}} \phi_2(X_{kt}, p_{kt})$ constitutes an unique representation for every product $j$ and competing assortment $\mathcal{S}_t \setminus \{j\}$. Now a function $\rho: \mathbb{R} \rightarrow \mathbb{R}$ can always be constructed such that $\pi_{jt}=\rho\left(\phi_1 (X_{jt}, p_{jt}) + \sum_{k \in S \setminus \{j\}} \phi_2 (X_{kt}, p_{kt})\right)$.

% {\color{red} briefly talk about infinte sets, injectivity and multiset.}

\mainprop*

\begin{proof}
    To show the asymptotic normality we will first verify the Assumptions 1-3 of \cite{chernozhukov2022locally}, from now on CEINR, with $g(w, \pi(z;\gamma), \theta) = m(w, \pi(z;\gamma)) - \theta$ and $\phi(w,\pi(z;\gamma),\alpha(z), \theta) = \alpha(z)\cdot(y - \pi(z;\gamma))$. Using Taylor series expansion, Assumption \ref{assmp-2} and $||\hat{\pi}(z;\gamma)-\pi_0(z;\gamma)|| \xrightarrow[]{p} 0$ we have,

\begin{align}
& \int ||g(w, \hat{\pi}(z_i;\hat{\gamma}), \theta_0) - g(w, \pi_0(z_i;\gamma_0), \theta_0)||^2\mathcal{P}_0(dw) \nonumber\\
& = \int ||m(w, \hat{\pi}(z_i;\hat{\gamma})) - m(w, \pi_0(z_i;\gamma_0))||^2\mathcal{P}_0(dw) \nonumber\\
& \le C\int ||\hat{\pi}(z_i;\hat{\gamma})-\pi_0(z_i;\gamma_0)||^2\mathcal{P}_0(dw) \nonumber\\
& \le C\int ||\hat{\pi}(z_i;\hat{\gamma})- \hat{\pi}(z_i;\gamma_0) + \hat{\pi}(z_i;\gamma_0) -
\pi_0(z_i;\gamma_0)||^2\mathcal{P}_0(dw) \nonumber\\
\text{{By the triangle inequality}} \nonumber\\
& \le C\int||\hat{\pi}(z_i;\hat{\gamma})-\hat{\pi}(z_i;\gamma_0)||^2\mathcal{P}_0(dw) \nonumber\\
& \quad + C\int||\hat{\pi}(z_i;\gamma_0)-\pi_0(z_i;\gamma_0)||^2\mathcal{P}_0(dw) \nonumber\\
& \quad + C\int||\hat{\pi}(z_i;\hat{\gamma})-\hat{\pi}(z_i;\gamma_0)||||\hat{\pi}(z_i;\gamma_0)-\pi_0(z_i;\gamma_0)||\mathcal{P}_0(dw)  \xrightarrow[]{p} 0
\end{align}

The first term converges in probability to 0 by Taylor series expansion.

Also by Assumption \ref{assmp-1} i) and ii), and
as just showed $||\hat{\pi}(z;\hat{\gamma}) -\pi_0(z;\gamma_0)|| \xrightarrow[]{p} 0$,    
    \begin{equation}
        \begin{split}
            \int ||\phi(w,\hat{\pi}(z;\hat{\gamma}),\alpha_0, \theta_0) - \phi(w,\pi_0,\alpha_0, \theta_0)||^2\mathcal{P}_0(dw) & = \int ||\alpha_0(z)(\pi_0(z;\gamma_0)-\hat{\pi}(z;\hat{\gamma}))||^2\mathcal{P}_0(dw) \\
            & \le C\int ||(\pi_0(z;\gamma_0)-\hat{\pi}(z;\hat{\gamma}))||^2\mathcal{P}_0(dw) \\
            & \le C||\hat{\pi}(z;\hat{\gamma})-\pi_0(z;\gamma_0)||^2 \xrightarrow[]{p} 0
        \end{split}
    \end{equation}

Also by Assumption \ref{assmp-1} i) and $||\hat{\alpha} -\alpha_0|| \xrightarrow[]{p} 0$, we have, 
\begin{equation}
        \begin{split}
            \int ||\phi(w,\pi_0(z;\gamma_0),\hat{\alpha}, \Tilde{\theta}) - \phi(w,\pi_0(z;\gamma_0),\alpha_0, \theta_0)||^2\mathcal{P}_0(dw) & = \int ||\left(\hat{\alpha}(z) -\alpha_0(z)\right)\left(y-\pi_0(z;\gamma_0)\right)||^2\mathcal{P}_0(dw) \\
            & \le C\int ||\hat{\alpha}-\alpha_0||^2\mathcal{P}_0(dw) \\
            & \le C||\hat{\alpha}-\alpha_0||^2 \xrightarrow[]{p} 0
        \end{split}
    \end{equation}

This satisfies Assumption 1 parts i), ii), and iii) of CEINR.

Next, consider 
$$
\begin{aligned}
\hat{\Delta}(w) &:=\phi\left(w, \hat{\pi}(z;\hat{\gamma}), \hat{\alpha}, \tilde{\theta}\right)-\phi\left(w, f_{0}, \hat{\alpha}, \tilde{\theta}\right)-\phi\left(w, \hat{\pi}(z;\hat{\gamma}), \alpha_{0}, \theta_{0}\right)+\phi\left(w, f_{0}, \alpha_{0}, \theta_{0}\right) \\
&=-\left[\hat{\alpha}(z)-\alpha_{0}(z)\right]\left[\hat{\pi}(z;\hat{\gamma})-\pi_0(z;\gamma)\right] .
\end{aligned}
$$
Then by the Cauchy-Schwartz inequality,
and Assumptions \ref{assmp-2} i) and ii)

\begin{equation}
\label{2i}
    \begin{split}
E\left[\hat{\Delta}(w)\right] &=\int-\left[\hat{\alpha}(z)-\alpha_{0}(z)\right]\left[(\hat{\pi}(z;\hat{\gamma})-\pi(z;\gamma))\right] \mathcal{P}_{0}(d z) \\
& \leq\left\|\hat{\alpha}-\alpha_{0}\right\|\left\|(\hat{\pi}(z;\hat{\gamma})-\pi(z;\gamma))\right\| =o_{p}\left(\frac{1}{\sqrt{n}}\right)        
    \end{split}
\end{equation}

Also since $\hat{\alpha}(z)$ and $\alpha(z)$ is bounded, we have
\begin{equation}
\label{2ii}
    \begin{split}
\int\left\|\hat{\Delta}(w)\right\|^{2} \mathcal{P}_{0}(d w) &=\int\left[\hat{\alpha}(z)-\alpha_{0}(z)\right]^{2}\left[(\hat{\pi}(z;\hat{\gamma})-\pi_0(z))\right]^{2} \mathcal{P}_{0}(d z) \\
& \leq C \int\left[(\hat{\pi}-\pi_0(z))\right]^{2} \mathcal{P}_{0}(d z) \stackrel{p}{\longrightarrow} 0  
    \end{split}
\end{equation}
Thus Equation \ref{2i} and Equation \ref{2ii} verify Assumption 2 i) of CEINR. 

Next Assumption 3 of CEINR is satisfied through Assumption \ref{assmp-3}. Thus Assumptions 1-3 of CEINR are satisfied. Thus asymptotic normality follows by Lemma 15 of CEINR and the Lindberg-Levy central limit theorem.

Finally, we know $\theta \stackrel{p}{\longrightarrow} \theta_0$. And thus we have, $\int\left\|g\left(w, \hat{\pi}(z;\hat{\gamma}), \tilde{\theta}\right)-g\left(w, \hat{\pi}(z;\hat{\gamma}), \theta_{0}\right)\right\|^{2} \mathcal{P}_{0}(d w) =  \stackrel{p}{\longrightarrow} 0$

To get the second conclusion, we need to show that $\hat{V}$ is a consistent estimator of $V$. To show this, we closely follow \cite{chernozhukov2021automatic}. Let $\psi_i=\psi_0\left(w_i\right)$ and consider
$$
\hat{V}=\frac{1}{n} \sum_{i=1}^n \hat{\psi}_i^2=\frac{1}{n} \sum_{i=1}^n\left(\hat{\psi}_i-\psi_i\right)^2+\frac{2}{n} \sum_{i=1}^n\left(\hat{\psi}_i-\psi_i\right) \psi_i+\frac{1}{n} \sum_{i=1}^n \psi_i^2
$$
hence, by re-arranging the terms and Cauchy-Schwarz inequality,
$$
\hat{V}-V=\frac{1}{n} \sum_{i=1}^n\left(\hat{\psi}_i-\psi_i\right)^2+\frac{2}{n} \sum_{i=1}^n\left(\hat{\psi}_i-\psi_i\right) \psi_i \leq \frac{1}{n} \sum_{i=1}^n\left(\hat{\psi}_i-\psi_i\right)^2+2 \sqrt{\frac{1}{n} \sum_{i=1}^n\left(\hat{\psi}_i-\psi_i\right)^2} \sqrt{\frac{1}{n} \sum_{i=1}^n \psi_i^2} .
$$
Using the triangle inequality, for $i \in I_{\ell}$,
$$
\left(\hat{\psi}_i-\psi_i\right)^2 \leq C \sum_{j=1}^4 R_{i j}=C \sum_{j=1}^3 R_{i j}+o_p(1)
$$

where
$$
\begin{aligned}
& R_{i 1}=\left[m\left(w_i, \hat{\pi}_{\ell}(z_i;\hat{\gamma}_{\ell})\right)-m\left(w_i, \pi_0(z_i;\gamma_{0})\right)\right]^2, \\
& R_{i 2}=\hat{\alpha}_{\ell}^2\left(z_i\right)\left[\hat{\pi}_{\ell}(z_i;\hat{\gamma}_{\ell})-\pi_0(z_i;\gamma_{0})\right]^2, \\
& R_{i 3}=\left[\hat{\alpha}_{\ell}\left(z_i\right)-\alpha_0\left(z_i\right)\right]^2\left[y_i-\pi_0\left(z_i; \gamma_0\right)\right]^2, \\
& R_{i 4}=\left(\hat{\theta}-\theta_0\right)^2 .
\end{aligned}
$$
We already showed consistency, so $R_{i 4} \stackrel{p}{\rightarrow} 0$. 

Let $I_{-\ell}$ denote observations not in $I_{\ell}$.
By Markov's inequality, for some $\delta>0$,
$$
\mathbb{P}\left(\frac{1}{n} \sum_{i=1}^n\left(\hat{\psi}_i-\psi_i\right)^2>\delta\right) \leq \frac{\mathbb{E}\left[\frac{1}{n} \sum_{i=1}^n\left(\hat{\psi}_i-\psi_i\right)^2\right]}{\delta}
$$
Note that the cross-fitting allows us to write
$$
\mathbb{E}\left[\frac{1}{n} \sum_{i=1}^n\left(\hat{\psi}_i-\psi_i\right)^2\right] \leq \mathbb{E}\left[\frac{C}{n} \sum_{\ell=1}^L \sum_{i \in I_{\ell}} \sum_{j=1}^3 R_{i j}\right]+o_p(1)=C \sum_{\ell=1}^L \frac{n_{\ell}}{n} \sum_{j=1}^3 \mathbb{E}\left[\mathbb{E}\left[R_{i j} \mid I_{-\ell}\right]\right]+o_p(1) .
$$

We already showed,
$$
\mathrm{E}\left[R_{i 1} \mid I_{-\ell}\right]=\int\left[m\left(w_i, \hat{\gamma}_{\ell}\right)-m\left(w_i, \gamma_0\right)\right]^2 F_0(d w)=o_p(1)
$$
Next by triangle inequality, we have 

\begin{align*} 
\mathrm{E}\left[R_{i 2} \mid \mathcal{W}_{-l}\right] ={} & \int\hat{\alpha}_{l}^2\left[\hat{\pi}_{l}(z_i;\hat{\gamma}_{l})-\pi_0(z_i;\gamma_{0})\right]^2 F_0(d z) \\
={}& \int\left[\hat{\alpha}_{l} + \alpha_0 - \alpha_0\right]^2\left[\hat{\pi}_{l}(z_i;\hat{\gamma}_{l})-\pi_0(z_i;\gamma_{0})\right]^2 F_0(d z) \\
\leq{} & \int\left[\hat{\alpha}_{l} - \alpha_0\right]^2\left[\hat{\pi}_{l}(z_i;\hat{\gamma}_{l})-\pi_0(z_i;\gamma_{0})\right]^2 F_0(d z) \\  
& + \int\left[ \alpha_0\right]^2\left[\hat{\pi}_{l}(z_i;\hat{\gamma}_{l})-\pi_0(z_i;\gamma_{0})\right]^2 F_0(d z) \\ 
\leq {} & O_p(1) \int\left[\hat{\pi}_{\ell}(z_i;\hat{\gamma}_{\ell})-\pi_0(z_i;\gamma_{0})\right]^2 F_0(d z) \stackrel{p}{\rightarrow} 0
\end{align*}

Finally, we have
$$
\begin{aligned}
\mathbb{E}\left[R_{i 3} \mid I_{-\ell}\right] & =\mathbb{E}\left[\mathbb{E}\left[\left[\hat{\alpha}_{\ell}\left(z_i\right)-\alpha_0\left(z_i\right)\right]^2\left[y_i-\pi_0\left(z_i;\gamma_0\right)\right]^2 \mid z_i, I_{-\ell}\right] \mid I_{-\ell}\right] \\
& =\mathbb{E}\left[\left[\hat{\alpha}_{\ell}\left(Z_i\right)-\alpha_0\left(Z_i\right)\right]^2 \mathbb{E}\left[\left[y_i-\pi_0\left(z_i;\gamma_0\right)\right]^2 \mid Z_i\right] \mid I_{-\ell}\right] \\
& \leq C\left\|\hat{\alpha}_{\ell}-\alpha_0\right\|^2 \stackrel{p}{\rightarrow} 0 .
\end{aligned}
$$
As a result,
$$
\frac{1}{n} \sum_{i=1}^n\left(\hat{\psi}_i-\psi_i\right)^2 \stackrel{p}{\rightarrow} 0
$$

Thus, we have $\hat{V} \stackrel{p}{\longrightarrow} V$

Also by Assumption 9 and iterated expectations
$$
\begin{aligned}
\mathrm{E}\left[R_{i 3} \mid \mathcal{W}_{-\ell}\right] & \leq \int\left\{\hat{\alpha}_{\ell}(z)-\bar{\alpha}(x)\right\}^2 \mathrm{E}\left[(y-\bar{\pi}(z))^2 \mid Z=z\right] F_Z(d z) \\
& \leq C \int\left\{\hat{\alpha}_{\ell}(z)-\bar{\alpha}(z)\right\}^2 F_z(d z)=C\left\|\hat{\alpha}_{\ell}-\bar{\alpha}\right\|^2=o_p(1) .
\end{aligned}
$$

\end{proof}

\section{Hyperparameter Space for Tuning Non-parametric Estimator Benchmark}\label{appendix_hyper}

\begin{table}[H]
    \centering
    \begin{tabular}{ll}
    \hline
    Hyperparameter & Space \\
    \hline
        Number of hidden layers & [3, 4, 5] \\
        Number of nodes in each layer &  [64, 128, 256] \\
        Learning rate & [1e-2, 1e-3, 1e-4] \\
        Number of epochs &  [1, 2, 4]\\
        \hline
    \end{tabular}
    \caption{Hyperparameter Space for Tuning Non-parametric Estimator Benchmark}
    \label{tab:my_label}
\end{table}

\section{Distribution of Features and Coefficients in Numerical Experiments}\label{sec:appendix_dist}
\begin{table}[h]
    \centering
    \begin{tabular}{lcc}
    \hline
        & MNL & RCL \\
    \hline
         $p_{jt}$ &   $U[0,4]$  & $U[0,4]$     \\
         $X_{jt}$ &   $N(0,1)$  & $N(0,1)$  \\
    \hline
    \end{tabular}
    \caption{Distribution of Features}
    \label{tab:feature}
\end{table}

\begin{table}[h]
        \centering
        \begin{tabular}{lcc}
        \hline
            & MNL & RCL \\
        \hline
             $\alpha_i$ &   -1  & $N(-1,1)$     \\
             $\beta_{ik}$ &   1  & $N(\mu_{\beta_k},1)$   \\
        \hline
        \end{tabular}
        \par
\smallskip % Add a small vertical space
\footnotesize % Use a smaller font size for the notes
\noindent
    Notes : $  \mu_{\beta_k} \sim N(0,1/2d)$ 
        \caption{Distribution of Coefficients}
        \label{tab:coef}
    \end{table}

\section{An Numerical Experiment of Endogeneity} \label{sec:appendix_endo}

In $\S$\ref{ssec:endo}, we show how our model handles endogeneity in theory. Here, we provide a simulation that demonstrates the performance of our model handling endogeneity. 

Let the utility $u_{ijt}$ that consumer $i$ in market $t$ derives from product $j$ as the following linear function
\begin{equation} \label{eq:dgp_endo1}
    u_{ijt} = \alpha_i p_{jt} + \beta_i X_{jt} + \gamma_i \mu_{jt}+  \varepsilon_{ijt} ,
\end{equation}
where $\mu_{jt}$ is the unobservable that correlates $p_{jt}$ and $\varepsilon_{ijt}$ is i.i.d. Type-I extreme value distributed.  Specifically, without loss of generality, we specify the correlation between $p_{jt}$ and $\mu_{jt}$ as 
\begin{equation}\label{eq:dgp_endo2}
    p_{jt} = \delta_1 IV_{jt} +  \delta_2 \mu_{jt} ,
\end{equation}
where $IV_{jt}$ is the exogenous instrumental variable.

In this simulation, we first generate $X_{jt}$ and $\mu_{jt}$, following the same distribution as the $X_{jt}$ in our baseline model, and $IV_{jt}$, following the same distribution as our as $p_{jt}$ in our baseline model, as discussed in Web Appendix \ref{sec:appendix_dist}. We next generate $p_{jt}$ using Equation \ref{eq:dgp_endo2}. Finally, we generate market shares assuming that the true model is RCL. For simplicity, we let $\delta_1 = \delta_2 = 1$. In addition, $\alpha_i$ and $\beta_i$ follow the same distribution as the baseline simulations and $\gamma_i$ follows the same distribution as $\beta_i$, as stated in Web Appendix \ref{sec:appendix_dist}. We consider a case with 10 products, 100 markets, and 10 non-price features (in addition to price), the same as our baseline simulation in $\S$\ref{sssec:mnl_rcl}. In the estimation, we let $\mu_{jt}$ be an unobservable. We use the OLS linear regression as our first-stage regressor. 

We consider two cases for comparison: 
\begin{itemize}
    \item \textbf{Exogeneous Benchmark}: Uses the exact same DGP but treat  $\mu_{jt}$ as observables to researchers in estimation. Since the coefficients of price and market shares are the same as the endogeneity case, true elasticities are the same for these two DGPs. This gives a benchmark performance with the same data under the assumption that endogeneity is not present.
    \item \textbf{Ignoring Endogeneity}: Directly trains the model using only the observed features  $X_{jt}$ and $p_{jt}$ without considering the endogeneity problem. Note that, even though the endogeneity problem is ignored, this does not mean the predictive performance in market shares would be low for this case because the model could be overfitted. However, the elasticities will be biased when the endogeneity is ignored. 
\end{itemize}

Following the routine in our main text, we simulate 20 times for the same DGP with different parameters and features and report the predictive performance on market share  ($\hat{\pi}_{jt}$), own price elasticity ($\frac{\partial \hat{\pi}_{jt} / \pi_{jt}  }{\partial p_{jt}/p_{jt}}$), and cross-price elasticity ($\frac{\partial \hat{\pi}_{jt} / \pi_{jt}  }{\partial p_{k\neq j t}/  p_{k\neq j t} }$). We report the performance of our model in Table \ref{tbl:endo}. When we use the control function to correct for endogeneity (Row 1: Our Method), our model underperforms slightly compared to the case where $\mu_{jt}$ is treated as an observable (Row 2: Exogenous Benchmark), on all three metrics. However, when we ignore the endogeneity issue and apply our model (Row 3: Ignore Endogneity), although the predictive performance of market shares is not bad, the estimation of own- and cross-elasticities are significantly biased, with the MAEs being 10--25 times higher than the cases where we account for endogeneity. This demonstrates both the importance of accounting for endogeneity as well as the effectiveness of our method in handling this problem in real settings.

\begin{table}[htbp!]
\centering
\begin{tabular}{llllllll}
\hline\hline
                     & \multicolumn{2}{c}{Market shares} & \multicolumn{2}{c}{Own-elasticity} & \multicolumn{2}{c}{Cross-elasticity} \\ 
                     & MAE            & RMSE             & MAE            & RMSE              & MAE             & RMSE               \\ \hline
Our Method         & 0.0256         & 0.0194           & 0.2307         & 0.3516            & 0.0669          & 0.1033             \\
Exogeneous Benchmark          & 0.0254         & 0.0195           & 0.2290         & 0.3469            & 0.0657          & 0.1039             \\
Ignore Endogeneity  & 0.0263         & 0.0202           & 2.3832         & 5.1901            & 1.9751          & 4.4306             \\ \hline\hline
\end{tabular}
\caption{Model Performance of Endogeneity Case}
\label{tbl:endo}
\end{table}

\section{Details in Adopting the ``MLIV” Method}\label{appendix_mliv}
Following \citet{singh2020machine}, we perform the steps below to construct the machine-learning-based IV (MLIV) and use them to estimate $\hat{\gamma}$ to control for endogeneity in prices.

\begin{itemize}
    \item \textbf{Step 1: Data Partition} We randomly split the data set, $\mathcal{D}$, into three separate partitions of markets, each denoted as $D_l$. Each market is exclusively assigned to only one partition. For each partition, we define its complement set, $D_l^c$, as the subset of data in $\mathcal{D}$ that is not included in $D_l$.
    \item \textbf{Step 2: Cross-fitting} For each partition $l$,  we first estimate a linear regression model on the complement data set, $D_l^c$, using the Lasso method with hyperparameters tuned by 3-fold cross-validation. As discussed in section \ref{ssec:inference}, we need the estimator of $\gamma$ to converge at $n^{-1/2}$ rate, a similar result that bounds the prediction error of the lasso estimator has been established in \cite{chatterjee2015prediction}. Then, we use this trained model to predict the outcomes (prices) of the $D_l$. We denote the fitted value as $\hat{f}_l$, which is essentially the MLIV. 
    \item \textbf{Step 3: First-stage Regression} We estimate the first-stage estimator $\gamma$ and residual $\mu$ using the MLIV as the only predictor following step 1 in Section \ref{sec:estimation}. 
\end{itemize}

As a supplement to our main result, we also run our model using non-machine learning-based IVs. Similar to Figure 4 in the main text, we present the estimated own-elasticity of our model without IV, with BLP-style IVs, with differentiation IVs, and with MLIV in Figure \ref{fig:IV_diff_apd}. In Figure \ref{fig:rawIV_apd}, even when IVs are applied, the persistence of many positive own-elasticities suggests the weakness of the BLP style IVs. Furthermore, we apply the differentiation IVs \parencite{gandhi2019measuring}, which use exogenous measures of differentiation and provide a more robust instrument compared to the conventional BLP IVs. As one can see from  Figure \ref{fig:diffIV_apd}, the use of differentiation IV provides a more realistic estimation of own-elasticities, strengthening the issue of weak instruments of the BLP style IVs. We also include the distributions of the estimated own- and cross-elasticities obtained from our model using different sets of IVs in Figure \ref{fig:blp_elasticity_apd}. 

\begin{figure}[h]
\centering
\begin{subfigure}{0.45\textwidth} % Adjusted width to fill half the page width
\includegraphics[width=\textwidth]{figures/elas_noIV.png}
\caption{Without IV}
\label{fig:noIV_apd}
\end{subfigure}%
\hfill % This ensures the subfigures spread out to fill the width
\begin{subfigure}{0.45\textwidth} % Adjusted width to fill half the page width
\includegraphics[width=\textwidth]{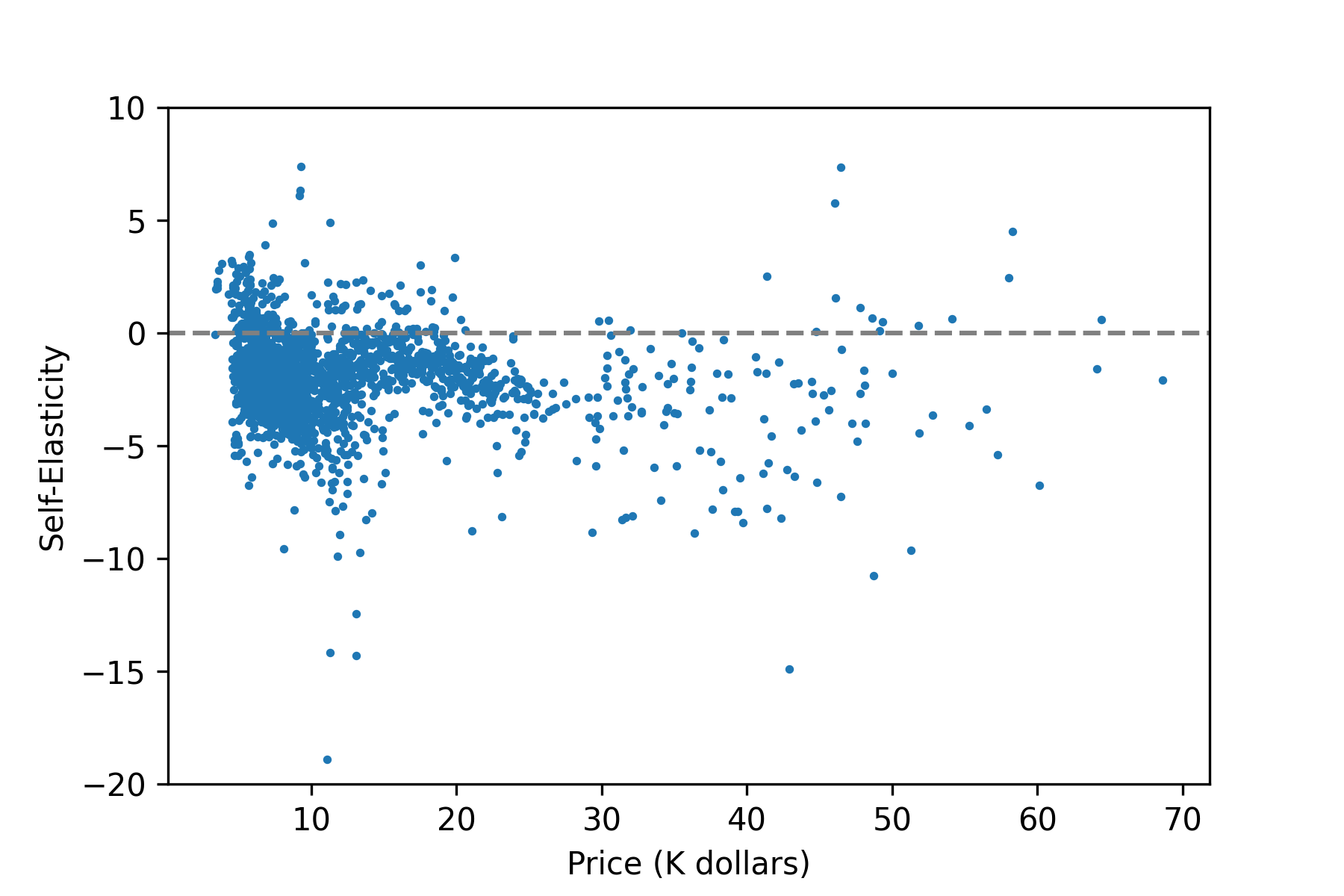}
\caption{BLP Style IVs}
\label{fig:rawIV_apd}
\end{subfigure}
\\ % This moves to the next line
\begin{subfigure}{0.45\textwidth} % Adjusted width to fill half the page width
\includegraphics[width=\textwidth]{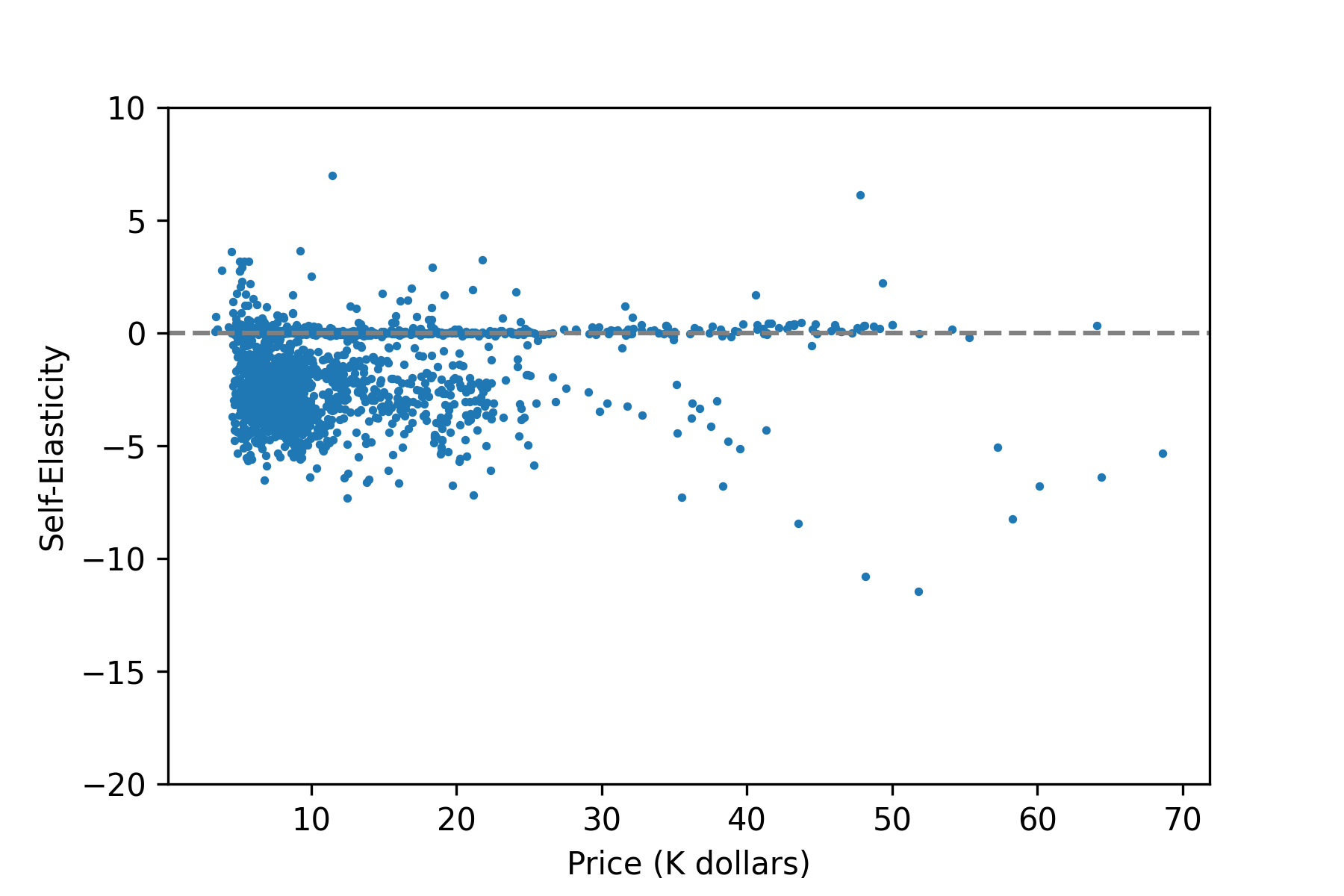}
\caption{Differentiation IVs}
\label{fig:diffIV_apd}
\end{subfigure}%
\hfill % This ensures the subfigures spread out to fill the width
\begin{subfigure}{0.45\textwidth} % Adjusted width to fill half the page width
\includegraphics[width=\textwidth]{figures/elas_IV.png}
\caption{MLIV}
\label{fig:withIV_apd}
\end{subfigure}
\caption{Elasticity Estimation Comparison}
\label{fig:IV_diff_apd}

\footnotesize{Figure \ref{fig:IV_diff_apd} presents the estimated own-elasticity of our model without IV, with BLP Style IVs, with differentiation IVs and with MLIV. The x-axis represents the price of the focal product, while the y-axis shows the product's own-elasticity. Each point corresponds to a product in a market, resulting in 2,217 observations. We report the estimated elasticity based on the same price variation used in the BLP paper (a 1,000-dollar change). }
\end{figure}

\begin{figure}[htbp]
\centering
\begin{subfigure}{\textwidth}
    \centering
    \includegraphics[width=0.7\textwidth]{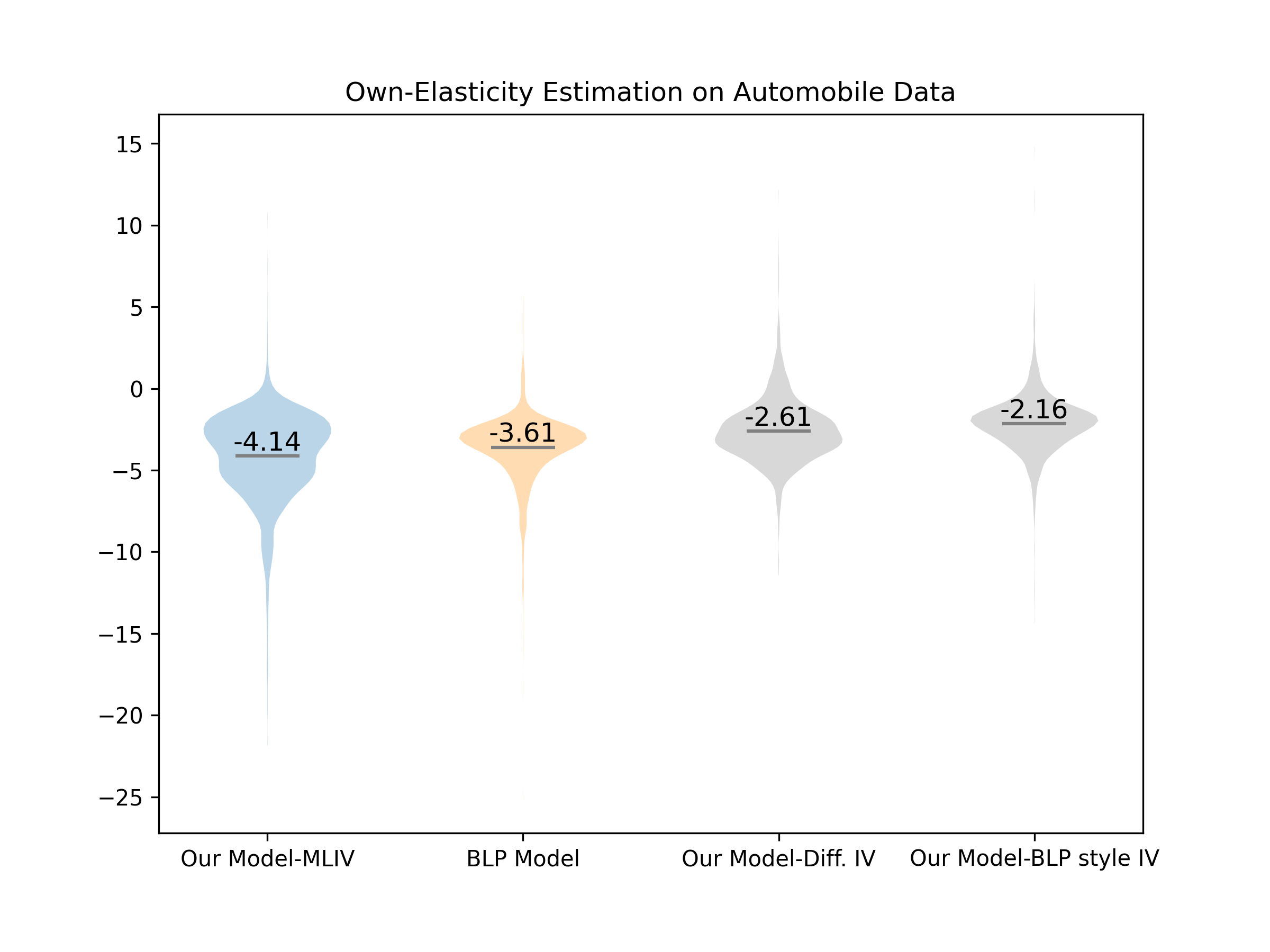}
    \caption{Own-Elasticity Estimation (Our Model vs. BLP Model)}
    \label{fig:blp_own_ela_apds}
\end{subfigure}
\\ % This line creates a new line for the next subfigure
\begin{subfigure}{\textwidth}
    \centering
    \includegraphics[width=0.7\textwidth]{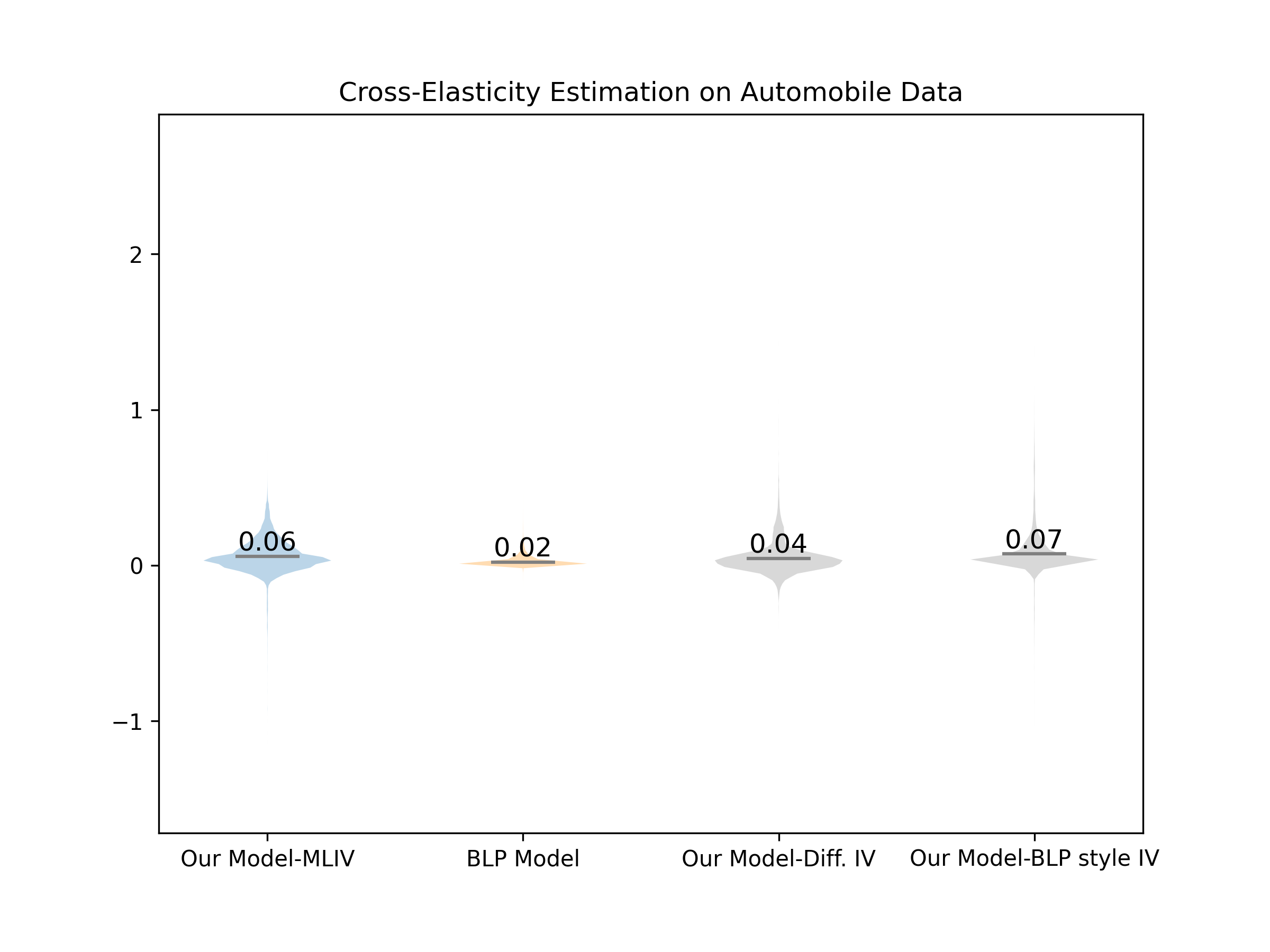}
    \caption{Cross-Elasticity Estimation (Our Model vs. BLP Model)}
    \label{fig:blp_cross_elas_apd}
\end{subfigure}
\caption{Elasticity Estimation Comparison}
\label{fig:blp_elasticity_apd}

\footnotesize{Note: Figure \ref{fig:blp_elasticity_apd} illustrates the distributions of the estimated own- and cross-elasticities obtained from our model (using different sets of IVs) and the BLP model. The filled areas in the violin plots represent the complete range of the elasticities, while the text labels indicate the mean values. }

\end{figure}

In addition, we also perform a weak instrument test on both BLP Style IVs and the MLIV and report the F-statitics and p-value in Table \ref{tab:iv_test}. Both BLP Style IVs and MLIV pass the weak instrument tests. 

\begin{table}[H]
    \centering
    \begin{tabular}{lcc}
    \hline\hline
         &  F-statistic & P-value \\
    \hline
     BLP Style IVs    & 241.5 & $<$ 1e-8  \\
    MLIV    & 280.9 & $<$1e-8  \\
    \hline\hline
    \end{tabular}   
    \caption{Weak Instrument Test}
    \label{tab:iv_test}
\end{table}
\begin{comment}
    We also perform the Wu-Hausman test on both BLP Style IVs and MLIV. We run a standard 2SLS linear regression with $ln(\frac{s_{jt}}{s_{0t}})$ as dependent variables for two types of IVs, where $s_{0t}$ denotes the market share of the outside option ($1 - \sum_j s_{jt}$) of market $t$. We report the F-statistics and p-value in Table \ref{tab:hausman}. Both set of IVs pass the Wu-Hausman tests, demonstrating the existence of endogeneity. 

\begin{table}[H]
    \centering
    \begin{tabular}{lcc}
    \hline\hline
         &  F-statistic & P-value \\
    \hline
     BLP Style IVs    & 67.0 & $<$ 1e-8  \\
    MLIV    & 33.7 & $<$1e-8  \\
    \hline\hline
    \end{tabular}   
    \caption{Wu-Hausman Test}
    \label{tab:hausman}
\end{table}
\end{comment}

\end{appendices}

\end{document}